\newcommand{\cut}[1]{}
\newenvironment{packed_item}{
\begin{itemize}
   \setlength{\itemsep}{1pt}
   \setlength{\parskip}{0pt}
   \setlength{\parsep}{0pt}
}
{\end{itemize}}
\newenvironment{packed_enum}{
\begin{enumerate}
   \setlength{\itemsep}{1pt}
  \setlength{\parskip}{0pt}
   \setlength{\parsep}{0pt}
}
{\end{enumerate}}
\newenvironment{packed_grep}{
\begin{description}
   \setlength{\itemsep}{1pt}
   \setlength{\parskip}{0pt}
   \setlength{\parsep}{0pt}
}
{\end{description}}
\newcommand{\ie}{{\em i.e.}\xspace}
\newcommand{\eg}{{\em e.g.}\xspace}
\newcommand{\introparagraph}[1]{\textbf{#1.}}        
\newcommand{\set}[1]{\{#1\}}                    
\newcommand{\setof}[2]{\{{#1}\mid{#2}\}}        
\newcommand{\mM}[0]{\mathcal{M}}   
\newcommand{\mL}[0]{\mathcal{L}}   
\newcommand{\mR}[0]{\mathcal{R}}   
\newcommand{\mF}[0]{\mathcal{F}}   
\newcommand{\mW}[0]{\mathcal{W}}   
\newcommand{\bB}[0]{\mathbb{B}}  
\newcommand{\cqa}[1]{\textsc{Certainty}(#1)}  
\newcommand{\orset}[1]{\langle #1 \rangle}         
\newcommand{\mba}[0]{\mathbf{a}}   
\newcommand{\mbb}[0]{\mathbf{b}}   
\newcommand{\Eq}[0]{E^i /{\sim}}
\newtheorem{theorem}{Theorem}[section]          	
\newaliascnt{lemma}{theorem}				
\newtheorem{lemma}[lemma]{Lemma}              	
\newaliascnt{conjecture}{theorem}			
\newtheorem{conjecture}[conjecture]{Conjecture}    
\newaliascnt{remark}{theorem}				
\newaliascnt{corollary}{theorem}			
\newtheorem{corollary}[corollary]{Corollary}      
\newaliascnt{definition}{theorem}			
\newtheorem{definition}[definition]{Definition}    
\newaliascnt{proposition}{theorem}			
\newtheorem{proposition}[proposition]{Proposition}  
\newaliascnt{example}{theorem}			
\newtheorem{example}[example]{Example}  	
\newcommand{\ux}[0]{\underline{x}}   
\newcommand{\uy}[0]{\underline{y}}   
\newcommand{\uz}[0]{\underline{z}}
\begin{document}

\title{A Dichotomy on the Complexity of Consistent Query Answering for Atoms with Simple Keys}

\author{Paraschos Koutris and Dan Suciu\\
\{pkoutris,suciu\}@cs.washington.edu\\
University of Washington}

\maketitle

\begin{abstract}
We study the problem of consistent query answering under primary key violations. In this setting, the relations in a database violate the key constraints and we are interested in maximal subsets of the database that satisfy the constraints, which we call repairs. For a boolean query $Q$, the problem CERTAINTY($Q$) asks whether every such repair satisfies the query or not; the problem is known to be always in coNP for conjunctive queries. However, there are queries for which it can be solved in polynomial time. It has been conjectured that there exists a dichotomy on the complexity of CERTAINTY($Q$) for conjunctive queries: it is either in PTIME or coNP-complete. In this paper, we prove that the conjecture is indeed true for the case of conjunctive queries without self-joins, where each atom has as a key either a single attribute (simple key) or all attributes of the atom. 
\end{abstract}

\maketitle

\section{Introduction}
\label{sec:introduction}

Uncertainty in databases arises in several applications and domains (\eg data integration, data exchange). An {\em uncertain} (or {\em inconsistent}) database is one that violates the integrity constraints of the database schema. In this work, we examine uncertainty under the framework of {\em consistent query answering}, established in~\cite{ABC99}.

In this framework, the presence of uncertainty generates many possible worlds, referred usually as {\em repairs}. For an inconsistent database $I$, a repair is a subset of $I$ that minimally differs from $I$ and also satisfies the integrity constraints. For a given query $Q$ on database $I$, the set of {\em certain answers} contains all the answers that occur in every $Q(r)$, where $r$ is a repair of $I$. The main research problem here is when the certain answers can be computed efficiently.  

In this paper, we will restrict the problem such that the integrity constraints are only {\em key constraints}, and moreover, the queries are {\em boolean conjunctive queries}. In this case, a repair $r$ of an inconsistent database $I$ selects from each relation a maximal number of tuples such that no two tuples are key-equal. We further say that a boolean conjunctive query $Q$ is {\em certain} if it evaluates to true for every such repair $r$. The decision problem \cqa{$Q$} is now defined as follows: given an inconsistent database $I$, does $Q(r)$ evaluate to true for every repair $r$ of $I$?

For this setting, it is known that \cqa{$Q$} is always in coNP~\cite{CM05}. However, depending on the key constraints and the structure of the query $Q$,  the complexity of the problem may vary. For example, for the query $Q_1 = R(\ux, y), S(\uy, z)$, \cqa{$Q_1$} is not only in P but, since one can show that \cqa{$Q_1$} can be expressed as a first-order query over $I$ \cite{FM07}, it is in $AC^0$. On the other hand, for $Q_2 = R(\ux, y), S(\uz, y)$, it has been proved in~\cite{FM07} that \cqa{$Q_2$} is coNP-complete. Finally, for $Q_3 = R(\ux, y), S(\uy, x)$, one can show~\cite{Wijsen10} that consistent query answering is in P, but the problem does not admit a first-order rewriting.

From the above examples, one can see that the complexity landscape is fairly intricate, even for the class of conjunctive queries. Although there has been progress in understanding the complexity for several classes of queries, the problem of deciding the complexity of \cqa{$Q$} remains open. In fact, a long-standing conjecture claims the following dichotomy.

\begin{conjecture}
\label{conj:dichotomy}
Given a boolean conjunctive query $Q$, \cqa{$Q$} is either in PTIME or is coNP-complete.
\end{conjecture}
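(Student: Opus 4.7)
My plan is to attack the conjecture in its full generality by isolating a single structural property of a boolean conjunctive query $Q$ that is simultaneously sufficient for tractability and necessary for a generic reduction to establish coNP-hardness. I would first reformulate the key constraints on the atoms of $Q$ as a set of functional dependencies, one per atom, carried along by the query's join variables; this reformulation uniformly handles simple keys, composite keys, and, crucially, self-joins, because two occurrences of the same relation symbol contribute the same FD applied to two (possibly distinct) tuples of variables. The candidate invariant generalizes the attack-graph condition used for the self-join-free, simple-key case treated in this paper: draw an edge from atom $F$ to atom $G$ whenever some variable of $G$ is not determined by the key of $F$ under the FDs saturated over the variables external to $F$. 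The conjectured dichotomy then reads: \cqa{$Q$} is in PTIME iff this generalized attack structure is acyclic in an appropriate sense, and coNP-complete otherwise.

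For the tractable direction, I would proceed by induction on the number of atoms of $Q$, using an acyclic attack structure to select a ``safe'' atom $F$ whose repair choices can be resolved by a subroutine on a strictly smaller query. When $Q$ is FO-rewritable the subroutine produces a first-order formula, extending the rewriting of Fuxman--Miller~\cite{FM07}; when it is not --- as in the example $Q_3 = R(\ux,y), S(\uy,x)$ of~\cite{Wijsen10} --- I would appeal to a fixed-point computation whose levels mirror the topological order of the attack structure. Handling self-joins in this step requires care: the inductive ``smaller query'' cannot simply drop an atom, because another copy of the same relation symbol may remain, so I would reason in terms of an ordering on \emph{variables} induced by the attack structure and eliminate them one at a time via a consensus over the repair choices across all copies of a given relation.

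For coNP-hardness whenever the attack structure contains a cycle, I would build a generic reduction from a known coNP-hard problem such as the complement of monotone $3$-SAT, using the cycle to encode a variable assignment as a choice among key-equal tuples and each clause as an instantiation of the cycle of atoms. The main obstacle I expect is the self-join case in both directions: self-joins force a single database relation to play multiple roles in $Q$, so the repair choices for those roles are correlated. This breaks gadgets that work in the self-join-free reduction (distinct atoms may be forced to agree on a value, collapsing a clause), and it can create spurious tractable cases in which two cycling atoms are forced to coincide in every repair so that the cycle dissolves. Resolving this will likely demand a refinement of attack-cycle acyclicity --- an invariant that also accounts for when two atoms of $Q$ can or cannot be ``separated'' by the database --- and it is precisely here that the conjecture remains open beyond the fragment treated in this paper.
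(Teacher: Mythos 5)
There is a genuine gap, and it starts with the status of the statement itself: this is \autoref{conj:dichotomy}, which the paper explicitly presents as a long-standing \emph{open} conjecture. The paper does not prove it; it proves only the special case of self-join-free queries whose atoms have a simple key or an all-attribute key (\autoref{th:dichotomy} and \autoref{cor:dichotomy}). So there is no proof in the paper to match yours against, and any submission claiming to settle the full conjecture would need a complete argument, which your text does not supply. What you have written is a research program: every load-bearing step is phrased as ``I would,'' the central invariant is left as ``acyclic in an appropriate sense,'' and you close by conceding that the self-join case --- the very case that distinguishes the conjecture from the fragment this paper resolves --- ``remains open.'' That concession is honest, but it means the proposal does not prove the statement.

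Beyond incompleteness, the candidate dichotomy criterion is in tension with results the paper itself establishes. Taken at face value, ``\cqa{$Q$} is in PTIME iff the generalized attack structure is acyclic'' is falsified by $Q_3 = R(\ux, y), S(\uy, x)$ and more generally by the cycle queries $C_k$: their attack/dependency structure is cyclic in every natural sense (each atom's non-key variable is the other atom's key), yet \autoref{th:scc} shows all strongly connected queries are in PTIME. You acknowledge $Q_3$ and gesture at a fixed-point computation, but that acknowledgment contradicts the acyclicity criterion you state one sentence earlier, and the unspecified refinement that would reconcile them is exactly the hard content. Note also that the paper's actual boundary for its fragment is not attack-graph acyclicity but the splittable/unsplittable condition (coupled, non-source-equivalent inconsistent edges in $G[Q]$), and its PTIME algorithm is not an induction on atoms with a ``safe atom'' but a recursive decomposition via separators together with or-set representations of frugal repairs --- so even restricted to the fragment the paper handles, your sketch does not recover a working argument. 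To make progress you would need, at minimum, a precise definition of the invariant, a proof that it is decidable, a tractability algorithm that survives self-joins (where repair choices for distinct atoms over the same relation are correlated), and a hardness gadget that cannot be collapsed by that correlation; none of these is present.
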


The progress that has been made towards proving this conjecture has
been limited. In particular, Kolaitis and Pema~\cite{KP12} have proved a dichotomy into PTIME and coNP-complete for the case where $Q$ contains only two atoms and no self-joins (\ie every relation name appears once). Wijsen~\cite{Wijsen10b} has given a necessary and sufficient condition for first-order rewriting for acyclic conjunctive queries without self-joins, and in a recent paper~\cite{Wijsen13} further classifies several acyclic queries into PTIME and coNP-complete.

In this work, we significantly progress the status of the conjecture, by settling the dichotomy for a large class of queries: boolean conjunctive queries w/o self-joins, where each atom has as primary key either a single attribute or all the attributes. Observe that this class contains all queries where atoms have arity at most 2; in particular, it also contains all three of the queries $Q_1, Q_2, Q_3$ previously discussed.  Our results apply to a more general setting where one might have the external knowledge that some relations are {\em consistent} and others may  be {\em inconsistent}. In contrast to previous approaches, our paper introduces consistent relations since in non-acyclic queries, certain patterns in the structure of the query cause a relation to behave as a consistent relation when checking for certainty. In particular, consider a query $Q$ containing two atoms $R_1(\ux, y), R_2(\ux,y)$. If an instance contains the tuples $R_1(\underline{a},b_1), R_2(\underline{a},b_2)$ such that $b_1 \neq b_2$, we can remove the key-groups $R_1(\underline{a},-), R_2(\underline{a},-)$ without loss of generality in order to check for certainty\footnote{Indeed, if we want to find a repair $r$ that does not satisfy $Q$, we can always pick these two tuples to make sure that the value $a$ will never contribute to an answer.}. Thus, the conjunction of $R_1, R_2$ behaves as a single consistent relation $R(\ux,y)$. 

Our main result is

\begin{theorem}
\label{th:dichotomy}
For every boolean conjunctive query $Q$ w/o self-joins consisting only of binary relations where
exactly one attribute is the key, there exists a dichotomy of  \cqa{$Q$} into PTIME and coNP-complete.
\end{theorem}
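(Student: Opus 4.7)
The plan is to classify each query by a structural property of its \emph{directed query graph}, where each atom $R(\underline{x},y)$ is turned into a directed edge $x \to y$ from key-variable to non-key-variable. Then I would prove matching upper and lower bounds driven by this classification. The bulk of the argument will proceed by induction on the number of atoms of $Q$, repeatedly simplifying the query via sound rewriting rules until either an empty query or a canonical ``hard'' residual remains.

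First, I would work in the extended setting the authors introduce, allowing some atoms to be marked \emph{consistent}. The key identification step is to formalize the informal observation made around the atoms $R_1(\underline{x},y),R_2(\underline{x},y)$: whenever two atoms of $Q$ share both their key variable and their non-key variable, or more generally produce an irreducible functional dependency among the key-groups, we can pre-process the instance in polynomial time so that the two atoms collapse to a single consistent atom. I would collect all such sound simplifications (collapsing parallel edges, eliminating atoms whose key-variable occurs nowhere else, promoting atoms incident to already-consistent atoms to consistent themselves, etc.) into a confluent rewriting system $Q \leadsto Q'$, and prove by a simulation argument that \cqa{$Q$} and \cqa{$Q'$} have the same complexity up to polynomial-time reductions.

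Next I would define the dichotomy criterion: $Q$ is \emph{tractable} if and only if the rewriting system eventually eliminates every inconsistent atom (leaving a fully consistent residual, which is trivially in PTIME); otherwise $Q$ is \emph{intractable}. For the PTIME upper bound I would simulate the rewriting directly on the instance: at each step I identify, in polynomial time, those key-groups that are ``forced'' either to satisfy or to falsify some fragment of $Q$ using the consistency patterns, plug the resulting deterministic assignments back, and recurse. The example $Q_3 = R(\underline{x},y),S(\underline{y},x)$ suggests that the algorithm cannot be purely first-order: I expect to need a saturation loop that alternates local deductions with a global cycle-finding step on the residual bipartite structure, generalizing the algorithm implicit in Wijsen's analysis of $Q_3$.

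For the coNP-hard lower bound, I would show that any residual query that is not fully consistent contains, as a ``pattern,'' the canonical hard query $Q_2 = R(\underline{x},y),S(\underline{z},y)$ embedded in the directed query graph in a way that survives the rewriting; then I would lift the standard reduction from a canonical coNP-hard problem (such as monotone 3SAT or positive 1-in-3 SAT) for $Q_2$ to the residual query by padding with consistent tuples so that the join conditions outside the $Q_2$-pattern are satisfied trivially. The main obstacle, and the step that I expect to require the most care, is showing that this pattern-extraction is always possible whenever the rewriting fails to terminate: one must argue that the obstructions to collapsing atoms into consistent ones are precisely those that force a two-incoming-edges-sharing-a-head configuration somewhere in the residual graph, and that this configuration can be exposed end-to-end by an appropriate instance construction despite the possibly cyclic global structure of the query.
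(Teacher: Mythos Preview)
Your proposal has genuine gaps on both sides of the dichotomy.

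On the tractability criterion and the PTIME algorithm: your rewriting system, as described, cannot handle cycle queries. Take $C_k = R_1(\underline{x_1},x_2),R_2(\underline{x_2},x_3),\ldots,R_k(\underline{x_k},x_1)$ with every $R_i$ inconsistent. None of your proposed rules (collapsing parallel edges, eliminating atoms whose key variable is isolated, promoting atoms via consistent neighbours) fires, so the rewriting stalls with $k$ inconsistent atoms and your criterion declares $C_k$ intractable; yet $C_k$ is in PTIME. The paper's algorithm here is not a local saturation at all: it computes, for each strongly connected component of the \emph{instance} graph, whether that component contains a cycle of length $>k$, and packages the answers to the full query on all frugal repairs into a polynomial-size or-set representation. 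This SCC/or-set machinery is then the engine for a recursive decomposition of general splittable graphs via objects the paper calls \emph{separators}; nothing of this shape is captured by ``promote atoms to consistent and recurse on a smaller query.'' Your acknowledgement that $Q_3=R(\underline{x},y),S(\underline{y},x)$ already needs a non-first-order instance-level loop is exactly the symptom: the phenomenon is not a side case but the core of the PTIME side.

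On the hardness side: the obstruction is not ``two inconsistent incoming edges sharing a head.'' Consider $Q = R(\underline{x},y),S(\underline{z},y),T(\underline{z},y)$ with $R,S,T$ all inconsistent. Both $\{R,S\}$ and $\{R,T\}$ share the head $y$, yet this query is in PTIME (in the paper's terms it is splittable: $coupled^+(S)=\{S,T\}$, so $R\notin coupled^+(S)$ and $R,S$ are not coupled). Conversely, the pairs that witness hardness need not share a head at all; the paper's coupling condition is phrased via undirected paths that avoid the reachability sets $u_R^{+,R}$ and $u_S^{+,S}$, which is a global condition on $G[Q]$. Accordingly the hardness reduction is not a $Q_2$-pattern extraction with consistent padding: the paper constructs a lattice-valued labeling of \emph{all} vertices of $G[Q]$ from a coupled, non-source-equivalent pair $(R,S)$ and uses the labels to populate every relation so as to encode \textsc{Monotone-3SAT}. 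Your plan to ``lift the standard reduction for $Q_2$'' would fail precisely on queries like the three-edge example above, and would also miss hard queries where the two witnessing edges are far apart in the graph.
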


From here we derive:

\begin{corollary} \label{cor:dichotomy}
For every boolean conjunctive query $Q$ with relations of arbitrary arity, where either exactly one attribute is a key, or the key consists of all attributes, there exists a dichotomy of \cqa{$Q$} into PTIME and coNP-complete.
\end{corollary}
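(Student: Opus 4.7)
The plan is to reduce the general-arity setting of \autoref{cor:dichotomy} to the binary setting of \autoref{th:dichotomy}, leveraging the fact (hinted at in \autoref{sec:introduction}) that the framework underlying the main theorem already admits both inconsistent and consistent binary relations.

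First I would give a syntactic decomposition of $Q$ into a binary-only query $Q'$, paired with a polynomial-time database transformation $I \mapsto I'$. For each atom $R(\underline{x}, y_1, \ldots, y_{n-1})$ whose key is a single attribute, I introduce a fresh existential variable $z_R$ and rewrite the atom as
\[
R_0(\underline{x}, z_R) \wedge R_1(\underline{z_R}, y_1) \wedge \cdots \wedge R_{n-1}(\underline{z_R}, y_{n-1}),
\]
where $R_0$ is an inconsistent binary relation and each $R_i$ with $i \geq 1$ is declared consistent. For each atom $R(\underline{x_1, \ldots, x_n})$ whose key is all attributes (so $R$ is intrinsically consistent), I rewrite it as $R_1(\underline{z_R}, x_1) \wedge \cdots \wedge R_n(\underline{z_R}, x_n)$ with every $R_i$ consistent. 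The accompanying database transformation assigns a fresh identifier $t$ to each tuple of $R$ in $I$ and routes its attributes into the new binary relations; the freshness of the identifiers guarantees that the relations declared consistent above are actually consistent in $I'$. Because $Q'$ uses only simple-key binary atoms, \autoref{th:dichotomy} dichotomizes \cqa{$Q'$}.

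Next I would verify that $\cqa{$Q$}(I) = \cqa{$Q'$}(I')$ by exhibiting a natural bijection between repairs of $I$ and repairs of $I'$: each repair of $I$ selects, for every key value $a$, an original tuple $(a, b_1, \ldots, b_{n-1})$ with identifier $t^{(a)}$, and this corresponds to the $I'$-repair that retains $R_0(\underline{a}, t^{(a)})$ together with all of $R_1, \ldots, R_{n-1}$. A match of $Q$ in the former lifts to a match of $Q'$ in the latter by assigning $z_R = t^{(a)}$, and conversely a match of $Q'$ descends to a match of $Q$ because each $R_i$ is consistent, so $R_i(\underline{t}, \cdot)$ pins down exactly one original-tuple component. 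Since the transformation runs in polynomial time, any PTIME algorithm for \cqa{$Q'$} immediately gives a PTIME algorithm for \cqa{$Q$}.

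Finally, for coNP-hardness I would transport the 3SAT-style reduction implicit in the proof of \autoref{th:dichotomy} back through the transformation. Such reductions construct specific ``tuple-complete'' instances of $I'$ — those in which every $R_0(\underline{a}, t)$ has all of its $R_i(\underline{t}, \cdot)$ companions present — and any tuple-complete $I'$ is the image of a concrete $I$ under the transformation, which can then be fed to a \cqa{$Q$}-oracle. I expect the main obstacle to be precisely this last step: either verifying that the hardness reduction for \cqa{$Q'$} can be chosen to output only tuple-complete instances, or equivalently, showing that an arbitrary $I'$ can be canonicalized by padding with fresh values without introducing new matches of $Q'$ in any repair — a padding argument that needs care whenever a decomposed $y_i$ variable is purely existential in $Q$ and so could, in principle, bind to a freshly introduced constant.
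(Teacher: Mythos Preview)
Your syntactic decomposition matches the paper's almost exactly (the paper's \autoref{prop:allkey} and \autoref{prop:arity3} are precisely your two rewrite rules), and your forward reduction $I\mapsto I'$ together with the repair bijection is the same argument the paper gives. The divergence is in how coNP-hardness is transferred back to $Q$.

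You propose to either inspect the \textsc{Monotone-3Sat} reduction for $Q'$ and verify its outputs are ``tuple-complete'', or to pad an arbitrary $I'$ with fresh values. You correctly flag this as the delicate step. The paper sidesteps the issue entirely by proving a \emph{two-way} FO reduction $\cqa{Q}\stackrel{FO}{\sim}\cqa{Q'}$ up front, independent of the hardness construction. The backward direction is simply: given an arbitrary instance $I'$ of $Q'$, define $R^I$ as the natural join $R_0\bowtie R_1\bowtie\cdots\bowtie R_{n-1}$ projected onto $(x,y_1,\ldots,y_{n-1})$. The only subtlety is a key-group $R_0(\underline{a},-)$ that contains some tuple $R_0(\underline{a},t)$ lacking one of its $R_i(\underline{t},-)$ companions; the paper observes that such a key-group can never contribute to any witness of $Q'$ in any repair, so it may be discarded before forming the join, and after discarding it the repair bijection goes through as in the forward direction. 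This gives $I'\vDash Q'$ iff $I\vDash Q$ for every $I'$, so coNP-hardness of $\cqa{Q'}$ transfers immediately without ever looking inside the hardness reduction or worrying about padding and fresh constants. Your route would likely succeed (the labeling-based instances of \autoref{sub:coNP_case} are in fact tuple-complete, since every consistent edge is populated as a total function on its source label), but it entangles the corollary with the internals of the hardness proof, whereas the paper's join-and-discard reduction keeps the argument modular.
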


Except for \autoref{sec:simplify:structure},  where we prove  \autoref{cor:dichotomy}, the rest of the  paper consists of the proof of \autoref{th:dichotomy}. The classification into PTIME and coNP-complete is  based on analyzing the structure of a specific graph representation of the query along with the key constraints. The {\em query graph}, which we denote $G[Q]$, is a directed graph with vertices the variables in $Q$, and a directed edge $(x,y)$ for every relation $R(\ux, y)$.

Given the graph $G[Q]$, we give a necessary and sufficient condition for \cqa{$Q$} to be computable in polynomial time. Consider two edges $e_R = (u_R, v_R), e_S = (u_S, v_S)$ in $G[Q]$ that correspond to two inconsistent relations $R$ and $S$ respectively. We say that $e_R, e_S$ are {\em source-equivalent} if $u_R, u_S$ belong to the same strongly connected component of $G[Q]$. We also say that $e_R, e_S$ are {\em coupled} if 
(a) there exists an undirected path $P_R$ from $v_R$ to $u_S$
 such that no node in $P_R$ is reachable from $u_R$ through a directed path in $G - \set{e_R}$ and 
(b) there exists an undirected path $P_S$ from $v_S$ to $u_R$ where no node in $P_S$ is reachable from $u_S$ through a directed path in $G - \set{e_S}$. Then:

\begin{theorem} 
\label{th:two:edges}
(1) \cqa{$Q$} is coNP-complete if $G[Q]$ contains a pair of inconsistent
edges that are coupled and not source-equivalent. Otherwise, \cqa{$Q$} is in PTIME. 
(2) The problem: given
a query $Q$ decide whether \cqa{$Q$} is coNP-complete or in PTIME is NLOGSPACE-complete.
\end{theorem}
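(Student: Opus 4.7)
The plan for part (1) is to handle the two directions separately: reduce the coNP-hardness to the canonical hard pattern $Q_2 = R(\ux,y), S(\uz,y)$ from~\cite{FM07}, and to construct an explicit polynomial-time algorithm that succeeds whenever $G[Q]$ has no coupled, non-source-equivalent inconsistent pair. For the hardness direction, given edges $e_R = (u_R, v_R)$ and $e_S = (u_S, v_S)$ that are coupled (with witness undirected paths $P_R$, $P_S$) and lie in different SCCs, I would build a reduction that instantiates $P_R$ and $P_S$ as ``wires'' carrying truth values and uses $R$ and $S$ as the key-inconsistent gadget mimicking $Q_2$. The condition that no node of $P_R$ is reachable from $u_R$ in $G - \{e_R\}$ (and symmetrically for $P_S$) is precisely what prevents extraneous matches of $Q$ that would spoil the reduction; non-source-equivalence is what rules out the cyclic first-order rewriting trick that makes $Q_3 = R(\ux,y), S(\uy,x)$ tractable. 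I expect this construction to be conceptually clean but to require careful bookkeeping so that every non-gadget atom of $Q$ can be satisfied regardless of the chosen repair.

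The harder direction of part (1) is the PTIME algorithm, and I expect this to be the main obstacle of the paper. The plan is a recursive simplification: in each round either identify an atom that behaves as consistent (as in the $R_1, R_2$ footnote of the introduction) and prune the key-equivalent tuples creating the conflict, or apply a first-order rewriting step in the spirit of~\cite{Wijsen10,Wijsen10b} that strictly shrinks $G[Q]$. The core technical lemma to prove is that whenever $G[Q]$ contains no coupled, non-source-equivalent inconsistent pair, at least one such simplification step is available, and the resulting query still satisfies the same structural condition. Establishing closure under the recursion likely requires a case analysis on the SCC structure of $G[Q]$ (pure DAG, single nontrivial SCC, mixed), together with a careful argument that the obstruction pattern cannot be re-introduced by these operations.

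For part (2), membership in NLOGSPACE is obtained by nondeterministically guessing the two inconsistent edges and traversing the paths $P_R, P_S$ one vertex at a time; the SCC test is standard, and the ``no node of $P_R$ is reachable from $u_R$ in $G - \{e_R\}$'' condition reduces to the complement of directed reachability, handled via Immerman--Szelepcs\'enyi. For NLOGSPACE-hardness, I would reduce from directed $s$-$t$-reachability: given an instance $(H, s, t)$, build a query $Q$ whose graph is $H$ together with a constant-sized skeleton that pins down the two candidate inconsistent edges and forces the existence of coupling witnesses to be equivalent to reachability from $s$ to $t$ in $H$. The subtle point is to arrange the gadget so that source-equivalence and the unreachability side-conditions are automatic on the skeleton and track only the reachability information being reduced from \textsc{STCON}.
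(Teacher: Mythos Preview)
Your plan for part~(1) diverges from the paper on both sides, and on the PTIME side there is a genuine gap.

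\textbf{PTIME direction.} Your recursion is built on ``apply a first-order rewriting step \ldots\ that strictly shrinks $G[Q]$'', together with pruning of key-equivalent conflicts. This cannot succeed as stated: already the cycle queries $C_k$ with at least two inconsistent edges are splittable (trivially, since $G[C_k]$ is a single SCC) yet \cqa{$C_k$} is provably \emph{not} FO-expressible (\autoref{prop:fo_expressible}). So no sequence of FO-rewriting steps can handle even this base case, and your closure lemma (``at least one such simplification step is available'') is false in the form you state it. The paper's approach is genuinely different: it first develops a non-FO polynomial-time procedure for strongly connected graphs based on the \emph{or-set} representation $A_Q(I)$ of the frugal-repair answers (the algorithms \textsc{FrugalC} and \textsc{FrugalSCC} of \autoref{sec:cycles}), and then, for a general splittable graph, identifies a \emph{separator} equivalence class $C$ (\autoref{thm:separator_exists}) whose SCC can be processed via \textsc{FrugalSCC}; the resulting \textsc{RecursiveSplit} reduces the number of inconsistent edges. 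The separator existence proof requires an auxiliary preprocessing step (making the graph \emph{f-closed}) and two nontrivial structural propositions (\autoref{prop:equal:strict:weak} and \autoref{prop:split:eqclass}). None of this machinery is visible in your plan, and the SCC case is precisely where your FO-based recursion would stall.

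\textbf{Hardness direction.} Your idea of reducing from $Q_2 = R(\ux,y),S(\uz,y)$ and treating $P_R,P_S$ as ``wires'' is a different route from the paper, which reduces directly from \textsc{Monotone-3Sat} via a lattice-valued vertex labeling $L:V(G)\to\mL$ (\autoref{def:valid:labeling}, \autoref{prop:valid:labeling}). The point of the lattice labeling is that it assigns a population rule to \emph{every} edge of $G[Q]$ uniformly, in particular to inconsistent edges other than $R,S$ and to consistent edges off the witness paths; condition~(3) of the labeling is exactly what forces those extra inconsistent relations to be consistent in the constructed instance. Your sketch does not say how you populate atoms that are neither $R,S$ nor on $P_R,P_S$, nor how you neutralize other inconsistent edges so they never create spurious matches or block the intended ones; this is the ``careful bookkeeping'' you anticipate, and it is where most of the work lies. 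A $Q_2$-reduction may be salvageable, but as written it is a hope rather than a plan.

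\textbf{Part~(2).} Your outline (guess the two edges, walk the witness paths vertex by vertex, use Immerman--Szelepcs\'enyi for the non-reachability side-conditions; reduce \textsc{STCON} for hardness) is the natural approach and is consistent with what the paper asserts.
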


The following example illustrates the main theorem.

\begin{example}
Consider the following two queries:
\begin{align*}
K_1 & = R(\ux,y), S(\uz,w), T^c(\uy,w)\\
K_2 & = R(\ux,y), S(\uz,w), T^c(\uy,w), U^c(\ux, z)
\end{align*}
Observe that the only difference between $K_1, K_2$ is the consistent relation $U^c$. Moreover, the edges $e_R, e_S$ are not source-equivalent in both cases. In $G[K_1]$, the edges $e_R, e_S$ are also coupled. Indeed, consider the path $P_R$ that consists of the edges $e_T, e_S$ and connects $y$ with $z$. The nodes $y,w,z$ of $P_R$ are not reachable from  $x$ in the graphs $G[K_1] - \set{e_R}$. Similarly, the path $P_S$ that consists of the edges $e_T, e_R$ connects $w$ with $x$ and is not reached by any directed path starting from $z$ in $G[K_1]-\set{e_S}$.  Thus, \cqa{$K_1$} is coNP-complete.

In contrast, the path $P_R$ is reachable from $x$ in $G[K_2]$: consider the path that consists of $e_U$. Since no other path connects $e_R, e_S$ in $G[K_2]$, the edges $e_R, e_S$ are not coupled. Thus, \cqa{$K_2$} is in PTIME.
\end{example}

Note that if two edges $e_R, e_S$ belong to two distinct weakly
connected components, then they are trivially not coupled, which
implies that $Q$ is coNP-complete iff one of its weakly connected
components is coNP complete.

 In order to show \autoref{th:two:edges}, we develop new techniques for efficient computation of \cqa{$Q$}, as well as techniques for proving hardness. We start by introducing in \autoref{sec:prelim} and \autoref{sec:general} the basic notions and definitions. In~\autoref{sec:cycles}, we present the case where $G[Q]$ is a strongly connected graph (\ie there is a directed path from any node to any other node) and show that \cqa{$Q$} is in PTIME. The algorithm for computing \cqa{$Q$} in this case is based on a novel use of {\em or-sets} to represent efficiently answers to repairs. The polynomial time algorithm for \cqa{$Q$} when $G[Q]$ satisfies the condition of \autoref{th:two:edges} is presented in \autoref{sec:general} and is based on a recursive decomposition of $G[Q]$. Finally, the hardness results are presented in \autoref{sub:coNP_case}, where we show that we can reduce the NP-hard problem \textsc{Monotone-3SAT} to any graph $G[Q]$ that does not satisfy the condition of \autoref{th:two:edges}.

\section{Preliminaries}
\label{sec:prelim}

A database schema is a finite set of relation names. Each relation $R$
has a set of attribute $attr(R) = \{A_1, \dots, A_k\}$, and a key,
which is a subset of $attr(R)$.  We write $R(\underline{x_1, \dots,
  x_m}, y_1, \dots, y_{\ell})$ to denote that the attributes on
positions $1, \dots, m$ are the primary key.  Each relation is of one
of two types: consistent, or inconsistent.  Sometimes we denote $R^c$ or
$R^i$ to indicate that the type of the relation is consistent or
inconsistent.

An instance $I$ consists of a finite relation $R^I$ for each relation
name $R$, such that, if $R$ is of consistent type, then $R^I$
satisfies its key constraint.  In other words, in an instance $I$ we
allow relations $R^i$ to violate the key constraints but always
require the relations $R^c$ to satisfy the key constraints.  Notice
that, if the key of $R$ consists of all attributes, then $R^I$ always
satisfies the key constraints, so we may assume w.l.o.g. that $R$ is
of consistent-type.

We denote a tuple by $R(\underline{a_1, \ldots, a_m},b_1, \ldots,
b_\ell)$.  We define a {\em key-group} to be all the tuples of a
relation with the same key, in notation $R(\underline{a_1, \ldots,
  a_m},-)$.

\begin{definition} [Repair] \label{def:repair}
  An instance $r$ is a {\em repair} for $I$ if (a) $r$ satisfies all
  key constraints and (b) $r$ is a maximal subset of $I$ that
  satisfies property (a).
\end{definition}

In this work, we study how to answer conjunctive queries on
inconsistent instances:

\begin{definition} [Consistent Query Answering]
  Given an instance $I$, and a conjunctive query $Q$, we say that a
  tuple $t$ is a {\em consistent answer} for $Q$ if for every repair
  $r \subseteq I$, $t \in Q(r)$. If $Q$ is a Boolean query, we say
  that $Q$ is certain for $I$, denoted $I \vDash Q$, if for every
  repair $r$, $Q(r)$ is true.

  If $Q$ is Boolean query, \cqa{$Q$} denote the following decision
  problem: given an instance $I$, check if $I \vDash Q$.
\end{definition} 

%

\subsection{Frugal Repairs}

\label{subsec:frugal-repairs}

Let $Q$ be a Boolean conjunctive query $Q$.  Denote $Q^f$ the {\em
  full query} associated to $Q$, where all variables become head
variables; therefore, for any repair $r$, $Q(r)$ is true iff $Q^f(r)
\neq \emptyset$.

\begin{definition} [Frugal Repair]
  A repair $r$ of $I$ is {\em frugal} for $Q$ if there exists no
  repair $r'$ of $I$ such that $Q^f(r') \subsetneq Q^f(r)$.
\end{definition}

\begin{example}
\label{ex:frugal}
Let $Q = R(\ux, y), S(\ux, y)$. 
In this case, the full query is $Q^f(x,y) = R(\ux, y), S(\ux, y)$. Also,
consider the instance 
\begin{align*} 
I  = \{ & R(\underline{a_1},b_1), R(\underline{a_1},b_2), R(\underline{a_2},b_3), 
S(\underline{a_1}, b_1), S(\underline{a_2}, b_3), \\
& R(\underline{a_3}, b_4), R(\underline{a_3}, b_5),
S(\underline{b_4}, a_3), S(\underline{b_5}, a_3) \}
\end{align*}
with the following repairs: 
\begin{align*}
r_1 & = \{R(\underline{a_1},b_1), R(\underline{a_2},b_3),
S(\underline{a_1}, b_1), S(\underline{a_2}, b_3),
 R(\underline{a_3}, b_4), 
S(\underline{b_4}, a_3), S(\underline{b_5}, a_3) \} \\
r_2 & = \{R(\underline{a_1},b_2), R(\underline{a_2},b_3),
S(\underline{a_1}, b_1), S(\underline{a_2}, b_3),
 R(\underline{a_3}, b_4), 
S(\underline{b_4}, a_3), S(\underline{b_5}, a_3) \}  \\
r_3 & = \{R(\underline{a_1},b_2), R(\underline{a_2},b_3),
S(\underline{a_1}, b_1), S(\underline{a_2}, b_3),
 R(\underline{a_3}, b_5), 
S(\underline{b_4}, a_3), S(\underline{b_5}, a_3) \}
\end{align*}
The repairs will produce the answer sets
$Q^f(r_1) = \{(a_1, b_1), (a_2, b_3), (a_3, b_4)\}$, 
$Q^f(r_2) = \{(a_2, b_3), (a_3, b_4)\}$ 
and $Q^f(r_2) = \{(a_2, b_3), (a_3, b_5)\}$  respectively. 
Since $Q^f(r_2) \subsetneq Q^f(r_1)$, the repair
$r_1$ is not frugal. On the other hand, both $r_2$ and $r_3$ are frugal.
\end{example}

\begin{proposition}
\label{prop:frugal_equivalence}
$I \vDash Q$ if and only if every frugal repair of $I$ for $Q$ satisfies $Q$. 
\end{proposition}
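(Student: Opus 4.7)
The plan is to observe that this proposition is essentially immediate from the definitions, once we notice a key property of Boolean queries: if $Q$ is false on a repair $r$, then $Q^f(r) = \emptyset$, and the empty set has no proper subset.

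For the forward direction, suppose $I \vDash Q$. By definition every repair of $I$ satisfies $Q$, and since frugal repairs are in particular repairs, every frugal repair satisfies $Q$.

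For the converse, I will argue the contrapositive. Suppose $I \not\vDash Q$, so there exists some repair $r$ of $I$ with $Q(r)$ false, equivalently $Q^f(r) = \emptyset$. I claim $r$ is itself frugal. Indeed, any repair $r'$ would need to satisfy $Q^f(r') \subsetneq Q^f(r) = \emptyset$, which is impossible since $\emptyset$ has no proper subset. Hence $r$ is a frugal repair that does not satisfy $Q$, contradicting the assumption that every frugal repair satisfies $Q$.

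There is no real obstacle here: the argument does not depend on the structure of $Q$, on self-joins, or on the key constraints, only on the fact that $Q$ being false on $r$ forces $Q^f(r) = \emptyset$ to be minimal among subsets. I would present the proof in two short paragraphs corresponding to the two directions, and perhaps remark that this is why the notion of frugal repair is useful: to check certainty of a Boolean query, it suffices to consider only frugal repairs, which are the "worst case" witnesses for the query being false.
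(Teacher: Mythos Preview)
Your proof is correct and follows essentially the same approach as the paper: both directions are handled exactly as you describe, with the key observation that a repair $r$ on which $Q$ is false has $Q^f(r)=\emptyset$ and is therefore automatically frugal.
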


\begin{proof}
  One direction is straightforward: if some frugal repair does not
  satisfy $Q$, then $Q$ is not certain for $I$. For the other
  direction, assume that $Q$ is not certain for $I$.  Then there
  exists a repair $r$ s.t. $Q(r)$ is false, hence $Q^f(r)=\emptyset$:
  therefore $r$ is a frugal repair, proving the claim.   
\end{proof}

The proposition also implies that we lose no generality if we study
only frugal repairs in certain query answering.  To check $I \vDash Q$
it suffices to check whether $Q^f(r) \neq \emptyset$ for every frugal
repair.  In some cases, it is even possible to compute $Q^f(r)$ by
using a certain representation, as discussed next.

\subsection{Representability}

\label{subsec:orset}

In general, the number of frugal repairs is exponential in the size of
$I$.  We describe here a compact representation method for the set of
all answers $Q^f(r)$, where $r$ ranges over all frugal repairs.  We
use the notation of or-sets adapted from~\cite{LW93}. An {\em or-set}
is a set whose meaning is that one of its elements is selected
nondeterministically. Following \cite{LW93} we use angle brackets to
denote or-sets.  For example, $\orset{1,2,3}$ denotes the or-set that
is either 1 or 2 or 3; similarly $\orset{\set{1},\set{1,3}}$ means
either the set $\set{1}$ or $\set{1,3}$.

Let $\mF_Q(I) = \orset{r_1, r_2, \ldots}$ be the or-set of all frugal
repairs of $I$ for $Q$, and let
\begin{align*}
\mM_Q(I) = \orset{Q^f(r) \mid r \in \mathcal{F}_Q(I)}
\end{align*}
be the or-set of all answers of $Q^f$ on all frugal repairs.  Notice
that the type of $\mM_Q(I)$ is $\orset{\set{T}}$, where $T =
\bigtimes_{i=1}^k T_i$ is a product of atomic types.  For a simple
illustration, in \autoref{ex:frugal}, we have 
$\mM_Q(I) = \orset{\set{ (a_2, b_3), (a_3, b_4)}, \set{ (a_2, b_3), (a_3, b_5)}}$, 
because $r_2, r_3$ are the only frugal repairs.

Give a type $T$, define the following function $\alpha :
\set{\orset{T}} \rightarrow \orset{\set{T}}$ \cite{LW93}:
$\alpha(\set{A_1, \ldots, A_m}) = \orset{\set{x_1, \ldots, x_m} | x_1
  \in A_1, \ldots, x_m \in A_m}$.  For example,
$\alpha(\set{\orset{1,2}, \orset{3,4}}) = $ $\orset{\set{1,3},
  \set{1,4}, \set{2,3}, \set{2,4}}$ and
$\alpha(\set{\orset{1},\orset{1,2,3}}) =
\orset{\set{1},\set{1,2},\set{1,3}}$.

\begin{definition}
\label{def:representable}
Let $T = \bigtimes_{i=1}^k T_i$.  An or-set-of-sets $S$ (of type
$\orset{\set{T}}$) is {\em representable} if there exists a
set-of-or-sets $S_0$ (of type $\set{\orset{T}}$) such that (a)
$\alpha(S_0) = S$ and (b) for any distinct or-sets $A, B \in S_0$, the
tuples in $A$ and $B$ use distinct constants in all coordinates:
$\Pi_i(A) \cap \Pi_i(B) = \emptyset$, $\forall i = 1, k$.
\end{definition}

As an example, consider the or-sets 
\begin{align*}
S& = \orset{ \set{(a_1,b_1), (a_2,b_3)},\set{(a_1,b_2), (a_2, b_3)}} \\
S' & = \orset{ \set{(a_1,b_1), (a_2,b_3)},\set{(a_1,b_2), (a_2, b_2)}}
\end{align*}

$S$ is representable, since we can find a compression $S_0 =
\set{\orset{(a_1, b_1), (a_1,b_2)}, \orset{(a_2, b_3)}}$. Notice that
$a_1, b_1, b_2$ appear only in the first or-set of $S_0$, whereas
$a_2,b_3$ only in the second. On the other hand, it is easy to see
that $S'$ is not representable.
We prove:

\begin{proposition}
\label{prop:compression:size}
Let $S$ be an or-set of sets of type $\orset{\set{\bigtimes_{i=1}^k
    T_i}}$, and suppose that its active domain has size $n$.  If $S$
is representable $S = \alpha(S_0)$, then its compression $S_0$ has
size $O(n^k)$.
\end{proposition}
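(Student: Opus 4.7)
The plan is to bound the natural measure of the size of $S_0$, namely the total number of tuples $\sum_{A \in S_0} |A|$, by $n^k$. The argument rests on two simple observations: each or-set in $S_0$ fits inside the product of its coordinate projections, and the disjointness clause (b) of \autoref{def:representable} provides a ``budget of $n$'' in each coordinate.

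First I would write $S_0 = \{A_1, \ldots, A_m\}$ and set $n_i(A_j) := |\Pi_i(A_j)|$ for each $j \in \{1, \ldots, m\}$ and each coordinate $i \in \{1, \ldots, k\}$. Since every element of $A_j$ is a $k$-tuple whose $i$-th coordinate is drawn from $\Pi_i(A_j)$, we immediately get $|A_j| \le \prod_{i=1}^k n_i(A_j)$. On the other hand, every constant appearing anywhere in $S_0$ must also appear in $\alpha(S_0) = S$, so it lies in the active domain of $S$, which by hypothesis has size at most $n$. Combined with condition (b), the projections $\Pi_i(A_1), \ldots, \Pi_i(A_m)$ are pairwise disjoint subsets of a ground set of size at most $n$; hence $\sum_{j=1}^m n_i(A_j) \le n$ for each coordinate $i$, and in particular $n_i(A_j) \le n$ for every $i, j$.

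Combining the two estimates gives
\[
\sum_{j=1}^m |A_j| \;\le\; \sum_{j=1}^m \prod_{i=1}^k n_i(A_j) \;\le\; \Big(\prod_{i=2}^k n\Big)\, \sum_{j=1}^m n_1(A_j) \;\le\; n^{k-1} \cdot n \;=\; n^k,
\]
using $n_i(A_j) \le n$ for $i \ge 2$ in the middle step and the coordinate-$1$ budget in the last step. Moreover, the number $m$ of or-sets is itself at most $n$, since each $A_j$ is non-empty and therefore contributes at least one constant to its (pairwise-disjoint) coordinate-$1$ projection. Thus every reasonable measure of the size of the compression is $O(n^k)$.

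The argument is essentially bookkeeping; the only real content is recognizing that pairwise-disjoint projections give a budget of $n$ in each coordinate, which one then multiplies out. I do not foresee a genuine obstacle, though one must be careful to interpret the word ``size'' consistently, as the total tuple count across all or-sets rather than, say, the number of or-sets alone.
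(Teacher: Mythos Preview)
Your proof is correct, but the paper takes a more direct route. Instead of bounding each $|A_j|$ by the product $\prod_i n_i(A_j)$ and then summing, the paper simply observes that the disjointness condition (b) forces the or-sets $A_1,\ldots,A_m$ to be pairwise disjoint \emph{as sets of tuples}: if $t\in A_j$ and $t'\in A_{j'}$ with $j\neq j'$, then $\Pi_i(\{t\})\cap\Pi_i(\{t'\})=\emptyset$ for every $i$, so $t\neq t'$. Hence $\sum_j |A_j| = \bigl|\bigcup_j A_j\bigr|$, and the union is contained in the set of all $k$-tuples over an active domain of size $n$, giving the bound $n^k$ immediately. Your argument reaches the same bound via the intermediate inequality $\sum_j \prod_i n_i(A_j)\le n^{k-1}\sum_j n_1(A_j)$, which is valid but unnecessary once one notices the or-sets are themselves disjoint. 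The upside of your version is that it makes the per-coordinate budget explicit, which could be useful if one wanted a sharper bound depending on the individual domain sizes; the paper's version is shorter and avoids the product detour entirely.
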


\begin{proof}
  If $S_0 = \set{A_1, A_2, \ldots}$, then every $k$-tuple consisting
  of constants from the active domain occurs in at most one or-set,
  thus the total size of $S_0$ is $O(n^k)$.
\end{proof}

If $\mM_Q(I)$ is representable, then we denote $A_Q(I)$ its
compression; its size is at most polynomially large in $I$.  In
general, $\mM_Q(I)$ may not be representable.

By the definition of frugality, if $s_1, s_2 \in \mM_Q(I)$ then
neither $s_1 \subsetneq s_2$ nor $s_2 \subsetneq s_1$ holds.  This
implies that, for any instance $I$, there are two cases.  Either (1)
$I \not \vDash Q$; in that case $\mM_Q(I) = \orset{\set{}}$ is
trivially representable as $A_Q(I) = \set{}$; or, (2) $I \vDash Q$,
and in that case $\mM_Q(I) = \orset{A_1, A_2, \ldots}$, where $A_i\neq
\set{}$ for all $i$, may be exponentially large and not necessarily
representable.  For a simple illustration, in \autoref{ex:frugal},
$\mM_Q(I) $ is representable, and its
compression is $A_Q(I) = \set{\orset{(a_2, b_3)}, \orset{(a_3, b_4), (a_3, b_5)}}$.

If $A_Q(I)$ exists for every instance $I$ and can be computed in
polynomial time in the size of $I$, then \cqa{$Q$} is PTIME: to check
$I \vDash Q$, simply compute $A_Q(I)$ and check $\neq \set{}$.  The
converse is not true, however: for example, consider the query $H =
R(\underline{x},y), S(\underline{y},z)$, for which \cqa{$H$} is in
PTIME.  However, for the instance $I' = \set{R(\underline{a},b),
  S(\underline{b}, c_1), S(\underline{b},c_2)}$, $\mM_H(I') =
\orset{\set{(a,b,c_1)}, \set{(a,b,c_2)}}$ is not representable.

\subsection{Purified Instances} 
Let $Q$ be a any boolean conjunctive query.  An instance $I$ is called
{\em globally consistent}~\cite[pp.128]{DBLP:books/aw/AbiteboulHV95},
or {\em purified}~\cite{Wijsen13}, if for every relation $R$,
$\Pi_{attr(R)}(Q^f(I)) = R^I$, where $\Pi_{attr(R)}$ denotes the
projection on the attributes of relation $R$.  In other words, no
tuple in $I$ is ``dangling''.  

In the rest of the paper we will assume that the instance $I$ is
purified.  This is without loss of generality, because if $I$ is an
arbitrary instance, then we can define a new instance $I^p \subseteq I$
such that $\mM_Q(I) = \mM_Q(I^p)$, and thus $I \vDash Q$ if and only if
$I^p \vDash Q$.
\begin{lemma}
  Given a query $Q$ and an instance $I$, there exists a purified instance $I^p \subseteq I$
  such that $\mM_Q(I) = \mM_Q(I^p)$.
\end{lemma}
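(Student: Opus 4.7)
The plan is to construct $I^p$ by an iterative pruning procedure on $I$, based on the observation that if a tuple $t \in J$ is dangling (i.e., $t \notin \Pi_{attr(R)}(Q^f(J))$) then $t$ contributes no answer to $Q^f(r)$ for any $r \subseteq J$; this is immediate from monotonicity, since $Q^f(r) \subseteq Q^f(J)$. Starting with $J := I$, the procedure picks a key-group $G$ of $J$ containing at least one dangling tuple and applies one of two operations: (A) if every tuple of $G$ is dangling, set $J := J \setminus G$; (B) if $G$ is mixed, write $G = G_d \cup G_n$ (dangling vs.\ non-dangling) and set $J := J \setminus G_n$. Each step strictly shrinks $J$, so the procedure terminates at some $I^p \subseteq I$ in which no key-group contains a dangling tuple, i.e., $I^p$ is purified.

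The substantive content is showing that each step preserves $\mM_Q$. Case (A) is easy: every repair of $J$ picks one tuple from $G$ that contributes nothing to $Q^f$, so stripping that pick is a $Q^f$-preserving bijection between repairs of $J$ and repairs of $J \setminus G$ (the inverse adds back any element of $G$). Case (B) is the core of the argument. The key claim is that every frugal repair of $J$ may be assumed to pick from $G_d$: if a repair $r$ picks $t' \in G_n$, then for any $t \in G_d$ the repair $r'' := (r \setminus \set{t'}) \cup \set{t}$ satisfies $Q^f(r'') = Q^f(r \setminus \set{t'}) \subseteq Q^f(r)$, because the dangling $t$ contributes no new answers; frugality of $r$ then forces equality, and $r''$ is a frugal repair of $J$ that is also a repair of $J \setminus G_n$. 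A symmetric swap argument shows that the frugality status of any repair is preserved between $J$ and $J \setminus G_n$, giving $\mM_Q(J) = \mM_Q(J \setminus G_n)$ in case (B) as well.

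The main obstacle is case (B): the swap $t' \mapsto t$ only weakly decreases $Q^f$, so frugality must be invoked to upgrade the inclusion $Q^f(r'') \subseteq Q^f(r)$ to equality. One also has to check that a frugal repair of the smaller instance remains frugal in the larger one, which again requires applying the same swap to any hypothetical witness of non-frugality in $J$ to produce a contradicting repair of $J \setminus G_n$. Combining the two cases with the termination argument, the invariant $\mM_Q(J) = \mM_Q(I)$ is maintained throughout the procedure, so the final purified instance $I^p$ satisfies $\mM_Q(I^p) = \mM_Q(I)$, as required.
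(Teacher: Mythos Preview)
Your proof is correct and rests on the same core observation as the paper's: if a key-group $G$ contains a dangling tuple $t$, then swapping a frugal repair's choice from $G$ to $t$ weakly decreases $Q^f$, so by frugality the chosen tuple contributed nothing. The paper is simply more direct: from this observation it removes the \emph{entire} key-group $G$ in one step, with no case distinction. Your case (B)---deleting only the non-dangling part $G_n$ so that the remaining all-dangling group is handled by case (A) in a later iteration---is an unnecessary detour, since the very swap argument you give in (B) already shows that no frugal repair uses its $G$-tuple, hence all of $G$ can be dropped immediately.
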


\begin{proof}
If $I$ is not purified, there exists a tuple $t$ in the key-group $R(\underline{a_1, \dots, a_m},-)$ 
such that $t \notin \Pi_{attr(R)}(Q^f(I))$. Then, for any frugal repair $r$ of $Q$, no tuple from 
$R(\underline{a_1, \dots, a_m},-)$ will contribute to some tuple in $Q^f(r)$; otherwise, for the repair
$r' = r \setminus R(\underline{a_1, \dots, a_m},-) \cup \set{t}$ we would have $Q^f(r') \subset Q^f(r)$.
Thus, for $I' = I \setminus R(\underline{a_1, \dots, a_m},-)$, we have $\mM_Q(I) = \mM_Q(I')$. We repeat this process until we get a purified instance $I^p$.
\end{proof}

\subsection{The Query Graph}

In the rest of the paper we will restrict the discussion to the
setting of \autoref{th:dichotomy}, and consider only Boolean queries
w/o self-joins consisting only of binary relations where exactly one
attribute is the key; in \autoref{sec:simplify:structure} we prove
\autoref{cor:dichotomy}, thus extending the dichotomy to more general
queries.


Given a query $Q$, the {\em query graph} $G[Q]$ is a directed graph
where the vertex set $V(G)$ consists of set of variables in $Q$, and
edge set $E(G)$ contains for atom $R(\underline{u}, v)$ in $Q$ an edge
$e_R = (u,v)$ in $G[Q]$.  Since $Q$ has no self-joins each relation
$R$ defines a unique edge $e_R$, and we denote $u_R$ and $v_R$ its
starting and ending node respectively.  We say that the edge is
consistent (inconsistent) if the type of $R$ is consistent
(inconsistent), and denote $E^i(G)$ ($E^c(G)$) the set of all
consistent (inconsistent) edges. Thus $E(G) = E^i(G) \cup E^c(G)$.

A {\em directed path} $P$ is an alternating sequence of vertices and edges
$v_0, e_1, v_1, \dots, e_{\ell}, v_{\ell}$ where $e_i = (v_{i-1},v_i)$
for $i=1, \dots, \ell$ and $\ell \geq 0$. We write $P : x \rightarrow y$ for a
directed path $P$ where $v_0 = x$ to $v_{\ell} = y$, and every edge $e_i$ is
consistent; we write $P : x \leadsto y$ for any directed path $P$ where
$v_0 = x$ and $v_{\ell} = y$ that has any type of edges. An {\em undirected
path} $P$ is an alternating sequence of vertices and edges 
$v_0, e_1, v_1, \dots, e_{\ell}, v_{\ell}$ where either $e_i = (v_{i-1},v_i)$
or $e_i = (v_i, v_{i-1})$ for $i=1, \dots, \ell$ and $\ell \geq 0$; we write
$P : x \leftrightarrow y$ for an undirected path where $v_0 = x$ and $v_{\ell} = y$ 
(that may also have any types of edges).
A path $P$ may contain a single vertex and no edges (when $\ell = 0$), in which
case we can write $P: x \rightarrow x$. If $N \subseteq V(G)$, then $P \cap N$
denotes the set of vertices in $P$ that occur in $N$.  The notation $x
\rightarrow y$ (or $x \leadsto y$, or $x \leftrightarrow y$) means
``there exists a path $P : x \rightarrow y$'' (or $P: x \leadsto y$,
or $P: x \leftrightarrow y$).


Finally, since $Q$ uniquely defines $G[Q]$ and vice versa, we will
often use $G$ to denote the the query $Q$ (for example, we may
say $G(r)$ instead of the boolean value $Q(r)$, for some repair $r$).

\begin{example}
\label{ex:q_graph}
Consider the following query:
\begin{align*}
H = & R_1(\ux,y), R_2^c(\uy,z), R_3(\underline{z},x), 
 V_1^c(\underline{u},y), V_2^c(\underline{x},v), \\
 & V_3^c(\underline{z},v), 
 S(\underline{u},v), T(\underline{v}, w), U^c(\underline{u},w)
\end{align*}
The graph $G[H]$ is depicted in \autoref{fig:q_graph}. The curly edges denote inconsistent edges $E^i = \set{R_1, R_3,S,T}$, whereas the straight edges denote consistent ones. 
We also have $u \leadsto x$ (but not $u \rightarrow x$, since the only path from $u$ to $x$ contains inconsistent edges). Moreover, $y \rightarrow v$, since there is a directed path that goes from $y$ to $v$ through $R_2, V_3$. Finally, notice that, although $v \not \leadsto y$, $v \leftrightarrow y$. 
\end{example}

\begin{figure}[tb]
\centering{ \resizebox{6cm}{!} {\input 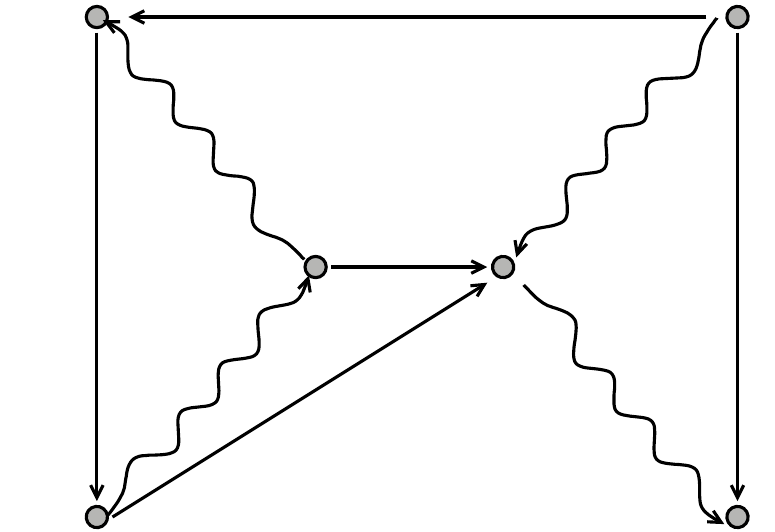_t}}
\caption{{\footnotesize The query graph $G[H]$. The curly edges denote inconsistent relations, whereas the straight edges consistent relations.}}
\label{fig:q_graph}
\end{figure}

\subsection{The Instance Graph}

Let $Q$ be a Boolean conjunctive query without self-joins over binary
relations with single-attribute keys.  Let $I$ be an instance for $Q$.
We will assume w.l.o.g. that any two attributes that are not joined by
$Q$ have disjoint domains: otherwise, we simply rename the constants
in one attribute.  For example, if $Q = R(x,y),S(y,z),T(z,x)$ then we
will assume that $\Pi_1(R^I) \cap \Pi_1(S^I)=\emptyset$, etc.

The {\em instance graph} is the following directed graph $F_Q(I)$.
The nodes consists of all the constants occurring in $I$, and there is
an edge $(a,b)$ for every tuple $R^I(\underline{a}, b)$ in $I$.  The
size of the instance graph $F_Q(I)$ is the same as the size of the
instance $I$.

\section{The Dichotomy Theorem}
\label{sec:general}

We present here formally our dichotomy theorem, and start by
introducing some definitions and notations.  Let $u \in V(G)$ and $e_R
\in E(G)$. Then,
\begin{align*}
u^{\oplus} &= \setof{v \in V(G)}{u \rightarrow v \text{ in } G} \\
u^{+} &= \setof{v \in V(G)}{u \leadsto v \text{ in } G} \\
u^{+,R} & = \setof{v \in V(G)}{u \leadsto v \text{ in } G - \{e_R\}}
\end{align*}

\begin{example}
  Consider the graph $G[H]$ from \autoref{fig:q_graph}, which will be
  our running example.  Then:
  \begin{align*}
    x^{\oplus} = & \set{x,v} & x^{+,R_1} = & \set{x,v,w} & x^+ =& \set{x,v,w,y,z}
  \end{align*}
\end{example}

\begin{proposition}
\label{prop:plus_contain}
If $R \in E^i$, $  u_R^{\oplus} \subseteq u_R^{+,R} \subseteq u_R^{+}$. 
\end{proposition}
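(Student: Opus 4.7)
The proposition is a direct consequence of the definitions, and the plan is simply to unpack them and verify each inclusion in turn.

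For the inclusion $u_R^{+,R} \subseteq u_R^{+}$, I would argue this is essentially monotonicity in the edge set: every path in $G - \{e_R\}$ is also a path in $G$. So if $v$ is reachable from $u_R$ via some directed path $P: u_R \leadsto v$ in $G - \{e_R\}$, the same $P$ witnesses $u_R \leadsto v$ in $G$, hence $v \in u_R^{+}$. No use of the hypothesis $R \in E^i$ is needed here.

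For the inclusion $u_R^{\oplus} \subseteq u_R^{+,R}$, the key point is that the hypothesis $R \in E^i$ means $e_R$ is an inconsistent edge, so $e_R \notin E^c(G)$. If $v \in u_R^{\oplus}$, then by definition there exists a directed path $P: u_R \rightarrow v$ all of whose edges are consistent. Since $e_R$ is inconsistent, $P$ does not traverse $e_R$, so $P$ is also a (directed) path in $G - \{e_R\}$. In particular $u_R \leadsto v$ holds in $G - \{e_R\}$, giving $v \in u_R^{+,R}$.

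There is no real obstacle: both inclusions fall out after writing the definitions side by side. The only subtlety worth flagging is that the hypothesis $R \in E^i$ is used exclusively to guarantee that a path of consistent edges cannot use $e_R$; without it the first inclusion could fail, since a consistent edge from $u_R$ to $v$ being removed would leave no alternative route.
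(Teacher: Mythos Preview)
Your proof is correct and follows essentially the same approach as the paper: the paper argues that a consistent path $P: u_R \rightarrow v$ cannot contain the inconsistent edge $e_R$ and hence lies in $G - \{e_R\}$, and dismisses the second inclusion as straightforward. You have simply spelled out that second inclusion explicitly via monotonicity in the edge set.
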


\begin{proof}
Let $v \in u_R^{\oplus}$. Then, there exists a path $P: u_R \rightarrow v$ in $G$.
Since $P$ is consistent, it cannot contain the inconsistent edge $e_R$, and thus $P$ exists in
$G - \{e_R\}$ as well. Consequently, $v \in u_R^{+,R}$. The other inclusion is straightforward.
\end{proof}

Define the binary relation $R \lesssim S$ if $u_S \in u_R^+$.  The
relation $\lesssim$ is a {\em preorder} the set of edges, since it is
reflexive and transitive.
If $R \lesssim S$ and $S \lesssim R$ then we say that $R,S$ are {\em
  source-equivalent} and denote $R \sim S$.  Notice that $R \sim S$
iff their source nodes $u_R, u_S$ belong in the same strongly
connected component (SCC) of $G$; in particular, if $R, S$ have the
same source node, $u_R = u_S$, then $R \sim S$.

For an edge $R \in E^i$, we define the following sets of {\em coupled}
edges:
\begin{align*}
  coupled^{\oplus}(R) & = [R] \cup \setof{S \in E^i}{\exists P: v_R  \leftrightarrow u_S, P  \cap u_R^{\oplus}  = \emptyset}\\
  coupled^{+}(R) & = [R] \cup \setof{S \in E^i}{\exists P: v_R  \leftrightarrow u_S, P  \cap u_R^{+,R}  = \emptyset}
\end{align*}

By definition, every edge $S$ that is source-equivalent to $R$ is
coupled with $R$.  In addition, 
$coupled^{\oplus}(R)$ ($coupled^{+}(R)$), includes all inconsistent
edges $S$ whose source node $u_S$ is in the same weakly connected
component as $v_R$, in the graph $G-u_R^{\oplus}$ ($G- u_R^{+,R}$
respectively). The notion of $coupled^{\oplus}$ is not necessary to express
the dichotomy theorem, but it will be heavily used in the polynomial time algorithm
of \autoref{sub:ptime_case}.

\begin{example} 
  Let us compute the coupled edges in our running example, where $E^i
  = \set{R_1, R_3, S, T}$.  We start by computing the node-closures of
  all the four source nodes:
  \begin{align*}
x^{\oplus} = & \set{x,v} & x^{+,R_1} = & \set{x,v,w} \\
z^{\oplus} = & \set{z,v} & z^{+,R_3} = & \set{z,v,w} \\
u^{\oplus} = & \set{u, y, w} & u^{+,S} = & \set{u, y,  w, x, v, w} \\
v^{\oplus} = & \set{v} & v^{+,T} = & \set{v}
  \end{align*}


  Next, we compute $coupled^+(e)$ for every inconsistent edge $e$. For
  example, the set $coupled^+(R_1)$ includes $R_1$ and $R_3$, because
  $R_1 \sim R_3$.  In addition, after we remove $x^{+,R_1} =
  \set{x,v,w}$ from the graph, the destination node $y$ of $R_1$ is
  still weakly connected to the source node $u$ of $S$, thus
  $coupled^+(R_1)$ contains $S$; but $y$ is no longer connected to the
  source node $v$ of $T$, therefore $coupled^+(R_1)$ does not contain
  $T$.  By similar reasoning:
  \begin{align*}
    coupled^\oplus(R_1) = & \set{R_1, R_3, S} && coupled^+(R_1) = \set{R_1, R_3, S} \\
    coupled^\oplus(R_3) = & \set{R_1, R_3, S, T} && coupled^+(R_3) = \set{R_3} \\
    coupled^\oplus(S) = & \set{S} && coupled^+(S) = \set{S} \\
    coupled^\oplus(T) = & \set{R_1, R_3, S, T} && coupled^+(T) =  \set{R_1, R_3, S, T}
  \end{align*}
\end{example}

\autoref{prop:plus_contain} implies:

\begin{corollary}
If $R \in E^i$, $coupled^{\oplus}(R) \supseteq coupled^+(R)$.
\end{corollary}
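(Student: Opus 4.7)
The plan is to unfold the two definitions of $coupled^\oplus(R)$ and $coupled^+(R)$ and observe that they differ only in the forbidden vertex set the connecting path $P$ is required to avoid: $u_R^\oplus$ in the first case, $u_R^{+,R}$ in the second. Since \autoref{prop:plus_contain} gives us the inclusion $u_R^\oplus \subseteq u_R^{+,R}$, a path that avoids the larger set $u_R^{+,R}$ automatically avoids the smaller set $u_R^\oplus$.

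Concretely, first I would take an arbitrary $S \in coupled^+(R)$. If $S \in [R]$, then $S$ is in $coupled^\oplus(R)$ by definition. Otherwise, $S \in E^i$ and there exists an undirected path $P : v_R \leftrightarrow u_S$ with $P \cap u_R^{+,R} = \emptyset$. Then I would invoke \autoref{prop:plus_contain} to conclude $P \cap u_R^\oplus \subseteq P \cap u_R^{+,R} = \emptyset$, so the same path $P$ witnesses membership of $S$ in $coupled^\oplus(R)$.

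Since this reasoning covers both cases in the union defining $coupled^+(R)$, the desired inclusion follows. There is no real obstacle here; the corollary is essentially a one-line consequence of \autoref{prop:plus_contain}, recording the fact that enlarging the forbidden region can only shrink (or preserve) the collection of coupled edges. This symmetric observation will be useful later because it means any structural argument phrased in terms of $coupled^+$ immediately transfers to $coupled^\oplus$.
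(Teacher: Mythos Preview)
Your proposal is correct and matches the paper's approach exactly: the paper simply states that the corollary follows from \autoref{prop:plus_contain} without spelling out the argument, and your unfolding of the definitions together with the inclusion $u_R^{\oplus} \subseteq u_R^{+,R}$ is precisely the intended one-line justification.
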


\begin{definition}
  Two edges $R,S \in E^i$ are {\em coupled} if $R \in coupled^+(S)$
  and $S \in coupled^+(R)$.

  The graph $G$ is called {\em unsplittable} if there exists two
  coupled edges $R, S$ s.t. $R \not\sim S$.  Otherwise, the graph is
  called {\em splittable}. 

\end{definition}

The graph $G[H]$ from our running example is splittable, because the
only pair of coupled edges are $R_1, R_3$, which are also
source-equivalent $R_1 \sim R_3$.  Indeed, any other pair is not
coupled: $R_1, S$ are not coupled because $R_1 \not\in coupled^+(S)$;
$R_1, T$ are not coupled because $T \not\in coupled^+(R_1)$; etc.

We can now state our dichotomy theorem, which we will prove in the rest of the paper.  

\begin{theorem} [Dichotomy Theorem]
\label{thm:dichotomy}
(1) If $G[Q]$ is splittable, then \cqa{$Q$} is in PTIME.  (2) If
$G[Q]$ is unsplittable, then \cqa{$Q$} is coNP-complete.
\end{theorem}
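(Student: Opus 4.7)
The plan is to prove the two halves of the dichotomy separately. For the PTIME direction, when $G[Q]$ is splittable I will construct a recursive algorithm that computes the compressed or-set representation $A_Q(I)$ of $\mathcal{M}_Q(I)$, and decides $I \vDash Q$ by testing whether $A_Q(I) \neq \set{}$; by \autoref{prop:frugal_equivalence} this correctly solves \cqa{$Q$}. For the coNP-complete direction, when $G[Q]$ is unsplittable I will give a polynomial-time reduction from \textsc{Monotone-3SAT}.

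For the PTIME algorithm, to be carried out in \autoref{sub:ptime_case}, the base case is when $G[Q]$ is strongly connected; this is handled in \autoref{sec:cycles}, where $\mathcal{M}_Q(I)$ is shown to be representable and $A_Q(I)$ is computed in polynomial time using or-sets (so its size is polynomial by \autoref{prop:compression:size}). The inductive step picks an inconsistent edge $e_R$ whose source SCC is a sink in the condensation DAG of $G$ and decomposes the query by peeling off first $u_R^{\oplus}$ and then $u_R^{+,R}$. Splittability --- which asserts that no inconsistent edge outside $[R]$ lies in $coupled^+(R)$ --- is precisely what ensures that once these closures are removed, the destination $v_R$ is no longer weakly connected to the source of any other inconsistent edge, so the recursive or-set answers can be combined without destroying the disjointness condition of \autoref{def:representable}. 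Correctness relies on \autoref{prop:frugal_equivalence} together with the purified-instance reduction, so that no dangling tuples interfere with the encoding.

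For the hardness direction (\autoref{sub:coNP_case}), I will fix two coupled, non-source-equivalent inconsistent edges $e_R, e_S$ together with witnessing undirected paths $P_R: v_R \leftrightarrow u_S$ avoiding $u_R^{+,R}$ and $P_S: v_S \leftrightarrow u_R$ avoiding $u_S^{+,S}$, encode each variable of a Monotone-3SAT formula as a two-tuple key-group in $R$ with a mirrored key-group in $S$, and encode each clause as a gadget built from tuples lying along $P_R$ and $P_S$; the disjointness of these paths from the closures $u_R^{+,R}, u_S^{+,S}$ ensures the clause gadgets do not interfere with the variable assignments, so repairs correspond bijectively to truth assignments and the formula is satisfiable iff some repair falsifies $Q$. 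The main obstacle I anticipate is the PTIME side: showing that the or-set produced by the recursion remains representable at every level requires a structural lemma linking splittability to disjointness of the constants appearing in distinct frugal repairs, and the strongly-connected base case itself demands a non-obvious argument about propagating or-sets around cycles. The hardness reduction, by contrast, is largely gadget engineering once the coupling definitions are unpacked, with the main subtlety being to rule out spurious joins between gadgets that would smuggle in extra certain answers.
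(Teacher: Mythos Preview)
Your PTIME plan has a genuine gap. You propose to compute the compressed representation $A_Q(I)$ of $\mM_Q(I)$ recursively and decide $I \vDash Q$ by testing $A_Q(I) \neq \set{}$. But the paper gives an explicit counterexample: for $H = R(\underline{x},y), S(\underline{y},z)$, which is splittable (indeed FO-rewritable), the instance $I' = \set{R(\underline{a},b), S(\underline{b},c_1), S(\underline{b},c_2)}$ has $\mM_H(I') = \orset{\set{(a,b,c_1)},\set{(a,b,c_2)}}$, which is \emph{not} representable. So representability of $\mM_Q(I)$ is a property of strongly connected queries only, and cannot be maintained through your recursion. The paper's algorithm \textsc{RecursiveSplit} is accordingly a \emph{decision} procedure returning True/False, not a procedure that computes $A_Q(I)$; the or-set machinery is used only locally, on the SCC $S_C$ containing the current separator.

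Your inductive step also misreads the definition. You write that splittability ``asserts that no inconsistent edge outside $[R]$ lies in $coupled^+(R)$''; this is false. In the paper's running example $G[H]$ we have $coupled^+(R_1) = \set{R_1,R_3,S}$ with $S \notin [R_1]$, yet the graph is splittable because $R_1 \notin coupled^+(S)$. Splittability only forbids \emph{mutual} coupling across equivalence classes. Consequently, picking an arbitrary sink in the condensation DAG is not enough: the paper introduces the notion of a \emph{separator} (a sink $C$ such that every $C' \in coupled^\oplus(C)$ with $C' \neq C$ satisfies $C' <^\oplus C$), shows by a nontrivial argument that a separator exists provided the graph is both splittable and \emph{f-closed}, and gives a preprocessing step (\textsc{F-Closure}) that makes any graph f-closed without changing $\mM_Q$. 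Without these ingredients the decomposition you sketch does not go through; the paper even exhibits a splittable query $R^i(\underline{x},y),S^i(\underline{x},y),T^i(\underline{z},y)$ with no separator prior to f-closure.

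On the coNP side your high-level idea (reduce from \textsc{Monotone-3SAT} using the two coupled edges and the witnessing paths $P_R,P_S$) matches the paper, but the roles are inverted: in the paper the source of $R$ is labeled $\Phi$ (clauses) and the source of $S$ is labeled $X$ (variables), so key-groups of $R$ encode clauses and key-groups of $S$ encode variable assignments, not the other way around. The consistent relations are populated not by ad hoc gadgets along $P_R,P_S$ but via a lattice $\mL = \set{\bot,\bB,X,\Phi,X^*,\top}$ of labels and canonical functions $f_{L_1L_2}$ between them; the valid-labeling lemma (\autoref{prop:valid:labeling}) is what converts the coupling hypothesis into a workable instance construction.
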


We end this section with a few observations. First, if $Q$ consists of
several weak connected components $Q_1, Q_2, \ldots$, in other words,
$Q_i, Q_j$ do not share any variables for all $i\neq j$, then $Q$ is
unsplittable iff some $Q_i$ is unsplittable: this follows from the
fact that $coupled^+(R)$ is included in the weakly connected component
$Q_i$ that contains $R$.  In this case, \autoref{thm:dichotomy}
implies that \cqa{$Q$} is coNP-complete iff  \cqa{$Q_i$} is
coNP-complete for some $i$.

Second, if $Q$ is strongly connected, then it is, by definition,
splittable: in this case \autoref{thm:dichotomy} says that \cqa{$Q$}
is in PTIME.  In fact, the first step of our proof is to show that
every strongly connected query is in PTIME.

Finally, we note that the property of being splittable or unsplittable
may change arbitrarily, as we add more edges to the graph.  For
example, consider these three queries: $Q_1 = R(\underline{x},y)$,
$Q_2 = R(\underline{x},y),S(\underline{z}, y)$, $Q_3 =
R(\underline{x},y),S(\underline{z}, y),T(\underline{z}, y)$, where all
three relations $R, S, T$ are inconsistent.  Then $Q_1, Q_3$ are
splittable, while $Q_2$ is unsplittable, and therefore, their
complexities are PTIME, coNP-hard, PTIME.  Indeed, in $Q_2$ we have
$coupled^+(R)= coupled^+(S) = \set{R, S}$, therefore $R, S$ are
coupled and in-equivalent $R \not\sim S$, thus, $Q_2$ is unsplittable.
On the other hand, in $Q_3$ we have\footnote{The difference between
  $Q_2$ and $Q_3$ is that in $Q_2$ we have $z^{+,S}=\set{z}$, while in
  $Q_3$ we have $z^{+,S}=\set{x,y,z}$.} $coupled^+(S) = \set{S,T}$,
$coupled^+(T)= \set{S,T}$, and therefore $R,S$ are no longer coupled,
nor are $R,T$: $Q_3$ is splittable.

\section{Strongly Connected Graphs}
\label{sec:cycles}

If $G[Q]$ is a strongly connected graph (SCG), then it is, by
definition, splittable.  Our first step is to prove Part (1) of
\autoref{thm:dichotomy} in the special case when $G[Q]$ is a strongly
connected, by showing that \cqa{$Q$} is in PTIME.  We actually show an
even stronger statement.

\begin{theorem} \label{th:scc}
  If $G[Q]$ is strongly connected, $\mM_{Q}(I)$ is representable and
  its compression $A_Q(I) $can be computed in polynomial time in the
  size of $I$.
\end{theorem}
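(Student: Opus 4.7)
The plan is to exploit the strong connectivity of $G[Q]$ to partition the candidate answers $Q^f(I)$ into equivalence classes whose witness tuples come from disjoint key-groups of $I$, and then to read off the or-set compression $A_Q(I)$ directly from these classes. Since $Q$ has no self-joins, every candidate answer $t \in Q^f(I)$ admits a unique witness $\text{supp}(t) \subseteq I$ containing exactly one tuple per atom of $Q$, and $t \in Q^f(r)$ iff $\text{supp}(t) \subseteq r$. Thus, it suffices to understand which combinations of candidate answers co-occur as $Q^f(r)$ for some frugal $r$.

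First, I would define an equivalence relation $\equiv$ on $Q^f(I)$ as the transitive closure of ``$t \equiv t'$ whenever $\text{supp}(t)$ and $\text{supp}(t')$ contain tuples from a common key-group'', and let $C_1, \ldots, C_m$ be the resulting classes. By construction, distinct classes own disjoint sets of key-groups, so the tuple selections made by a repair decompose as an independent choice per class; answer sets $Q^f(r)$ decompose likewise. The or-set compression will take the form $A_Q(I) = \{A_1, \ldots, A_m\}$, where each $A_i$ is an or-set supported on $C_i$.

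Second, I would characterize frugal repairs class by class. Fix a class $C$. The main technical claim is that a frugal repair contributes exactly one answer to $C$, that the set of achievable single answers coincides with $C$ under purification, and that this choice is independent of the analogous choices in other classes. Here the cyclic structure of $G[Q]$ is crucial: for any $t \in C$, strong connectivity implies that $\text{supp}(t)$ forms a closed substructure in $F_Q(I)$, which allows the tuples of $\text{supp}(t)$ to be consistently selected in a repair while choosing, for the remaining key-groups owned by $C$, tuples whose own witnesses do not complete any candidate answer other than $t$ itself. Combined with the independence across classes, this yields $\mM_Q(I) = \alpha(\{C_1, \ldots, C_m\})$, and hence $A_Q(I) = \{C_1, \ldots, C_m\}$.

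The representability condition of \autoref{def:representable} then follows from class-disjointness of key-groups together with purification, since any constant appearing at a given coordinate of an answer identifies the unique key-group of the containing witness, which in turn belongs to a unique class. The main obstacle will be the class-by-class frugality argument — proving that every $t \in C$ can be isolated as the unique answer from its class in some frugal repair — which requires a careful cycle-chasing argument in $F_Q(I)$ using strong connectivity of $G[Q]$ to consistently resolve all touched key-groups. Once this is established, the algorithm computing $A_Q(I)$ is immediate: evaluate $Q^f(I)$ by a standard join in polynomial time, build the equivalence classes via union-find on shared key-groups, and emit one or-set per class.
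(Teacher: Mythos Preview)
Your central technical claim --- that a frugal repair contributes \emph{exactly one} answer to each equivalence class $C$, and that the achievable singletons are precisely the elements of $C$ --- is false. The gap is not in the isolation argument (one can indeed pick $\text{supp}(t)$ and extend to a repair that realizes $\{t\}$ within the class), but in the implicit assumption that such a repair is frugal. It need not be: a class may admit a repair whose restriction yields \emph{zero} answers, in which case every repair realizing a singleton is strictly dominated and hence not frugal, and the class must be dropped from $A_Q(I)$ altogether.

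Take $Q = C_3 = R(\underline{x},y),S(\underline{y},z),T(\underline{z},x)$ and the purified instance corresponding to the component $F_2$ in the paper's \autoref{fig:cycle_i0}: the three $3$-cycles $(a_3,b_3,c_3)$, $(a_3,b_4,c_4)$, $(a_4,b_4,c_3)$ form a single equivalence class under your relation, yet the repair $\{R(a_3,b_3),R(a_4,b_4),S(b_3,c_3),S(b_4,c_4),T(c_3,a_4),T(c_4,a_3)\}$ gives $Q^f(r)=\emptyset$. Thus the only frugal output on this class is $\emptyset$, and the correct compression omits it; your procedure would instead emit a nonempty or-set and wrongly declare $I \vDash C_3$. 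The paper's algorithm for cycles detects exactly this situation by testing whether the instance SCC contains a cycle of length $>k$ (\autoref{lemma:single:cycle}), and discards such components. For general strongly connected $G[Q]$ the analogous test is not obvious, which is why the paper does not argue directly on equivalence classes of $Q^f(I)$ but instead builds up $A_Q(I)$ inductively along a chordal-path decomposition of $G[Q]$ (\autoref{lem:scg_decomposition}), reducing each step back to the cycle case via the encoding in \textsc{FrugalChordPath}. Your approach would need, at minimum, a per-class test for ``avoidability'' (the existence of a repair with empty output on that class), and it is not clear this can be done without essentially reproducing the paper's machinery.
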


As we discussed in~\autoref{sec:prelim}, \cqa{$Q$} is false if and only if
$A_Q(I) = \set{}$; hence, as a corollary of the theorem we obtain:

\begin{corollary}
If $G[Q]$ is strongly connected, \cqa{$Q$} is in PTIME.
\end{corollary}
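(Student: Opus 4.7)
The plan is to compute the compression $A_Q(I)$ directly from a structural analysis of the instance graph $F_Q(I)$, exploiting the fact that strong connectivity of $G[Q]$ forces every satisfying assignment of $Q^f$ to map into a strongly connected subgraph of $F_Q(I)$. I aim to decompose $\mM_Q(I)$ into independent pieces indexed by certain ``blocks'' of tuples in $F_Q(I)$, with each block contributing a single or-set to $A_Q(I)$.

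First, I would work with a purified instance $I$ and consider \emph{witness subgraphs}: the image of $Q$ under any homomorphism $\theta$ satisfying $Q^f$ is a strongly connected subgraph of $F_Q(I)$. I would then define a \emph{block} $B$ as a maximal set of tuples that are transitively related by co-occurrence in a common witness of $Q^f(I)$. I claim that (a) the blocks partition the set of tuples occurring in $Q^f(I)$, (b) constants in distinct blocks are disjoint coordinate-wise, which follows from purification together with the assumption that unjoined attribute positions have disjoint domains, and (c) a frugal repair's choices in one block are independent of its choices in any other. Granting these properties, the candidate compression is $A_Q(I) = \setof{\orset{Q^f(r) \cap B \,:\, r \text{ frugal}}}{B \text{ a block}}$, and representability follows directly from \autoref{def:representable}.

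The heart of the argument lies in (c) and in showing that, within each block, the per-block contributions of frugal repairs form a single or-set. Here strong connectivity is essential: choosing one representative tuple inside a key-group of a block forces a cascade of coordinated representatives around the directed cycles of $G[Q]$, because every vertex lies on a directed cycle through every other vertex. This cascade terminates in a well-defined per-block answer, and the or-set enumerates the answers produced by all possible initial choices. I expect the main obstacle to be verifying that frugality genuinely selects exactly these cascade-consistent repairs and no ``mixed'' ones; this should follow from a witness-swap argument showing that any non-cascading choice can be replaced by a cascading one whose $Q^f$-output is strictly smaller, contradicting frugality. Once the blocks and per-block or-sets are obtained (via connected components of an auxiliary witness graph), the polynomial size bound follows from \autoref{prop:compression:size}, yielding the claimed polynomial-time computation of $A_Q(I)$.
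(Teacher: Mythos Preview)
Your block decomposition, as stated, need not produce coordinate-disjoint or-sets. Take $Q = C_3 = R(\ux,y),S(\uy,z),T(\uz,x)$ and the purified instance with tuples $R(a,b_1)$, $S(b_1,c_1)$, $T(c_1,a)$, $R(a,b_2)$, $S(b_2,c_2)$, $T(c_2,a)$. The only two witnesses $(a,b_1,c_1)$ and $(a,b_2,c_2)$ share no database tuple, so your co-occurrence relation places them in separate blocks; yet both blocks carry the constant $a$ at the $x$-coordinate, so claim~(b) fails and the resulting object is not a valid compression in the sense of \autoref{def:representable}. The correct compression here is a \emph{single} or-set $\set{\orset{(a,b_1,c_1),(a,b_2,c_2)}}$. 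The fix is not merely to merge blocks that share a key-group; for cycles the right equivalence is ``lie in the same SCC of the instance graph $F_{C_k}(I)$'' (\autoref{lemma:purified_scc} and \autoref{lemma:many:cycles}), which in this example collapses everything to one component.

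Even granting a repaired block definition, the proposal is missing the algorithmic core. For cycle queries the per-block or-set is exactly the set of $k$-cycles in that SCC, and the crux (\autoref{lemma:single:cycle}) is the explicit construction, for any cycle $C$ in the instance graph, of a repair $r_C$ whose only cycle is $C$; this simultaneously identifies the ``bad'' blocks (those containing a cycle of length $>k$) and proves the singleton property on the good ones. Your cascade/witness-swap sketch does not supply this construction, and more importantly offers no analogue for a general strongly connected $G[Q]$: what replaces ``has a cycle of length $>k$'' as the bad-block test, and how do you enumerate the per-block or-set in polynomial time? The paper does not attempt a direct structural criterion. Instead it takes a chordal-path decomposition $G_0 \subseteq \cdots \subseteq G_m = G[Q]$ with $G_0$ a simple cycle (\autoref{lem:scg_decomposition}), computes $A_{G_0}(I)$ by the cycle algorithm, and then, for each added chordal path, \emph{encodes} the current compression $A_{G_{i-1}}(I)$ as a fresh inconsistent binary relation, thereby reducing $G_i$ to a new cycle-plus-chord query. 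This encoding step (the procedure \textsc{FrugalChordPath}) is what makes each stage polynomial; without it, or a genuine replacement, your plan does not yield a PTIME algorithm beyond the cycle case.
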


We start in~\autoref{subsec:cycles} by proving \autoref{th:scc} in the
special case when $G[Q]$ is a directed cycle; we prove the
general case in~\autoref{subsec:scc}.

\subsection{A PTIME Algorithm for Cycles}
\label{subsec:cycles}

For any $k\geq 2$, the cycle query $C_k$ is defined as:
\begin{align*}
  C_k = R_{1}(\ux_{1},x_{2}), R_{2}(\ux_{2},x_{3}), \dots,  R_{k}(\ux_{k}, x_{1})
\end{align*}
Wijsen~\cite{Wijsen13} describes a PTIME algorithm for computing
\cqa{$C_2$}.  We describe here a PTIME algorithm for computing
$A_{C_k}(I)$ (and thus for computing \cqa{$C_k$} for arbitrary $k
\geq 2$ as well), called \textsc{FrugalC}. 

\begin{figure}[tb]
  \centering
  {{ \renewcommand\tabcolsep{5pt}
        \begin{tabular}{|c|c|c|c|c|}
            \multicolumn{1}{l}{$R(\ux,y)$} & \multicolumn{1}{l}{ } & 
            \multicolumn{1}{l}{$S(\uy, z)$} & \multicolumn{1}{l}{ } & 
            \multicolumn{1}{l}{$T(\uz,x)$}  \\
            
            \cline{1-1}\cline{3-3} \cline{5-5}                                         
             $(a_1, b_1)$ &  & $(b_1, c_1)$ &  & $(c_1, a_1)$   \\
             $(a_1, b_2)$ &  & $(b_2, c_1)$ &  &     \\
             $(a_2, b_2)$ &  & $(b_2, c_2)$ &  & $(c_2, a_2)$    \\
                  &  &  &  & \\
             $(a_3, b_3)$ &  & $(b_3, c_3)$ &  & $(c_3, a_3)$  \\
             $(a_3, b_4)$ &  & $(b_4, c_4)$  &  & $(c_4, a_3)$ \\
             $(a_4, b_4)$ &  & $(b_4, c_3)$ &  & $(c_3, a_4)$   \\
               
            \cline{1-1}\cline{3-3} \cline{5-5} 
          \end{tabular}}}
\caption{{\footnotesize An inconsistent purified instance $I$ for $C_3$.}}
\label{fig:cycle_instance}
\end{figure}

\begin{lemma} 
\label{lemma:purified_scc}
Let $I$ be a purified instance relative to $C_k$. Then, the instance
graph $F_{C_k}(I)$ is a collection of disjoint SCCs such that every
edge has both endpoints in the same SCC.
\end{lemma}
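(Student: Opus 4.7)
The proof hinges on the purification hypothesis: every tuple must participate in some witnessing homomorphism for $Q^f$, and for a cycle query such a witness is itself a directed cycle in the instance graph. Once this is made precise, strong connectivity of endpoints follows immediately.

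My plan is as follows. Take any edge of $F_{C_k}(I)$; it arises from some tuple $R_i(\underline{a},b)\in I$. Since $I$ is purified, $(a,b)\in\Pi_{\text{attr}(R_i)}(Q_{C_k}^f(I))$, so there exist values $c_1,\dots,c_k$ with $c_i=a$ and $c_{i+1}=b$ (indices taken modulo $k$) such that
\[
R_1(\underline{c_1},c_2),\; R_2(\underline{c_2},c_3),\;\dots,\; R_k(\underline{c_k},c_1)\in I.
\]
These $k$ tuples contribute edges $c_1\to c_2\to\cdots\to c_k\to c_1$ in $F_{C_k}(I)$, i.e., a directed cycle passing through the edge $(a,b)$. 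Hence $a$ and $b$ lie on a common directed cycle, so they belong to the same strongly connected component.

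From this pointwise conclusion the structural statement drops out: the SCCs of $F_{C_k}(I)$ always partition the vertex set, and we have just shown that no edge crosses from one SCC to another, so $F_{C_k}(I)$ decomposes as a disjoint union of its SCCs with no inter-component edges, which is what the lemma asserts.

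The only thing one has to be careful about is the domain disjointness assumption adopted just before the instance graph was defined. Because the cycle query joins position $2$ of $R_{j-1}$ with position $1$ of $R_j$ (and with nothing else), the values appearing at positions corresponding to distinct variables $x_1,\dots,x_k$ live in pairwise disjoint domains. This ensures that each node of $F_{C_k}(I)$ has a well-defined ``variable type'', and that the cycle witness constructed above is a genuine directed cycle in the instance graph rather than a walk that accidentally identifies nodes from different positions; no collapsing obscures the strong-connectivity argument. I do not expect any real obstacle beyond keeping this domain-typing discipline straight.
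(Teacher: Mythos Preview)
Your proof is correct and follows essentially the same argument as the paper: pick an arbitrary edge $(a,b)$ of the instance graph, invoke purification to place that tuple on a witnessing $k$-cycle, and conclude that $a,b$ lie in the same SCC. The paper's version is simply terser (it states ``$(u,v)$ must belong in a cycle'' without unpacking the homomorphism), whereas you spell out the witness explicitly; your extra paragraph on domain disjointness is not actually needed for correctness---a closed directed walk through $(a,b)$ already gives a directed path $b\leadsto a$ regardless of whether some $c_i$ coincide---but it does no harm.
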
 

\begin{proof}
  Let $(u,v)$ be a directed edge in the graph. Since $I$ is purified,
  $(u,v)$ must belong in a cycle and thus there exists a directed path
  $v \rightarrow u$, implying that $u,v$ are in the same SCC.
\end{proof}    

\introparagraph{Algorithm} Fix $k \geq 2$.  The algorithm
\textsc{FrugalC} takes as input a purified instance $I$ and returns
the compression $A_{C_k}(I)$ of $\mM_{C_k}(I)$, in four steps:

\begin{packed_enum}
\item \label{item:alg:1} Compute the SCCs of $F_{C_k}(I)$: $F_{C_k}(I) = F_1 \cup \ldots \cup F_m$,
where each $F_i$ is an SCC, and there are no edges between $F_i, F_j$
for $i\neq j$.
\item \label{item:alg:2} Compute $S = \setof{i}{F_i \mbox{ has no cycle of length } >k}$.
\item \label{item:alg:3} For each $i \in S$, define the or-set: $A_i =  \orset{(a_1, \dots, a_k) \mid a_1, \dots, a_k \mbox{ cycle in $F_i$}}$.
\item \label{item:alg:4} Return: $\setof{A_i}{i \in S}$.
\end{packed_enum}

Step \ref{item:alg:1} is clearly computable in PTIME.  In Step
\ref{item:alg:2}, we remove all SCC's $F_i$ that contain a cycle of
length $> k$: to check that, enumerate over all simple paths of length
$k+1$ in $F_i$ (there are at most $O(n^{k+1})$), and for each path
$u_0, u_1, u_2, \ldots, u_k$ check whether there exists a path from
$u_k$ to $u_0$ in $F_i - \set{u_1, \ldots, u_{k-1}}$.  After Step
\ref{item:alg:2}, if $i \in S$, then every cycle in $F_i$ has length
$k$, and every edge is on a $k$-cycle (because $I$ is purified).  Step
\ref{item:alg:3} constructs an or-set $A_i$ consisting of all
$k$-cycles of $F_i$ (there are at most $O(n^k)$).  The last step
returns the set of all or-sets $A_i$: this is a correct representation
(\autoref{def:representable}) because no two or-sets $A_i, A_j$ have
any common constants (since they represent cycles from different
SCC's).  We will prove in the rest of the section that $A_{C_k}(I) =
\setof{A_i}{i \in S}$, and therefore the algorithm correctly computes
$A_{C_k}(I)$.  Note that $I \vDash C_k$ iff $A_{C_k}(I)= \set{}$ iff
$S=\emptyset$.

%
%


\begin{figure}[tb]
  \centering
\resizebox{7cm}{!}{\input 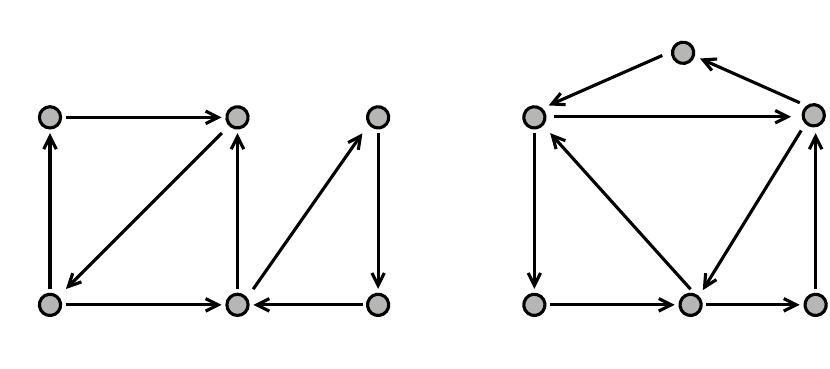_t}
\caption{{\footnotesize{The graph $F_{C_3}(I)$ for the instance in \autoref{fig:cycle_instance} has two SCC's, $F_1$ and $F_2$.}}}
\label{fig:cycle_i0}
\end{figure}

\begin{example} We illustrate the algorithm on $C_3 = R(\ux,y), S(\uy,
  z), T(\uz,x)$.  Consider the relations $R,S,T$ of the instance $I$
  in \autoref{fig:cycle_instance} and its graph $F_{C_3}(I) = F_1 \cup
  F_2$ shown in \autoref{fig:cycle_i0}.  The SCC $F_1$ contains only
  cycles of length 3: $(a_1,b_1,c_1)$, $(a_1,b_2,c_1)$ and $(a_2,
  b_2,c_2)$, whereas $F_2$ contains a cycle\footnote{Notice that {\em
      every} edge in $F_2$ is on some cycle of length 3 (since $I$ is
    purified), yet $F_2$ also contains a cycle of length
    6.} of length 6: $(a_3, b_3, c_3,
  a_4, b_4, c_4)$. Therefore the algorithm returns a set consisting of
  a single or-set:
$$A_{C_3}(I) = \{\orset{(a_1, b_1, c_1), (a_1, b_2, c_1), (a_2, b_2, c_2)}\}$$
\end{example}

It remains to show that the algorithm is correct, and this follows
from two lemmas.  Recall from \autoref{subsec:orset} that
$\mF_{C_k}(I)$ denotes the or-set of frugal repairs of $I$ for $C_k$.
Assuming $I$ is a purified instance, let $I = I_1 \cup I_2 \cup \ldots
\cup I_m$, where each $I_i$ corresponds to some SCC of $F_{C_k}(I)$.

\begin{lemma} 
\label{lemma:many:cycles}
  $\mF_{C_k}(I) = \orset{r_1 \cup \ldots \cup r_m | r_1 \in \mF_{C_k}(I_1), \ldots, r_m \in \mF_{C_k}(I_m)}$
\end{lemma}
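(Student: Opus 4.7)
The plan is to exploit the fact that since $I$ is purified, Lemma \ref{lemma:purified_scc} guarantees that every tuple/edge of $F_{C_k}(I)$ lies entirely within a single SCC $F_i$, so the tuples of $I$ partition as $I = I_1 \sqcup \cdots \sqcup I_m$ on disjoint sets of constants. The first observation I would make is that because distinct $I_i$'s share no constants, they also share no keys, so a subset $r \subseteq I$ satisfies the key constraints iff each $r_i := r \cap I_i$ does, and $r$ is a maximal such subset iff each $r_i$ is; consequently the repairs of $I$ are exactly the disjoint unions $r_1 \cup \cdots \cup r_m$ with $r_i$ a repair of $I_i$.

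Next I would observe that every satisfying tuple of the full query $C_k^f$ corresponds to a directed $k$-cycle in $F_{C_k}(I)$, and such a cycle lives in a single SCC by Lemma \ref{lemma:purified_scc}. Hence for any repair $r = r_1 \cup \cdots \cup r_m$ we have the disjoint decomposition
\begin{align*}
C_k^f(r) \;=\; \bigsqcup_{i=1}^m C_k^f(r_i),
\end{align*}
and the pieces $C_k^f(r_i)$ use only constants from $I_i$, so in particular they are mutually disjoint.

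With these two facts, the lemma reduces to showing that $r = r_1 \cup \cdots \cup r_m$ is frugal for $I$ iff each $r_i$ is frugal for $I_i$. For the ``if'' direction, if $r$ were not frugal there would be a repair $r' = r_1' \cup \cdots \cup r_m'$ with $C_k^f(r') \subsetneq C_k^f(r)$; by disjointness of the pieces across SCCs this forces $C_k^f(r_i') \subseteq C_k^f(r_i)$ for every $i$ with strict inclusion for some $i$, contradicting frugality of $r_i$. For the ``only if'' direction, if some $r_i$ is not frugal, witnessed by $r_i'$ with $C_k^f(r_i') \subsetneq C_k^f(r_i)$, then replacing $r_i$ by $r_i'$ in $r$ produces a repair of $I$ whose $Q^f$ is strictly smaller than $C_k^f(r)$, contradicting frugality of $r$.

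The argument is mostly bookkeeping; the only step that uses anything nontrivial is the decomposition $C_k^f(r) = \bigsqcup_i C_k^f(r_i)$, which relies squarely on Lemma \ref{lemma:purified_scc} (purification plus the cycle-shape of $C_k$ is what forces every witness of $C_k^f$ to live inside one SCC). I do not expect a real obstacle beyond being careful that repairs do indeed decompose componentwise, which follows from the SCCs being on disjoint vertex sets.
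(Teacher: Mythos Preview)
Your proposal is correct. The paper actually omits the proof of this lemma entirely, stating it and immediately paraphrasing it (``the frugal repairs of $I$ are obtained by choosing, independently, a frugal repair $r_i$ for each SCC $I_i$, then taking their union'') without further justification. Your argument is exactly the natural way to fill in the details: use \autoref{lemma:purified_scc} to get the disjoint decomposition $I = I_1 \sqcup \cdots \sqcup I_m$ over disjoint constants, deduce that repairs split componentwise, observe that every witness of $C_k^f$ is a closed walk of length $k$ and hence lies in a single $I_i$ so that $C_k^f(r) = \bigsqcup_i C_k^f(r_i)$, and then run the two-direction frugality argument by restricting/replacing one component at a time. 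One cosmetic remark: a witness of $C_k^f$ is a priori a closed walk of length $k$ rather than a simple $k$-cycle, but this does not affect your argument since every edge of $I$ already lies inside one SCC.
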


In other words, the frugal repairs of $I$ are obtained by choosing,
independently, a frugal repair $r_i$ for each SCC $I_i$, then taking
their union.


\begin{lemma} \label{lemma:single:cycle}
Let $I$ be a purified instance relative to $C_k$, such that $F_{C_k}(I)$ is strongly connected. Then:
\begin{equation*}
\mM_{C_k}(I) = 
\begin{cases} \orset{\set{}} \quad \text{if $I$ has a cycle of length $>k$,} \\
\orset{ \set{(a_1, \dots, a_k)} \mid a_1, \dots, a_k \text{ cycle in $F_{C_k}(I)$}}  \quad \text{ otherwise}
\end{cases}
\end{equation*}
\end{lemma}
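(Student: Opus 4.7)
The plan is to work through the instance graph $F = F_{C_k}(I)$. Since $I$ is purified and $F$ is strongly connected, every vertex of $F$ has at least one out-edge, and a repair of $I$ corresponds bijectively to a functional subgraph of $F$: a function $f:V(F)\to V(F)$ with $(a,f(a))\in E(F)$, obtained by choosing one tuple from each key-group. Under this bijection $Q^f(r)$ is exactly the set of $k$-cycles of $f$. Moreover, because edges of $F$ always cross between consecutive domains $x_j \to x_{j+1}$, every simple cycle of $F$ has length divisible by $k$; hence a cycle of length $>k$ is automatically of length $\geq 2k$.

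The main construction I would carry out is a layered one producing, given any simple cycle $C$ of $F$, a repair $r_C$ whose functional subgraph has $C$ as its unique cycle. Define $f(a)$ to be the successor of $a$ along $C$ when $a\in C$. For $a\notin C$, let $d(a)$ be the length of the shortest directed path in $F$ from $a$ to $C$; this is finite and positive by strong connectivity. Pick $f(a)$ to be an out-edge of $a$ into some vertex at distance $d(a)-1$, which exists by the definition of shortest path. Since $d$ decreases strictly along $f$-edges outside $C$, every $f$-trajectory is eventually absorbed into $C$ and loops there, so $C$ is the only cycle of $f$. Note that all out-edges of a vertex in domain $x_j$ belong to the same relation $R_j$, so this bookkeeping is automatic.

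With this tool in hand, both cases of the lemma follow. For the first case, choose a cycle $C^*$ of $F$ with $|C^*|\geq 2k$; then $r_{C^*}$ has $Q^f(r_{C^*})=\emptyset$, which is trivially a frugal answer, so $\mM_{C_k}(I)=\orset{\set{}}$. For the second case, every simple cycle of $F$ has length exactly $k$, so every functional subgraph contains at least one $k$-cycle (any $f$-iteration in a finite graph must eventually loop, and the resulting loop is a cycle in $F$, hence a $k$-cycle). Thus no repair has empty answer, and for each $k$-cycle $C$ the repair $r_C$ gives $Q^f(r_C)=\{C\}$, a minimal nonempty answer, hence frugal. Conversely, if a repair $r$ has $|Q^f(r)|\geq 2$, pick any cycle $C\in Q^f(r)$: then $Q^f(r_C)=\{C\}\subsetneq Q^f(r)$ witnesses that $r$ is not frugal. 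So the frugal answers are exactly the singletons $\{C\}$ over the $k$-cycles of $F$, matching the stated or-set.

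The main obstacle I anticipate is the layered construction and verifying its correctness: that $d$ is defined on every vertex (which uses strong connectivity of $F$), that an out-edge realizing $d(a)-1$ is always present, and that strict descent of $d$ precludes any cycle of $f$ outside $C$. Once that is in hand, the rest is a routine frugality comparison using the singleton answers produced by the construction.
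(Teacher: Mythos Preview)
Your proposal is correct and follows the same overall strategy as the paper: reduce both cases to the key claim that for any cycle $C$ in $F_{C_k}(I)$ there is a repair $r_C$ whose only cycle is $C$, then read off the frugal answers. The one genuine difference is in how $r_C$ is built. The paper grows $r_C$ incrementally: starting from $r_0=C$, it repeatedly uses strong connectivity to find an edge from $V(I)\setminus K(r_i)$ into $K(r_i)$ and adds it, maintaining the invariants that $r_i$ is consistent, $V(r_i)=K(r_i)$, and $C$ is its unique cycle. Your construction instead defines the functional subgraph globally via shortest-path distances to $C$, so that $d$ strictly decreases along $f$ outside $C$ and no new cycle can form. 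Both arguments rely on strong connectivity in the same essential way (the paper to guarantee the next edge into $K(r_i)$, you to guarantee $d$ is finite everywhere), and both yield the same conclusion; your distance-based construction is a bit more direct, while the paper's inductive version makes the ``no new cycle'' invariant explicit at each step.
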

 
The lemma says two things.  On one hand, if $I$ has a cycle of length
$>k$, then $I \not\vDash C_k$.  Consider the case when all cycles in
$I$ have length $k$.  In general, if $r$ is a minimal repair, then the
full query $C_k^f(r)$ may return any nonempty set of $k$-cycles.  The
lemma states that if $r$ is a frugal repair, then $C_k^f(r)$ returns
{\em exactly one} $k$-cycle, and, moreover, that every $k$-cycle is
returned on some frugal repair $r$.

\begin{proof}
  To simplify the notation, we denote $F_{C_k}(I)$ by $F(I)$.  The
  lemma follows from the following claim: for any cycle $C$ in $F(I)$,
  there exists a repair $r_C \subseteq I$ such that $F(r_C)$ contains
  only $C$ as a cycle.  Indeed, if $I$ has some cycle $C$ of length
  $>k$, then the query $C_k$ is false on $r_C$, proving that
  $I\not\vDash C_k$; otherwise, for every cycle $C$ of length $k$,
  $C_k^f(r_C)$ returns only that cycle, and therefore $\mM_Q(I)$ is an
  or-set of singleton sets of the form $\set{C}$, for every $k$-cycle
  $C$.  Thus, it remains to prove the claim.

  Any subset $r \subseteq I$ represents a subset of edges of $F(I)$.
  Denote $V(r)$ the set of constants in $r$, and denote $K(r)$ the set
  of constants that occur in key positions in $r$, i.e. $K(r) =
  \setof{a}{\exists R(\underline{a},b) \in r}$; obviously, $K(r)
  \subseteq V(r)$.  Note that $K(I)=V(I)$ because $I$ is purified.

  To prove the claim, fix a cycle $C$, and define a strictly
  increasing sequence of instances $r_0 \subset r_1 \subset \ldots
  \subset r_\ell \subseteq I$ such that for every $i$: (1) $r_i$ is
  consistent (i.e. it satisfies all key constraints), (2) $V(r_i) =
  K(r_i)$, (3) $r_i$ contains only $C$ as a cycle.  In addition,
  $r_\ell$ is a repair (\autoref{def:repair}).  Then, the claim
  follows by setting $r_C = r_\ell$.

  We start the sequence by setting $r_0 = C$.  Clearly $r_0$ satisfies
  all key constraints 
   and $K(r_0)=V(r_0)$. 
%
%
  Now, consider some $r_i$ for $i\geq 0$.  If $K(r_i) = V(I)$ then
  $r_i$ is a repair (\autoref{def:repair}) and we stop, setting
  $\ell=i$.  Otherwise, let $V'= V(I)-K(r_i)$.  Since $F(I)$ is
  strongly connected, there exists an edge from $V'$ to $K(r_i)$, in
  other words, there exists a tuple $R(\underline{a}, b)$ such that
  $a\in V'$ and $b \in K(r_i)$.  Define $r_{i+1} = r_i \cup
  \set{R(\underline{a}, b)}$.  We check the three properties. (1)
  $r_{i+1}$ is consistent, because $a$ did not occur as a key in
  $r_i$.  (2) $V(r_{i+1}) = V(r_i) \cup \set{a}$ and $K(r_{i+1})=
  K(r_i) \cup \set{a}$; by induction we conclude $V(r_{i+1})=
  K(r_{i+1})$.  (3) Let $C'\neq C$ be a cycle in $r_{i+1}$, then $C'$
  must include the new edge $(a,b)$ (since $r_i$ has only $C$ as
  cycle).  Then the preceding edge $(c,a)$ must be in $r_i$, which is
  a contradiction because $a \not\in K(r_i) = V(r_i)$.
\end{proof} 

We now apply the two lemmas to prove the correctness of the algorithm.
\autoref{lemma:single:cycle} implies that, if $I$ is strongly
connected and has no cycle of length $>k$, then $\mM_{C_k}(I)$ is
represented by $A_{C_k}(I) = \set{\orset{(a_1, \ldots, a_k)|a_1,
    \ldots, a_k \mbox{ a cycle in } F_{C_k}(I)}}$; and if $I$ has a
cycle of length $>k$ then $A_{C_k}(I)= \set{}$.
\autoref{lemma:many:cycles} implies that, if $I$ has $m$ SCC's $I=I_1
\cup \ldots \cup I_m$, then $A_{C_k}(I) = A_{C_k}(I_1) \cup \ldots
A_{C_k}(I_m)$.  This completes the correctness proof of the the
algorithm.

We conclude this section with an observation
on FO-expressibility.  Recall that~\cite{Wijsen10} proves that the \cqa{$C_2$} is not first-order (FO)-expressible. The following proposition completes the complexity landscape for cycle queries.

\begin{proposition}
\label{prop:fo_expressible}
For a cycle query $C_k$ ($k>1$), \cqa{$C_k$} is FO-expressible if and only if $C_k$ contains at most one inconsistent edge.
\end{proposition}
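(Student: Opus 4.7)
The plan is to prove the two implications separately, with the ``only if'' direction being the substantive one. For the forward direction, assume $C_k$ has at most one inconsistent edge. If all edges are consistent, then \cqa{$C_k$} coincides with ordinary evaluation of $C_k$ and is FO by definition. If exactly one edge is inconsistent, say $R_1$ after rotating the cycle indices, then the remaining relations $R_2, \ldots, R_k$ are each functional (their sole non-key attribute is determined by the key), so starting from any value $b$ the chain $R_2(\underline{b}, c_2), R_3(\underline{c_2}, c_3), \ldots, R_k(\underline{c_{k-1}}, c_k)$ is uniquely determined when it exists. I would argue that $I \vDash C_k$ iff there exists a key value $a$ of $R_1$ such that every tuple $R_1(\underline{a}, b) \in I$ chains all the way back to $a$ through the functional relations. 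The forward implication holds because every repair must include \emph{some} tuple at key $a$, and any such choice closes the cycle; the converse follows by constructing a bad repair that selects, for each key $a$, a continuation $b$ whose deterministic chain misses $a$. This characterization translates directly into the FO sentence
\[
\exists a\, \forall b\, \bigl(R_1(\underline{a}, b) \to \exists c_2 \ldots c_{k-1}\,\bigl[R_2(\underline{b}, c_2) \land \cdots \land R_k(\underline{c_{k-1}}, a)\bigr]\bigr).
\]

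For the backward direction, suppose $C_k$ has two inconsistent edges $R_i$ and $R_j$. The plan is to reduce \cqa{$C_2$} to \cqa{$C_k$} via an FO interpretation and invoke~\cite{Wijsen10}, which establishes that \cqa{$C_2$} is not FO-expressible. Given any instance $J$ for $C_2 = R(\ux, y), S(\uy, x)$ (with both relations inconsistent), I would construct an instance $I_{C_k}$ over the schema of $C_k$ by placing (renamed) copies of $R^J$ and $S^J$ at the positions $R_i$ and $R_j$ respectively, and populating each consistent relation $R_\ell^{I_{C_k}}$ with the identity on the active domain, $\{(\underline{a}, a) : a \in \mathrm{adom}(J)\}$. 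Since identity edges force equalities between consecutive cycle variables, the length-$k$ cycle collapses to the length-$2$ cycle on $R_i, R_j$; repairs of $I_{C_k}$ are in bijection with repairs of $J$, and $I_{C_k} \vDash C_k$ iff $J \vDash C_2$. The construction is manifestly FO-definable from $J$.

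The main obstacle will be making the collapse argument airtight. In particular, I would need to verify that the identity padding does not introduce spurious $k$-cycles beyond those corresponding to $2$-cycles on $R_i, R_j$ (any $k$-cycle in $I_{C_k}$ must traverse each identity edge, forcing the chain of values to be constant between an inconsistent edge and the next), and that the argument is insensitive to how $R_i$ and $R_j$ are interleaved with the consistent edges around the cycle. Both facts follow from the observation that identity joins contribute only trivial equalities. Once the reduction is established, a standard pullback yields the contradiction: an FO sentence $\varphi$ defining \cqa{$C_k$} could be composed with the FO interpretation to produce an FO sentence defining \cqa{$C_2$}, contradicting~\cite{Wijsen10} and completing the proposition.
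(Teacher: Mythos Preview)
The paper states this proposition as a closing remark of \autoref{subsec:cycles} and supplies no proof, so there is nothing to compare against directly; your argument is essentially the natural one a reader would reconstruct from the surrounding material and from the cited result of~\cite{Wijsen10}. The overall strategy for both directions is sound. Two small issues are worth flagging.

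In the forward direction, the displayed FO sentence is not quite right: under active-domain semantics, any value $a$ in the active domain that does \emph{not} occur as a key of $R_1$ makes the universal $\forall b\,(R_1(\underline a,b)\to\cdots)$ vacuously true, so the sentence can hold even when $I\not\vDash C_k$ (take, e.g., $I=\{R_1(\underline{1},2)\}$ with $R_2,\ldots,R_k$ empty and $a=2$). You need the extra conjunct $\exists b_0\, R_1(\underline a,b_0)$ inside the scope of $\exists a$ to force $a$ to be an actual key of $R_1$; with that fix the prose characterization you gave does translate correctly into FO.

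In the backward direction you assume $C_k$ has two inconsistent edges $R_i,R_j$ and then populate only the \emph{consistent} relations with the identity. If $C_k$ has three or more inconsistent edges, the remaining inconsistent relations are left undefined by your construction. The fix is immediate---populate those with the identity as well, so they carry no key violation and the repairs of $I_{C_k}$ remain in bijection with the repairs of $J$---but it should be stated explicitly for the reduction to go through in general.
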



\subsection{A PTIME Algorithm for SCGs}
\label{subsec:scc}

We now present the general algorithm that computes the compression
$A_Q(I)$ for any strongly connected query $Q$.  The algorithm uses the
following  decomposition of the query graph $G[Q]$.

Let $G = G[Q]$ be a query graph and $G_0 \subseteq G$ be subgraph.  A
{\em chordal path} for $G_0$ is a simple, non-empty\footnote{Recall
  that, when $u=v$, then a simple, non-empty path from $u$ to $u$ is a
  cycle.} path $P : u \leadsto v$ s.t. $G_0 \cap P = \set{u,v}$.  
If $P$ consists of a single edge then we call it a {\em chord}.  With
some abuse, we apply the same terminology to queries: if the query $Q$
can be written as $Q_0, P$, where $Q_0$ and $P$ are sets of atoms s.t
$P$ is a simple path\footnote{Meaning that $P =
  R_1(\underline{u},x_1), R_2(\underline{x_1},x_2), \ldots,
  R_m(\underline{x_{m-1}},v)$, all variables $u, x_1, \ldots, x_{m-1}$
  are distinct, and all variables $x_1, \ldots, x_{m-1}, v$ are
  distinct.}  from $u$ to $v$, then we say that $P$ is a chordal path
for $Q_0$ if they share only the variables $u,v$.

%

\begin{lemma} [Chordal Path Decomposition]
\label{lem:scg_decomposition}
Let $G = G[Q]$ be strongly connected. Then there exists a sequence
$G_0 \subseteq \dots \subseteq G_m = G$ of subgraphs of $G$ such that
\begin{packed_enum}
\item $G_0$ is a simple cycle
\item For every $i=1, m$, $G_i = G_{i-1} \cup P_i$, where $P_i$ is a
  chordal path of $G_{i-1}$.
\end{packed_enum}
\end{lemma}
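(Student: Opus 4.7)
\medskip

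\noindent\textbf{Proof plan.} The plan is to build the decomposition by a greedy induction: start with any simple cycle of $G$, and at each step attach a shortest ``ear'' that reaches out of the current subgraph and comes back. This is essentially the classical ear decomposition of a strongly connected digraph, recast in the language used by the paper.

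\medskip

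\noindent\textbf{Base case.} First I would handle the initial cycle. Since $G$ is strongly connected and, by assumption in this section, contains at least one edge (otherwise the statement is trivial), pick any edge $e = (u,v) \in E(G)$. By strong connectedness there is a directed path $P : v \leadsto u$ in $G$; concatenating $e$ with $P$ yields a closed directed walk, from which we extract a simple directed cycle $G_0 \subseteq G$. This serves as the base of the decomposition.

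\medskip

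\noindent\textbf{Inductive step.} Assume $G_{i-1} \subsetneq G$ has been constructed. I want to find a chordal path $P_i : u \leadsto v$ for $G_{i-1}$ so that $G_i := G_{i-1} \cup P_i$ is a strictly larger subgraph of $G$. I distinguish two cases. \emph{Case 1.} If there exists an edge $e = (u,v) \in E(G) \setminus E(G_{i-1})$ with both endpoints in $V(G_{i-1})$, then $e$ itself is a chord, hence a (length-one) chordal path. \emph{Case 2.} Otherwise, since $G_{i-1} \subsetneq G$, there must exist a vertex of $G$ outside $V(G_{i-1})$; by strong connectedness, there is an edge $(u, x_1) \in E(G)$ with $u \in V(G_{i-1})$ and $x_1 \notin V(G_{i-1})$. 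Again by strong connectedness, $x_1$ can reach $V(G_{i-1})$ via a directed path in $G$; choose a \emph{shortest} such path $x_1, x_2, \ldots, x_k = v$ with $v \in V(G_{i-1})$. Minimality forces the intermediate vertices $x_2, \ldots, x_{k-1}$ to lie outside $V(G_{i-1})$ (otherwise one could truncate), and they are pairwise distinct since the path is a shortest path. Therefore $P_i : u, x_1, \ldots, x_{k-1}, v$ is a simple non-empty directed path with $P_i \cap G_{i-1} = \{u, v\}$, i.e.\ a chordal path.

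\medskip

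\noindent\textbf{Termination.} Each step either adds a new edge (Case 1) or at least one new vertex together with the incident edges on the ear (Case 2), so $G_{i-1} \subsetneq G_i$. Since $G$ is finite, the process terminates with some $G_m$ for which both cases fail; the failure of both cases means $V(G_m) = V(G)$ and $E(G_m) = E(G)$, i.e.\ $G_m = G$. The main (very small) obstacle is the Case~2 argument: one must verify simultaneously that the chosen path is simple, non-empty, and meets $G_{i-1}$ exactly at its two endpoints, and the shortest-path choice is precisely what makes all three properties hold at once.
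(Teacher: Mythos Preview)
Your proof is correct and follows essentially the same ear-decomposition strategy as the paper's own proof. The paper's version is slightly terser: it does not split into your Case~1/Case~2 but instead directly picks any edge $e = (u,v) \in E(G)\setminus E(G_{i-1})$ with $u \in V(G_{i-1})$, then appends a simple path from $v$ back into $V(G_{i-1})$ (empty if $v$ is already there); your shortest-path device makes the simplicity and the ``meets $G_{i-1}$ only at the endpoints'' property explicit where the paper just asserts them.
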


\begin{proof}
  We construct $G_i$ inductively.  Let $G_0$ be any simple cycle in
  $G$ (there exists one, since $G$ is strongly connected).  For $i
  \geq 1$, suppose $G_{i-1} \neq G$.  Since $G$ is strongly connected,
  there exists some edge $e_{R_0} = (u,v) \in E(G) \setminus
  E(G_{i-1})$ such that $u \in V(G_{i-1})$, and there exists a simple
  path $P'_i$ from $v$ to some node in $G_{i-1}$, $P'_i : v \leadsto w$,
  $w \in V(G_{i-1})$ (if $v \in V(G_{i-1})$, then $P'_i$ is empty and
  $w=v$).  Define $P_i = e_{R_0}, P'_i$ and $G_i= G_{i-1} \cup P_i$.
%
%
\end{proof}

\begin{example} \label{ex:h2}
  Consider the query $H_2 = R(\ux,y), S(\uy, z), T(\uz, x), U(\uy, t),
  V(\underline{t}, z)$.
The query admits the following  decomposition:
\begin{align*}
  G_0 = & G[Q_0] & \mbox{where } Q_0 = R(\ux,y), S(\uy, z), T(\uz, x) \\
  G_1 = & G_0 \cup P & \mbox{where } P = U(\uy, t), V(\underline{t}, z)
\end{align*}
%
\end{example}


Our algorithm for computing \cqa{$Q$} for an SCC $Q$ uses a chordal
path decomposition of $Q$ and applies the following two procedures.

\introparagraph{Procedure \textsc{FrugalChord}} Fix a query $Q$ of the form
$Q_0, R^c(\underline{u},v)$, where $R^c(\underline{u},v)$ is a chord
for $Q_0$.  The procedure \textsc{FrugalChord} takes as input an
instance $I$ and the compact representation $A_{Q_0}(I)$, and returns
the compact representation $A_Q(I)$.  The procedure simply returns
the set:
\begin{align}
A_Q(I) = \setof{A \in A_{Q_0}(I)}{\forall t \in A: (t[u], t[v]) \in R^c} \label{eq:chord}
\end{align}
In other words, the procedure computes a representation of $Q$ on $I$
by having access to a representation to $Q_0$ on $I$.  Correctness
follows from:

\begin{lemma}
\label{lem:cycle_edge_representable}
Let $Q \equiv Q_0, R^c(\underline{u},v)$ such that
$R^c(\underline{u},v)$ is a chord of $Q_0$.  If $\mM_{Q_0}(I)$ is
representable and its compression is $A_{Q_0}(I)$, then ,
$\mM_{G_{i+1}}(I)$ is also representable and its compression is given
by Eq.(\ref{eq:chord}).
\end{lemma}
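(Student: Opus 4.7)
The plan is to exploit two features of the setup: the chord $R^c$ is a consistent relation, so its extension is fixed in every repair; and both endpoints $u,v$ already lie in $V(G_0)$. Together these give $Q^f(r) = \setof{t \in Q_0^f(r)}{(t[u],t[v]) \in R^c}$ for every repair $r$, so the chord merely filters $Q_0^f$. Call an or-set $A \in A_{Q_0}(I)$ \emph{good} if every $t \in A$ satisfies $(t[u],t[v]) \in R^c$, and \emph{bad} otherwise; then the proposed compression in Eq.~(\ref{eq:chord}) is exactly the set of good or-sets. It inherits the disjoint-constants property of \autoref{def:representable} from $A_{Q_0}(I)$, so once I establish $\mM_Q(I) = \alpha(A_Q(I))$, representability of $\mM_Q(I)$ follows.

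For the inclusion $\alpha(A_Q(I)) \subseteq \mM_Q(I)$, take any $s \in \alpha(A_Q(I))$, which picks one tuple from each good or-set. For every bad or-set $B_j \in A_{Q_0}(I)$ pick a witness $w_j \in B_j$ with $(w_j[u],w_j[v]) \notin R^c$, and let $s' = s \cup \bigcup_j \set{w_j}$. Then $s' \in \alpha(A_{Q_0}(I)) = \mM_{Q_0}(I)$ by hypothesis, so $s' = Q_0^f(r)$ for some $Q_0$-frugal repair $r$, and by construction $Q^f(r) = s$. To argue that $r$ is also $Q$-frugal, suppose some repair $r''$ had $Q^f(r'') \subsetneq s$. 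Iteratively replace $r''$ by any repair with strictly smaller $Q_0^f$; this terminates at a $Q_0$-frugal $\tilde r$, and since $Q^f$ is a filter of $Q_0^f$ each step shrinks $Q^f$ only weakly, so $Q^f(\tilde r) \subseteq Q^f(r'') \subsetneq s$. But $Q_0^f(\tilde r) \in \mM_{Q_0}(I)$ selects one tuple per or-set; the tuples coming from good or-sets all survive the chord filter and are pairwise distinct by the disjoint-constants property, forcing $|Q^f(\tilde r)| \geq |\text{good or-sets}| = |s|$, which contradicts $Q^f(\tilde r) \subsetneq s$.

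For $\mM_Q(I) \subseteq \alpha(A_Q(I))$, let $s = Q^f(r)$ for a $Q$-frugal $r$. The same shrinking procedure gives a $Q_0$-frugal $\tilde r$, but now every step preserves $Q^f$ exactly (strict shrinkage would contradict $Q$-frugality of $r$), so $Q^f(\tilde r) = s$. Write $Q_0^f(\tilde r) = \set{t_i}$ with $t_i \in A_i$, one per or-set of $A_{Q_0}(I)$. Every $t_i$ from a good $A_i$ automatically lies in $s$, and the remaining step is to rule out that some $t_j$ from a bad $A_j$ lies in $s$: otherwise pick $t'_j \in A_j$ with $(t'_j[u],t'_j[v]) \notin R^c$, and by the representability hypothesis the modified selection $(Q_0^f(\tilde r) \setminus \set{t_j}) \cup \set{t'_j}$ is realised by some $Q_0$-frugal repair $r'$, giving $Q^f(r') = s \setminus \set{t_j} \subsetneq s$ and contradicting $Q$-frugality of $\tilde r$. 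Hence $s$ consists of exactly one tuple from each good or-set, i.e.\ $s \in \alpha(A_Q(I))$.

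The main obstacle is the swap/shrink machinery used in both directions: it depends on the hypothesis $\alpha(A_{Q_0}(I)) = \mM_{Q_0}(I)$ to freely re-select one tuple per or-set and always obtain a matching $Q_0$-frugal repair, while the $Q$-frugality of the witness repair is used to propagate the chord filter faithfully through the shrinkage. Once these moves are licensed, the good/bad dichotomy of or-sets reduces everything to bookkeeping.
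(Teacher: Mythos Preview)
Your proof is correct and follows essentially the same approach as the paper's: both directions hinge on the filtering identity $Q^f(r) = \setof{t \in Q_0^f(r)}{(t[u],t[v])\in R^c}$ together with a swap argument that uses the representability hypothesis to trade a tuple in a bad or-set for one violating the chord. Your version is more rigorous than the paper's sketch: where the paper simply asserts ``we can create a repair $r'$ that returns a strictly smaller answer set,'' you explicitly shrink to a $Q_0$-frugal repair first so that the representability hypothesis $\alpha(A_{Q_0}(I)) = \mM_{Q_0}(I)$ actually licenses the swap, and you justify $Q$-frugality via the cardinality count over good or-sets.
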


\begin{proof}
For the one direction, consider a frugal repair $r$ with answer set $Q^f(r)$. We need to show
that for any tuple $t \in Q^f(r)$, $t \in O_A$ for some or-set $O_A \in A_{Q_0}(I)$ such that
for all tuples $t' \in O_A$, $(t'[u], t'[v]) \in R^c$. Indeed, let $t' \in O_A$ be a tuple for which 
$(t'[u], t'[v]) \notin R^c$. Then, we can create a repair $r'$ that returns a strictly smaller answer set
than $r$ (does not include $t$). For the other direction, let $\{ t_1, \dots , t_m \} \in \alpha(B)$, where
$B$ is the R.H.S. of Eq.(\ref{eq:chord}), and $r$ repair such that $Q^f(r) = \{t_1, \dots, t_m\}$. Then,
$r$ must be frugal, otherwise we would have a contradiction on the fact that $A_{Q_0}(I)$ is correctly
structured.
\end{proof}

\introparagraph{Procedure \textsc{FrugalChordPath}} Fix a query $Q$ of the form
$Q_0, P$, where $P$ is a chordal path from $u$ to $v$ for $Q_0$.  The
procedure \textsc{FrugalChordPath} takes as input an instance $I$ and
the compact representation $A_{Q_0}(I)$, and returns the compact
representation $A_Q(I)$, in six steps:

\begin{packed_enum}
\item \label{item:proc:chordpath:1} Assume $A_{Q_0}(I)$ has $m$
  or-sets, each with $n_1, \ldots, n_m$ elements:
  \begin{align}
    A_{Q_0}(I) = \set{A_1, \ldots, A_m} 
\mbox{ where: } A_i =  \orset{t_{i1}, t_{i2}, \ldots, t_{in_i}} \label{eq:renaming} 
  \end{align}
  Denote $n=\sum_i n_i$.  Let $a_i$ for $i=1,m$ be $m$ distinct
  constants, and let $b_{ij}$ for $i=1,m$, $j=1,n_i$ be $n$ distinct
  constants.  Denote $\texttt{tup}(b_{ij}) = t_{ij}$ the tuple encoded
  by $b_{ij}$.
\item \label{item:proc:chordpath:2} Create four new relations:
  \begin{align*}
    B^i = & \setof{(\underline{a_i}, b_{ij})}{i=1,m; j=1,n_i}\\
    B_1^c = & \setof{(\underline{b_{ij}},\pi_u(t_{ij}))}{i=1,m; j=1,n_i}\\
    B_2^c = & \setof{(\underline{b_{ij}},\pi_v(t_{ij}))}{i=1,m; j=1,n_i}\\
    B_0^c = & \setof{(\underline{\pi_v(t_{ij})},a_i)}{i=1,m; j=1,n_i}
  \end{align*}
  $B^i$ is of inconsistent type (hence the superscript ``$i$''),
  and $B_1^c$, $B_2^c$, $B_0^c$ are of consistent type.
\item \label{item:proc:chordpath:3} Assume the variables $u,v$ are
  distinct, $u\neq v$: we discuss below the case $u=v$.  Denote
  $C_{k+3}$ and $Q'$ the following queries:
  \begin{align*}
    &C_{k+3} = B^i(\underline{a},b),B_1^c(\underline{b},u),R_1(\underline{u},x_1),\ldots,R_k(\underline{x_{k-1}},v),B_0^c(\underline{v},a)\\
    &Q' = C_{k+3}^f(a,b,u,x_1,\ldots,x_{k-1},v), B_2^c(\underline{b},v)
  \end{align*}
  where $R_1(\underline{u},x_1),\ldots,R_k(\underline{x_{k-1}},v)$ is the chordal path $P$,
  and $a, b$ are new variables.
\item \label{item:proc:chordpath:4} Use the algorithm \textsc{FrugalC}
  to find the compact representation $A_{C_{k+3}}(I)$ for $C_{k+3}$.
\item \label{item:proc:chordpath:5} Use the procedure
  \textsc{FrugalChord} to find the compact representation of
  $A_{Q'}(I)$ for $Q'$.
\item \label{item:proc:chordpath:6} Return the following set of
  or-sets:
  \begin{align}
    A_Q(I) = &
    \setof{\orset{(\texttt{tup}(\pi_b(t)),\pi_{Vars(P)}(t))|t \in A}}{A \in A_{Q'}(I)} \label{eq:aq:chordal:path}
  \end{align}
\end{packed_enum}

We explain the algorithm next.  In Step~\ref{item:proc:chordpath:1} we
give fresh names to each or-set $A_i$ in $A_{Q_0}(I)$, and to each
tuple $t_{ij}$ in each or-set in $A_i$: by
\autoref{prop:compression:size}, the number of constants needed is
only polynomial in the size of the active domain of $I$.  The crux of
the algorithm is the table $B^i(\underline{a},b)$ created in
Step~\ref{item:proc:chordpath:2}: its repairs correspond precisely to
$\alpha(A_{Q_0}(I))$, up to renaming of constants.  To see this notice
that each repair of $B^i$ has the form $\set{(a_1,b_{1j_1}), \ldots,
  (a_m,b_{mj_m})}$ for arbitrary choices $j_1 \in [n_1], \ldots, j_m
\in [n_m]$.  Therefore, the set of frugal repairs of $B^i$ is $\alpha(S_0)$,
where $S_0 = \setof{\orset{(a_i,b_{ij})|j=1,n_i}}{i=1,m}$, which is
precisely Eq.(\ref{eq:renaming}) up to renaming of the tuples by
constants.  The relation $B_1^c$ decodes each constant $b_{ij}$ by
mapping it to the $u$-projection of $t_{ij}$; similarly for
$B_2^c$. Clearly, both $B_1^c,B_2^c$ are consistent, because every
constant $b_{ij}$ needs to be stored only once.  The relation $B_0^c$
is a reverse mapping, which associates to each value of $v$ the name
$a_i$ of the unique or-set $A_i$ that contains a tuple $t_{ij}$ with
that value in position $v$: the set $A_i$ is uniquely defined because,
by \autoref{def:representable}, for any distinct sets $A_{i_1},
A_{i_2}$ we have $\Pi_v(A_{i_1}) \cap \Pi_v(A_{i_2})=\emptyset$.

Step~\ref{item:proc:chordpath:3} transforms $Q$ into a cycle $C_{k+3}$
plus a chord $B^c_2(\underline{b},v)$, by simply replacing the entire
subquery $Q_0$ with the single relation $B^i(\underline{a},b)$ (which
is correct, since $A_{Q_0}(I)$ is the same as the set of repairs of
$B^i$) plus the decodings $B_1^c(\underline{b},u)$,
$B_2^c(\underline{b},v)$: note that we only needed
$B^c_0(\underline{v},a)$ in order to close the cycle $C_{k+3}$.  The
next two steps compute the encodings $A_{C_{k+3}}(I)$ and $A_{Q'}(I)$
using the algorithm \textsc{FrugalC} and \textsc{FrugalChord}
respectively.  Finally, the last step converts back $A_{Q'}(I)$ into
$A_Q(I)$ by expanding the constants $b_{ij}$ into the tuples they
encode, $t_{ij} = \texttt{tup}(b_{ij})$.  The algorithm has assumed $u
\neq v$.  If $u=v$ are the same variable, the $C_{k+3}$ is no longer a
cycle: in that case, we split $u$ into two variables $u,v$ and add two
consistent relations $R^c(\underline{u},v)$, $S^c(\underline{v},u)$ to
the query, and replace the last relation $R_k(\underline{x_{k-1}},u)$
of $P$ with $R_k(\underline{x_{k-1}},v)$. The correctness of the
algorithm follows from:

\begin{lemma}
  Let $Q$ be a query of the form $Q_0, P$ where $P$ is a chordal path
  from $u$ to $v$ for $Q_0$, and let $I$ be an instance.  Then, if
  $\mM_{Q_0}(I)$ is representable and $A_{Q_0}(I)$ is its compact
  representation, then $\mM_Q(I)$ is also representable and its
  compact representation is given by Eq.(\ref{eq:aq:chordal:path}).
\end{lemma}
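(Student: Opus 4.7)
The plan is to prove the lemma by establishing a tight correspondence between frugal repairs of $I$ for $Q$ and frugal repairs of the extended instance (using the auxiliary relations $B^i,B_1^c,B_2^c,B_0^c$) for the transformed query $Q'$. The key observation is that, by \autoref{def:representable}, the or-sets in $A_{Q_0}(I)$ use pairwise disjoint constants in every coordinate; in particular $\Pi_u(A_{i_1})\cap\Pi_u(A_{i_2})=\emptyset$ and $\Pi_v(A_{i_1})\cap\Pi_v(A_{i_2})=\emptyset$ for $i_1\neq i_2$, so the relations $B_1^c,B_2^c,B_0^c$ are well-defined functions and consistent. The inconsistent relation $B^i(\underline{a},b)$ is designed so that its repairs are exactly the choice functions picking one tuple $t_{ij_i}\in A_i$ for each $i=1,\dots,m$; thus the or-set of frugal repairs of $B^i$ is, after decoding $b_{ij}\mapsto t_{ij}$, precisely $\alpha(A_{Q_0}(I))=\mM_{Q_0}(I)$.

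First I would verify the following correspondence at the level of repairs. Given any frugal repair $r$ of $I$ for $Q$, the projection to the atoms of $Q_0$ is a frugal repair of $I$ for $Q_0$, so $Q_0^f$ on it equals $\alpha(A_i)$ for some $A_i\in A_{Q_0}(I)$ (after the renaming of Step~\ref{item:proc:chordpath:1}); and the restriction to the atoms of $P$ picks, for each tuple in $Q_0^f(r)$, a choice of values along the path so that the pair $(u,v)$ witnessed by $Q_0^f(r)$ is traversed by $P$. Under the renaming, this is exactly a frugal repair of the extended instance for $Q'$, where the cycle $C_{k+3}$ glues the or-set index $a$, the tuple name $b$, the endpoints $u,v$ of $P$, and the back-edge $B_0^c(\underline{v},a)$, while the chord $B_2^c(\underline{b},v)$ enforces that the chosen tuple $t_{ij}$ has the correct $v$-value. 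Conversely, every frugal repair of the extended instance for $Q'$ decodes, via $\texttt{tup}(\cdot)$, to a frugal repair of $I$ for $Q$. This bijection is the heart of the argument.

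Once the bijection is established, I would invoke \autoref{lemma:single:cycle} (via \textsc{FrugalC}) to conclude that $\mM_{C_{k+3}}$ on the extended instance is representable with compression $A_{C_{k+3}}(I)$, and then \autoref{lem:cycle_edge_representable} (via \textsc{FrugalChord}) to conclude that $\mM_{Q'}$ is also representable with compression $A_{Q'}(I)$. At this point the decoding in Step~\ref{item:proc:chordpath:6} applies componentwise inside each or-set of $A_{Q'}(I)$; representability is preserved because the map $b_{ij}\mapsto t_{ij}$ is injective and, by the disjointness property of $A_{Q_0}(I)$, constants appearing in tuples from distinct or-sets $A_i$ never collide in any coordinate, so the disjointness condition of \autoref{def:representable} carries over to the decoded output.

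The main obstacle will be the frugality direction of the bijection: showing that a choice function coming from a frugal repair of $Q'$ on the extended instance cannot be strictly dominated, in $Q^f$ on $I$, by another repair of $I$. The delicate point is that dominance of $Q^f$-answers could in principle be achieved by a repair whose $Q_0^f$-answer is not one of the or-sets in $A_{Q_0}(I)$ but a non-minimal subset thereof; here the disjointness-of-constants condition guaranteed by representability of $A_{Q_0}(I)$ must be used to rule this out, together with the fact that $P$ is a chordal path (so the only variables shared between $Q_0$ and $P$ are $u,v$), which prevents interactions between the chord-path choices and the $Q_0$-choices across different or-sets. Handling this carefully yields the desired equality in Eq.~(\ref{eq:aq:chordal:path}).
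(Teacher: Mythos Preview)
The paper states this lemma without proof (the preceding paragraphs explain the construction of $B^i,B_1^c,B_2^c,B_0^c$ and why $B_0^c$ is a function, but do not argue the equality of $\mM_Q(I)$ with the decoded output), so there is nothing to compare against; I evaluate your outline on its own.

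Your architecture is right: encode $A_{Q_0}(I)$ as the inconsistent relation $B^i$, reduce $Q$ to the cycle-plus-chord query $Q'$, invoke \textsc{FrugalC} and \textsc{FrugalChord}, then decode. Two points need tightening.

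First, the assertion in your second paragraph that the $Q_0$-projection of a frugal repair $r$ for $Q$ is itself frugal for $Q_0$ is false in general. If some $t\in Q_0^f(r)$ has no matching $P$-path in $r$ from $t[u]$ to $t[v]$, then removing $t$ from $Q_0^f$ leaves $Q^f(r)$ unchanged; so $r|_{Q_0}$ need not be $Q_0$-minimal. What is true, and what suffices, is that every element of $\mM_Q(I)$ is realized by \emph{some} frugal $r$ whose $Q_0$-restriction is frugal for $Q_0$: replace $r|_{Q_0}$ by any $Q_0$-frugal $r_0$ with $Q_0^f(r_0)\subseteq Q_0^f(r)$, note $Q^f(r_0\cup r|_P)\subseteq Q^f(r)$ since $Q^f=Q_0^f\bowtie_{u,v}P^f$, and conclude equality by frugality of $r$. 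This same replacement argument is what resolves the ``main obstacle'' you flag in your last paragraph---not the disjointness condition you invoke there. Disjointness is needed so that $B_0^c$ is a function and so that the decoded output satisfies part~(b) of \autoref{def:representable}, but it does not by itself exclude a dominator $\tilde r$ whose $Q_0^f(\tilde r)$ lies outside $\mM_{Q_0}(I)$; one must first push $\tilde r$ down to a $Q_0$-frugal replacement and only then encode it in $I'$.

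Second, the correspondence you want is between $\mM_Q(I)$ and the decoded $\mM_{Q'}(I')$, not between frugal repairs themselves: many repairs share the same $Q^f$-value, and ``decoding'' a $B^i$-choice $\{t_{1j_1},\ldots,t_{mj_m}\}\in\mM_{Q_0}(I)$ back to a concrete repair of $I$ is not canonical. Once stated at the level of answer-sets, both inclusions go through cleanly via the replacement argument above together with injectivity of $b_{ij}\mapsto t_{ij}$.
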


\introparagraph{Algorithm \textsc{FrugalSCC}} Let $Q$ be a query that is
strongly connected.  The algorithm \textsc{FrugalSCC} takes as input
an instance $I$, and returns $A_Q(I)$, as follows.  Let $Q_0, Q_1,
\ldots, Q_m$ a chordal path decomposition for $Q$
(\autoref{lem:cycle_edge_representable}).  Start by computing
$A_{Q_0}(I)$ using algorithm \textsc{FrugalC}.  Next, for each
$i=1,m$, use $A_{Q_{i-1}}(I)$ and the procedure
\textsc{FrugalChordalPath} to compute $A_{Q_i}(I)$.  Return
$A_{Q_m}(I)$.

\begin{example} Continuing \autoref{ex:h2}, we will show how to
  compute $A_{H_2}(I_2)$ where $I_2$ is the instance shown in
  \autoref{fig:h2_instance}.  Write $H_2$ as $H_2 \equiv C_3, P$,
  where $C_3=R(\ux,y), S(\uy, z), T(\uz, x)$ and $P=U(\uy, t),
  V(\underline{t}, z)$.  We start by computing $C_3$ on $I_2$; one can
  check\footnote{Every repair of $I_2$ contains exactly two cycles:
    $(a_1,b_1,c_1)$ and one of $(a_2,b_2,c_2)$ or $(a_2,b_3,c_2)$.}
  that $A_{C_3}(I_2) = \set{A_1, A_2}$ where $A_1 = \orset{(a_1, b_1,
    c_1)}$ and $A_2 = \orset{(a_2, b_2, c_2), (a_2, b_3, c_2)}$.
  Encode the two sets with the new constants $A_1, A_2$, and encode
  the three tuples with three new constants $[a_1b_1c_1]$,
  $[a_2b_2c_2]$, $[a_2b_3c_2]$.  The new relations
  $B^i(\underline{a},b)$, $B^c_1(\underline{b},y)$,
  $B^c_2(\underline{b},z), B_0^c(\uz, b)$ are shown in
  \autoref{fig:scc_reduction}.  Thus, we have to compute the following
  queries:
  \begin{align*}
    C_5 = &  B^i(\underline{a},b),B_1^c(\underline{b},y),U_1(\uy,t),V(\underline{t},z),B_0^c(\underline{z},a)\\
    Q' = & C_5^f(a,b,y,t,z), B_2^c(\underline{b},z)
  \end{align*}
  on the instance $I'$ in~\autoref{fig:scc_reduction}. One can check
  that their answers are:
  \begin{align*}
    &A_{C_5}(I') = \set{\orset{(A_1, [a_1b_1c_1], b_1, d, c_1),   (A_2, [a_2b_2c_2], b_2, d, c_2),  (A_2, [a_2b_3c_2], b_3, d, c_2) }}\\
    &A_{Q'}(I') = A_{C_5}(I')
  \end{align*}
  Mapping this to the original query $H_2(x,y,z,t)$ by projecting out
  the $A_i$ and merging the tuples, we obtain that
  \begin{align*}
    A_{H_2}(I_2) = \set{ \orset{(a_1, b_1, c_1, d), (a_2, b_2, c_2, d), (a_2, b_3, c_2, d)}}
\end{align*}
In particular, $I_2 \vDash H_2$, because $A_{H_2}(I_2)$ is nonempty.
\end{example}

\begin{figure}[tb]
  \centering
  {{ \renewcommand\tabcolsep{5pt}
        \begin{tabular}{|c|c|c|c|c|c|c|c|c| }
            \multicolumn{1}{l}{$R(\ux,y)$} & \multicolumn{1}{l}{ } & 
            \multicolumn{1}{l}{$S(\uy, z)$} & \multicolumn{1}{l}{ } & 
            \multicolumn{1}{l}{$T(\uz,x)$}  & \multicolumn{1}{l}{ } &
            \multicolumn{1}{l}{$U(\uy,t)$} & \multicolumn{1}{l}{ } & 
            \multicolumn{1}{l}{$V(\underline{t}, z)$}   \\
            
            \cline{1-1}\cline{3-3} \cline{5-5}  \cline{7-7} \cline{9-9}                                         
             $(a_1, b_1)$ &  & $(b_1, c_1)$ &  & $(c_1, a_1)$ &  & $(b_1, d)$ & & $(d, c_1)$ \\
             $(a_2, b_2)$ &  & $(b_2, c_2)$ &  & $(c_2, a_2)$ &  & $(b_2, d)$ & & $(d, c_2)$ \\
             $(a_2, b_3)$ &  & $(b_3, c_2)$ &  &              &  & $(b_3, d)$ & &  \\
             \cline{1-1}       \cline{3-3}       \cline{5-5}        \cline{7-7} \cline{9-9}
          \end{tabular}}}
\caption{{\footnotesize An inconsistent purified instance $I_2$ for $H_2$.}}
\label{fig:h2_instance}
\end{figure}

\begin{figure}[tb]
  \centering
  {{ \renewcommand\tabcolsep{5pt}
        \begin{tabular}{|c|c|c|c|c|c|c| }
            \multicolumn{1}{l}{$B(\underline{a}, b)$} & \multicolumn{1}{l}{ } & 
            \multicolumn{1}{l}{$B_1^c(\underline{b}, y)$}  & \multicolumn{1}{l}{ } &
            \multicolumn{1}{l}{$B_2^c(\underline{b},z)$}  & \multicolumn{1}{l}{ } & 
            \multicolumn{1}{l}{$B_0^c(\uz, b)$}  \\
            
            \cline{1-1}\cline{3-3} \cline{5-5}  \cline{7-7}                                         
           $(A_1, [a_1b_1c_1])$ &  & $([a_1b_1c_1], b_1)$ & & $([a_1b_1c_1], c_1)$ & & $(c_1, A_1)$ \\
           $(A_2, [a_2b_2c_2])$ &  & $([a_2b_2c_2], b_2)$ & & $([a_2b_2c_2], c_2)$ & & $(c_2, A_2)$ \\
           $(A_2, [a_2b_3c_2])$ &  & $([a_2b_3c_2], b_3)$ & & $([a_2b_3c_2], c_2)$ & &  \\
            \cline{1-1}\cline{3-3} \cline{5-5}   \cline{7-7}
          \end{tabular}
  }}
\caption{{\footnotesize The resulting instance $I'$ produced by the inductive step for $H_2$.}}
\label{fig:scc_reduction}
\end{figure}

\section{The PTIME algorithm}
\label{sub:ptime_case}

In this section, we prove:

\begin{theorem}
If the graph $G[Q]$ is splittable, there exists a PTIME algorithm that solves \cqa{$Q$}.
\end{theorem}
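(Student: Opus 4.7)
The plan is to extend the algorithm of \autoref{sec:cycles} from strongly connected queries to all splittable ones via a recursive decomposition of $G[Q]$. The overall scheme is induction on the number of inconsistent edges of $G[Q]$, with the strongly connected case handled by \textsc{FrugalSCC} (\autoref{th:scc}) as the base case. As an initial reduction, if $G[Q]$ decomposes into several weakly connected components I would treat each separately, since (as noted after \autoref{thm:dichotomy}) $coupled^+$ respects weak components and certainty factors as a conjunction of per-component problems, each of which is itself splittable.

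For the core inductive step, assume $G[Q]$ is weakly connected but not strongly connected. Here I would rely on the stronger notion $coupled^{\oplus}$ (foreshadowed as central to this algorithm) to identify an inconsistent edge $R$ with $coupled^{\oplus}(R) = [R]$. Existence of such an $R$ would be established by an extremality argument on the condensation DAG of $G[Q]$: pick an inconsistent edge whose source SCC is a sink and use the inclusion $coupled^+(R) \subseteq coupled^{\oplus}(R)$ together with splittability to show its coupling class collapses to $[R]$. Intuitively, splittability forbids any two-way coupling across distinct SCCs, so some extremal edge must have a trivial coupling neighborhood.

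Having located such an $R$, the algorithm extracts the subquery $Q_R$ corresponding to $[R]$ together with the consistent closure $u_R^{\oplus}$; this subquery is strongly connected, so \textsc{FrugalSCC} produces a compact or-set representation $A_{Q_R}(I)$ of the candidate assignments for its variables. For each or-set in $A_{Q_R}(I)$, and within it each candidate tuple, I would substitute the determined values of boundary variables into the residual query $Q'$ obtained by deleting $R$'s SCC, yielding a strictly smaller query on a reduced instance. The key structural lemma is that $Q'$ remains splittable (removing edges cannot create new couplings, and the coupling classes of the surviving inconsistent edges can only shrink), so the induction hypothesis gives a polynomial algorithm for it. The certainty of $Q$ on $I$ is then expressed as a universal statement: $I \vDash Q$ iff, for every choice of one tuple per or-set in $A_{Q_R}(I)$, the residual instance under that substitution satisfies $Q'$.

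The hard part will be the correctness and cost analysis of this combination step. On the cost side, \autoref{prop:compression:size} bounds the total size of $A_{Q_R}(I)$ polynomially, and the number of tuples examined at each recursion level is polynomial, so the overall runtime is polynomial provided the recursion depth is bounded by the number of inconsistent edges. On the correctness side, since $\mM_Q(I)$ itself need not be representable (cf.\ the example $H = R(\ux,y),S(\uy,z)$ in \autoref{subsec:orset}), one cannot simply compose compact representations to obtain $A_Q(I)$; instead the argument must show that the forbidden coupling pattern — two inconsistent edges two-way coupled but not source-equivalent — is exactly the obstruction to this factorization. Ruling out such a pattern at every level of the recursion, and verifying that substitution of boundary values preserves the splittability invariant in $Q'$, will be the delicate technical heart of the proof.
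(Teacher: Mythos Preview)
Your high-level recursion on $|E^i|$ is the right shape, but the pivotal existence claim is false: a splittable, weakly-but-not-strongly connected graph need \emph{not} contain an inconsistent edge $R$ with $coupled^{\oplus}(R)=[R]$. Take $Q=R^i(\ux,y),\,S^i(\ux,y),\,T^i(\uz,y)$. This graph is splittable (for instance $u_R^{+,R}=\{x,y\}$ already contains $v_R=y$, so $T\notin coupled^+(R)$, and symmetrically for $S$), yet for every inconsistent edge $E$ one has $coupled^{\oplus}(E)\supsetneq[E]$: there are no consistent edges, so $u_R^{\oplus}=u_S^{\oplus}=\{x\}$ and $u_T^{\oplus}=\{z\}$, and the undirected path through $y$ connects the relevant $v_E$ to the other source without touching these singletons. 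Your extremality argument also uses the inclusion $coupled^+(R)\subseteq coupled^{\oplus}(R)$ the wrong way round: splittability constrains only $coupled^+$, and knowing that a subset is small cannot force the superset $coupled^{\oplus}(R)$ to collapse.

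The paper closes this gap with two ingredients you are missing. First, an \emph{f-closure} preprocessing step (\autoref{sub:f-closed}): given the instance $I$, one adds fresh consistent edges $(u_R,v)$ whenever $v\in (v_R^{\oplus}\cap u_R^{+,R})\setminus u_R^{\oplus}$, justified by \autoref{lem:closure}; this is an instance-dependent graph transformation, and the example above is precisely a non-f-closed graph. Only under f-closure does one get $C^{+}=C^{\oplus}$ for sink classes (\autoref{prop:equal:strict:weak}), which is what lets splittability speak to $coupled^{\oplus}$ at all. Second, the object extracted is not a single edge with trivial coupling but a \emph{separator}: a sink equivalence class $C$ such that every $C'\in coupled^{\oplus}(C)\setminus\{C\}$ satisfies $C'<^{\oplus}C$. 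The separator's $coupled^{\oplus}(C)$ may contain many classes; the algorithm partitions all of $E^i$ into a left set $\mL^C$ (the union of those classes) and a right set $\mR^C$, and the recursion makes all of $\mL^C$ consistent at once, not just the edges of $C$. Correctness then hinges on \autoref{lem:key-group:independence} and \autoref{lem:repair:substitute}, which exploit the order relation $<^{\oplus}$ between $C$ and the rest of $\mL^C$ in a way that a plain boundary-value substitution does not capture.
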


The polynomial time algorithm we present here is based on the fact that if $G[Q]$ is 
splittable, it has a very specific structure that allows us to break it into smaller pieces
that we can solve independently; in other words, the problem is {\em self-reducible}. 
The graph object that allows this is called a {\em separator}, and we show in \autoref{sub:separator}
that it always exists in $G[Q]$. Throughout this section, we will use the graph $G[H]$ of
\autoref{fig:q_graph} as a running example.

\subsection{Separators}
\label{sub:split:sets}

In this section, we define the notion of a {\em separator}, which is central
to the construction of the polynomial time algorithm for splittable graphs. Before we present the
formal definition, we need to set up some notation.

Recall that $\sim$ denotes a binary relation between edges $R,S \in E^i$: $R \sim S$ if 
$R$ and $S$ are source-equivalent. Consider the equivalence relation defined by $\sim$ 
on the set of inconsistent edges $E^i$, and denote $\Eq$ the quotient set and $[R] \in \Eq$ the 
equivalence class for an edge $R \in E^i$. For our example graph $G[H]$, we have 
$R_1 \sim R_3$ (because $R_1, R_2,  R_3$ form a cycle), thus $[R_1] = \set{R_1, R_3}$.  
Also $S \lesssim  [R_1]$, $S \lesssim T$,  hence $\Eq = \set{[R_1], [S], [T]}$.

For some $C \in \Eq$, let us define
\begin{align*}
C^+  \stackrel{def}{\equiv}  \bigcap_{R \in C} u_R^{+,R} 
\quad \text{ and } \quad 
C^{\oplus}  \stackrel{def}{\equiv}  \bigcap_{R \in C} u_R^{\oplus}.
\end{align*}

Similarly to how we have defined $coupled^+(R), coupled^{\oplus}(R)$ for edges $R \in E^i$, we can
define $coupled^+(C), coupled^{\oplus}(C)$ for $C \in \Eq$:
\begin{align*}
coupled^+(C)  & \stackrel{def}{\equiv}  \set{C} \cup \{ C' \in \Eq  \mid \exists R \in C, S \in C': 
 \exists P: v_R \leftrightarrow u_S,  P \cap C^+ = \emptyset \} \\
 coupled^{\oplus}(C) & \stackrel{def}{\equiv}  \set{C} \cup \{ C' \in \Eq \mid  \exists R \in C, S \in C': 
 \exists P: v_R \leftrightarrow u_S, P \cap C^{\oplus} = \emptyset \}
\end{align*}

The definitions essentially "lift" the notion of coupling from a single inconsistent edge to an equivalence class. To illustrate with an example, in $G[H]$ we have the following:
\begin{align*}
& coupled^+(\set{R_1, R_3}) = \set{\set{R_1, R_3}, \set{S}} \quad \quad coupled^+(\set{S}) = \set{\set{S}} \\
& coupled^+(\set{T}) = \set{\set{R_1, R_3}, \set{T}, \set{S}}
\end{align*}
Moreover, for every equivalence class in $G[H]$, the sets $coupled^+, coupled^{\oplus}$ coincide.

For $C_1, C_2 \in \Eq$, define the binary relation $\leq^{\oplus}$: we say that
$C_1 \leq^{\oplus} C_2$ if there exists some $S \in C_2$ such that $u_S \in C_1^{\oplus}$.

\begin{proposition}
\label{lem:strict_porder}
The relation $\leq^{\oplus}$ is antisymmetric and transitive.
\end{proposition}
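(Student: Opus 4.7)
The plan is to unpack the definitions of $C^{\oplus}$ and $\leq^{\oplus}$ and trace the consistent directed paths they encode. Recall that $C_1 \leq^{\oplus} C_2$ asserts the existence of some $S \in C_2$ with $u_S \in C_1^{\oplus}$; by definition of $C_1^{\oplus} = \bigcap_{R \in C_1} u_R^{\oplus}$, this is equivalent to saying that there is an $S \in C_2$ such that $u_R \rightarrow u_S$ (consistent directed path) for \emph{every} $R \in C_1$. Both parts of the proposition reduce to manipulating this statement.

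For transitivity, I would argue directly. Suppose $C_1 \leq^{\oplus} C_2$ and $C_2 \leq^{\oplus} C_3$. Pick a witness $S \in C_2$ with $u_R \rightarrow u_S$ for all $R \in C_1$, and a witness $T \in C_3$ with $u_{S'} \rightarrow u_T$ for all $S' \in C_2$. Applying the second statement to this specific $S \in C_2$ yields $u_S \rightarrow u_T$, and concatenating with $u_R \rightarrow u_S$ (valid because consistent paths compose) produces $u_R \rightarrow u_T$ for every $R \in C_1$. Hence $u_T \in C_1^{\oplus}$, which is precisely $C_1 \leq^{\oplus} C_3$.

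For antisymmetry, assume $C_1 \leq^{\oplus} C_2$ and $C_2 \leq^{\oplus} C_1$. Extract the two witnesses: some $S \in C_2$ with $u_R \rightarrow u_S$ for all $R \in C_1$, and some $R \in C_1$ with $u_{S'} \rightarrow u_R$ for all $S' \in C_2$. Specializing each witness condition to the other's chosen edge gives simultaneously $u_R \rightarrow u_S$ and $u_S \rightarrow u_R$. These two consistent paths certify that $u_R$ and $u_S$ lie in a common strongly connected component of $G$ (consistent SCCs are a refinement of the SCCs of $G$, so the conclusion transfers), which is exactly the definition of $R \sim S$. Therefore $[R] = [S]$, and since $C_1 = [R]$ and $C_2 = [S]$, we conclude $C_1 = C_2$.

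The proof is essentially a definition-chase, so there is no real obstacle. The only subtle point to flag explicitly is that $\rightarrow$ uses \emph{consistent} edges, whereas $\sim$ is defined via SCCs of the \emph{full} graph $G$; fortunately the implication goes the easy way (a consistent cycle through $u_R, u_S$ is still a cycle in $G$), so this observation suffices to close the antisymmetry argument cleanly.
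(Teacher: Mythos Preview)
Your proof is correct and follows essentially the same approach as the paper's: both arguments unpack the definition of $C^{\oplus}$, pick the witness edges, and compose the resulting consistent paths. Your antisymmetry argument is in fact more explicit than the paper's (which simply asserts that $C_1$ and $C_2$ ``would describe the same equivalence class''), and your remark that a consistent cycle is in particular a cycle in the full graph $G$ is exactly the point needed to bridge to the definition of $\sim$.
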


\begin{proof} 
To show that  $\leq^{\oplus}$ is {\em antisymmetric}, notice that if 
$C_1 \leq^{\oplus} C_2$ and $C_2 \leq^{\oplus} C_1$, $C_1$ and $C_2$ would describe the same 
equivalence class, and thus $C_1 = C_2$. To show {\em transitivity}, assume that 
$C_1 \leq^{\oplus} C_2$ and $C_2 \leq^{\oplus} C_3$. Then, there exists $S \in C_2$ such that
$u_S \in C_1^{\oplus}$ and also $T \in C_3$ such that  $u_T \in C_2^{\oplus}$, and
in particular $u_T \in u_S^{\oplus}$. Thus, $u_T \in C_1^{\oplus}$ and $C_1 \leq^{\oplus} C_3$.
\end{proof}

We can now define $C_1 <^{\oplus} C_2$ to be such that $C_1 \leq^{\oplus} C_2$ and
$C_1 \neq C_2$. Then, following from \autoref{lem:strict_porder}, $<^{\oplus}$ is a {\em strict
partial order}. We will be particularly interested in the maximal elements of this order,
which we will call {\em sinks}. 

\begin{definition}[Sink]
$C \in \Eq$ is a {\em sink} if it is a maximal element of $<^{\oplus}$.  
\end{definition}

\begin{example}
Since $(u_{R_3} = ) z \in u^{\oplus} (= u_S^{\oplus}) $, we have $\set{S} <^{\oplus} \set{R_1, R_3}$.
Also, since $v \in u_{R_1}^{\oplus} \cap u_{R_3}^{\oplus}$, $\set{R_1, R_3} <^{\oplus} \set{T}$.
By the transitivity of $<^{\oplus}$, we also obtain that $\set{S} <^{\oplus} \set{T}$.
Hence, $\set{T}$ is the only sink of the graph $G[H]$.
\end{example}

\begin{definition}[Separator]
A sink $C \in \Eq$ is a {\em separator} if for every $C' \neq C$ such that
$C' \in coupled^{\oplus}(C)$, we have that $C' <^{\oplus} C$.
\end{definition}

In the specific case where $\Eq$ contains a single sink $C$, since $<^{\oplus}$ is a strict partial order, for any $C' \in \Eq, C' \neq C$, we have that $C' <^{\oplus} C$ and thus the single sink $C$ is trivially a separator.

All the equivalence classes of $G[H]$ are separators. Indeed, since
$\set{S} <^{\oplus} \set{R_1, R_3} <^{\oplus} \set{T}$, $\set{T}$ is a separator.
Also, $\set{S}$ is a separator, since $\set{R_1, R_3}, \set{T} \notin coupled^{\oplus}(S)$.

In order to prove the existence of a separator in a graph, it is not a sufficient condition that the graph is splittable. For example, consider the splittable query $Q = R^i(\ux, y), S^i(\ux, y), T^i(\uz,y)$, which contains two sinks, $\set{R,S}$ and $\set{T}$. It is easy to see that $\set{T} \notin coupled^{\oplus}(\set{R,S})$, and $\set{R,S} \notin coupled^{\oplus}(\set{T})$; thus, $G[Q]$ has no separator. Instead, we show the existence of a separator for a graph that is splittable and {\em f-closed}. 

\begin{definition}[f-closed Graph]
A graph $G$ is {\em f-closed} if for every $R \in E^i(G)$,
$v_R^{\oplus} \cap u_R^{+,R} \subseteq u_R^{\oplus}$.
\end{definition}

Indeed, $G[Q]$ is not f-closed, since $v_R^{\oplus} = \set{y}$, $u_R^{+,R} = \set{y}$ and $u_R^{\oplus} = \set{x}$. We will show in \autoref{sub:f-closed} that, given a splittable graph $G$ and an instance $I$, we can always construct in polynomial time a splittable and f-closed graph $G'$ and an instance $I'$ such that $I \vDash G$ iff $I' \vDash G'$.

We show in \autoref{sub:separator} that, if $G$ is splittable and f-closed, there exists a separator, and in fact the separator has an explicit construction:

\begin{theorem}
\label{thm:separator_exists}
If $G$ is a splittable and f-closed graph, then
$C^{sep} = \arg \min_{\text{sink } C \in \Eq} |coupled^{\oplus}(C)|$ is a separator.
\end{theorem}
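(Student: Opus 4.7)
The plan is to argue by contradiction: assume $C = C^{sep}$ is not a separator, and exhibit another sink with a strictly smaller $coupled^{\oplus}$-set, violating the choice of $C$. Concretely, suppose there is $C' \in coupled^{\oplus}(C)\setminus\{C\}$ with $C' \not<^{\oplus} C$. Since $C$ is a sink, we also have $C \not<^{\oplus} C'$, so $C$ and $C'$ are $<^{\oplus}$-incomparable. Let $P: v_R \leftrightarrow u_S$ be the undirected witness path, with $R\in C$, $S\in C'$, and $P\cap C^{\oplus}=\emptyset$. The natural candidate for the sink contradicting minimality is $C'$ itself.

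First I would show that $C'$ must be a sink. If instead $C' <^{\oplus} D$, then some $T\in D$ satisfies $u_T \in C'^{\oplus} \subseteq u_S^{\oplus}$. Walking $P$ and using the f-closed hypothesis at each inconsistent edge $R'\in E^i$ encountered on $P$ (which ensures $v_{R'}^{\oplus}\cap u_{R'}^{+,R'}\subseteq u_{R'}^{\oplus}$), one propagates the consistent-reachability of $u_T$ backward along $P$ from $u_S$ all the way to $v_R$, and then through $R$ itself to $u_R$. Repeating this for every $R\in C$ yields $u_T\in C^{\oplus}$, so $C<^{\oplus} D$, contradicting that $C$ is a sink. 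Therefore $C'$ is a sink.

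Next I would prove $coupled^{\oplus}(C')\subseteq coupled^{\oplus}(C)$ by concatenation: for any $D \in coupled^{\oplus}(C')$ witnessed by $Q: v_{S'} \leftrightarrow u_T$ with $Q\cap C'^{\oplus}=\emptyset$, the path $P\cdot Q$ (possibly routed through an edge of $C'$ to bridge $v_R$ to $v_{S'}$) connects $v_R$ to $u_T$. To verify $(P\cdot Q)\cap C^{\oplus}=\emptyset$, I use $P\cap C^{\oplus}=\emptyset$ by construction together with the sink-ness of $C$ and the f-closed property applied on $Q$: a vertex of $Q$ landing in $C^{\oplus}$ would, propagated by f-closure, force $C<^{\oplus}[T]$ or force $C'$ and $C$ into comparability, both contradictions. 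Hence $D\in coupled^{\oplus}(C)$.

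The decisive step is strict containment, by exhibiting $D^*\in coupled^{\oplus}(C)\setminus coupled^{\oplus}(C')$. I take $D^* = C$, so the task is to rule out $C\in coupled^{\oplus}(C')$. If $C$ were in $coupled^{\oplus}(C')$, there would be a path $P': v_{S''}\leftrightarrow u_{R''}$ with $S''\in C'$, $R''\in C$, and $P'\cap C'^{\oplus}=\emptyset$. Combined with the original $P$, the pair $(P,P')$ gives simultaneously $C'\in coupled^{\oplus}(C)$ and $C\in coupled^{\oplus}(C')$. Using f-closedness to convert ``$\cap C^{\oplus}=\emptyset$'' witnesses into ``$\cap u_R^{+,R}=\emptyset$'' witnesses (pushing $coupled^{\oplus}$ down to $coupled^+$), one deduces mutual $coupled^+$ membership between some $R\in C$ and $S\in C'$; since $R\not\sim S$ (as $C\neq C'$), this contradicts splittability. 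Hence $C\notin coupled^{\oplus}(C')$, giving $|coupled^{\oplus}(C')|<|coupled^{\oplus}(C)|$, which contradicts the defining minimality of $C^{sep}$.

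The hard part will be the path-surgery in the second and third steps: carefully combining undirected paths while tracking the forbidden regions $C^{\oplus}$ and $C'^{\oplus}$, and invoking f-closedness in precisely the right places to upgrade $coupled^{\oplus}$-statements into $coupled^+$-statements so that splittability becomes applicable. The f-closed hypothesis is what allows ``leakage'' from $u_R^{+,R}$ back into $u_R^{\oplus}$ to be controlled; the splittable hypothesis is used only at the very end to close the argument.
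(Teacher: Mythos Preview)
Your overall strategy---find another sink whose $coupled^{\oplus}$-set is strictly smaller---matches the paper's, but your Step~1 contains a genuine gap. The claim that $C'$ itself must be a sink is not justified by the propagation you describe, and in fact need not hold. You have an \emph{undirected} path $P: v_R \leftrightarrow u_S$ and the fact $u_T \in u_S^{\oplus}$; to ``walk $P$ backward'' you would need to push consistent reachability of $u_T$ \emph{against} the direction of edges on $P$, which neither ordinary reachability nor the f-closed hypothesis gives you (f-closedness only lets you pass from $v_{R'}^{\oplus}\cap u_{R'}^{+,R'}$ into $u_{R'}^{\oplus}$, and you have no reason for $u_T\in u_{R'}^{+,R'}$ along the way). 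Even the final step ``through $R$ itself to $u_R$'' would require $u_T\in u_R^{+,R}$, which you have not established. The paper avoids this entirely: instead of proving $C'$ is a sink, it shows (without using f-closedness) that for a sink $C$ the set $coupled^{\oplus}(C)$ is upward closed under $<^{\oplus}$, so one may replace $C'$ by any sink $D$ with $C'\leq^{\oplus} D$; such a $D$ is automatically in $coupled^{\oplus}(C^{sep})$ and incomparable to $C^{sep}$.

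For Steps~2 and~3 your plan is in the right spirit but underspecified where it matters most. The paper does \emph{not} argue the containment directly at the $coupled^{\oplus}$ level; it first proves that for sinks $C^+ = C^{\oplus}$ (this is where f-closedness and splittability are both used, via a two-path argument), so that $coupled^{\oplus}$ and $coupled^+$ coincide on sinks. It then proves, purely from splittability, that $C_1\in coupled^+(C_2)$ with $C_1\neq C_2$ forces $coupled^+(C_1)\subsetneq coupled^+(C_2)$---this is the delicate path surgery, carried out through an auxiliary lemma that upgrades a $C^+$-avoiding witness to a single-edge $u_R^{+,R}$-avoiding witness. Your proposed concatenation $P\cdot Q$ and the check $(P\cdot Q)\cap C^{\oplus}=\emptyset$ are exactly where these difficulties hide; in particular, the internal SCC routing between $u_S$ and $v_{S'}$ can pass through nodes that lie in some $u_R^{\oplus}$, so one needs the $C^+$-level argument rather than the $C^{\oplus}$-level one. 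In short: fix Step~1 by passing to a sink above $C'$, and expect the real work to be the $C^+=C^{\oplus}$ identity and the strict-containment lemma, rather than a direct $coupled^{\oplus}$ computation.
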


In other words, the sink $C$ with the smallest $coupled^{\oplus}(C)$ is a separator (there can be many). In the next subsection, we use the existence of a separator to design a recursive polynomial time algorithm for splittable graphs.

\subsection{The Recursive Algorithm}
\label{sec:recursive:algo}

We present here an algorithm, $\textsc{RecursiveSplit}$, that takes as input an instance $I$
and a splittable graph $G$ and returns \texttt{True} if $I \vDash G$, otherwise \texttt{False}. 
The algorithm is recursive on the number of inconsistent relations, $|E^i(G)|$ of $G$.
For the base case $E^i(G) = \emptyset$ (all relations are consistent), we have
that $\textsc{RecursiveSplit}(I,G) = \texttt{True}$ if and only if $G(I)$ is true.

We next show how to recursively compute $\textsc{RecursiveSplit}(I,G)$ when $|E^i(G)| > 0$. Since $G$ is a splittable and f-closed graph, \autoref{thm:separator_exists} tells us that there exists a separator $C$. We partition the edges of $E^i$ into a left ($\mL$) and right ($\mR$) set  as follows:
\begin{align*}
\mL^C  = \setof{R \in E^i}{[R] \in coupled^{\oplus}(C)} \quad , \quad 
\mR^C  = E^i \setminus \mL^C
\end{align*}
Let $S_C$ denote the unique SCC that contains all the sources for the edges in $C$. Recall from \autoref{sec:cycles} that one can use the algorithm \textsc{FrugalSCC} to compute  the compression $A_{S_C}(I)$ of $\mM_{S_C}(I)$ in polynomial time, since $S_C$ is a strongly connected graph. Let $\mathcal{A}$ denote the set of all tuples that appear in some or-set of $A_{S_C}(I)$, and $\mathcal{B} = \Pi_{C^{\oplus}}(G^f(I))$. For some $\mba \in \mathcal{A}$, we say that $\mba$ is {\em aligned} with $\mbb \in \mathcal{B}$, denoted $\mba \| \mbb$, if there exists a tuple $t \in G^f(I)$ such that $t[V(S_C)] = \mba$ and $t[C^{\oplus}] = \mbb$. Also, define $algn(\mbb) = \setof{\mba \in \mathcal{A}}{\mba \| \mbb}$. Observe that $\mba$ can be aligned with at most one $\mbb$, since there exists a consistent directed path from every node of $V(S_C)$ to every node of $C^{\oplus}$. Notice also that when $C^{\oplus} = \emptyset$, all the tuples in $\mathcal{A}$ are vacuously aligned with the empty tuple $()$. 

For every $\mbb \in \mathcal{B}$, choose a tuple $t^{(\mbb)} \in G^f(I)$ such that $t^{(\mbb)}[C^{\oplus}] = \mbb$. For every tuple $\mba \in \mathcal{A}$, we now define a subinstance $I[\mathbf{a}] \subseteq I$ such that:
\begin{align*}
R^{I[\mba]} = 
\begin{cases}
R^{I}  & \text{if $R \in E^c(G)$},\\
\setof{ (t^{(\mbb)}[u_R], t^{(\mbb)}[v_R])}{\mbb: \mba \| \mbb} & \text{if $R \in \mathcal{R}^C$},\\
\setof{(t[u_R], t[v_R])}{t \in G^{f}(I) \mbox{ s.t. } t[V(S_C)] = \mba} & \text{if $R \in \mL^C$}.
\end{cases}
\end{align*}

Notice that if some relation $R$ belongs in $S_C$, then it must contain exactly one tuple, while if $u_R$ belongs in $V(S_C)$, then $R^{I[\mba]}$ contains exactly one key-group. On the other hand, the relations that do not belong in LC contain only one tuple that contributes to $t^{(\mbb)}$. 

The first key idea behind the above construction of subinstances is captured by the following lemma, which shows that certain subinstances are independent in the relations of $\mL^C$.

\begin{lemma} 
\label{lem:key-group:independence}
Let $\mba_1, \mba_2 \in \mathcal{A}$. The instances $I[\mathbf{a}_1], I[\mathbf{a}_2]$ share no key-groups in any relation $R \in \mL^C$ if either of the following two conditions hold:
\begin{packed_enum}
\item $\mba_1, \mba_2$ belong in different or-sets of $A_{S_C}(I)$.
 \item $\mba_1 \| \mbb_1, \mba_2 \| \mbb_2$, and $\mbb_1 \neq \mbb_2$.
\end{packed_enum}
\end{lemma}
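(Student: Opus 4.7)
The plan is to argue by contrapositive: if $R^{I[\mba_1]}$ and $R^{I[\mba_2]}$ share a key-group for some $R \in \mL^C$, I will show that $\mba_1, \mba_2$ must lie in the same or-set of $A_{S_C}(I)$ and align to the same $C^{\oplus}$-tuple, which denies both (1) and (2) simultaneously. Unfolding the definition of $R^{I[\mba]}$ for $R \in \mL^C$, a shared key-group supplies full-answer witnesses $t_1, t_2 \in G^f(I)$ with $t_i[V(S_C)] = \mba_i$ and $t_1[u_R] = t_2[u_R]$.

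For condition (1), I want to exhibit a common constant at some $V(S_C)$-coordinate of $\mba_1, \mba_2$, which, by the representability of $A_{S_C}(I)$ coming from \autoref{th:scc} (and \autoref{def:representable}), forces the two tuples into the same or-set. When $[R] = C$, the node $u_R$ already lies in $V(S_C)$ and the shared value $t_1[u_R] = t_2[u_R]$ is the desired witness. Otherwise $[R] \in coupled^{\oplus}(C) \setminus \{C\}$, so there exists an undirected path $P$ from $v_{R_0}$ (for some $R_0 \in C$) to $u_R$ with $P \cap C^{\oplus} = \emptyset$. I would argue that the agreement of $t_1, t_2$ at $u_R$ lets us splice them along $P$ to produce a new full answer $t_3 \in G^f(I)$ agreeing with $t_1$ on $V(S_C)$ but with $t_2$ at the endpoint $v_{R_0}$. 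Because $v_{R_0}$ enters $V(S_C)$ through the edge $R_0 \in C$, one further functional step through $u_{R_0}$ exposes a shared $V(S_C)$-coordinate between $\mba_1, \mba_2$.

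For condition (2), once $\mba_1, \mba_2$ are pinned to the same or-set, I would invoke $C^{\oplus} = \bigcap_{R \in C} u_R^{\oplus}$: every $y \in C^{\oplus}$ is consistently reachable from each source $u_{R_0}$ with $R_0 \in C$, and consistent edges are functional in their key, so agreement at some $u_{R_0}$ propagates uniquely to every $y \in C^{\oplus}$. The stitching manufactured above already exhibits a full answer witnessing agreement of $t_1, t_2$ on some $u_{R_0}$ (when $[R] = C$ this is just $R$ itself), so consistent propagation yields $\mbb_1 = t_1[C^{\oplus}] = t_2[C^{\oplus}] = \mbb_2$, contradicting (2).

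The main obstacle is the stitching step: I must verify that the hybrid of $t_1, t_2$ along the undirected path $P$ is a genuine full answer in $G^f(I)$ and not merely a partial assignment. This will crucially use that $I$ is purified (so every tuple extends to a full answer) and the f-closed hypothesis $v_R^{\oplus} \cap u_R^{+,R} \subseteq u_R^{\oplus}$ from \autoref{sub:split:sets}, which precludes hidden interactions between the consistent forward closure of $v_R$ and the inconsistent forward closure of $u_R$ that could otherwise break the splice. Pinning down exactly which atoms of the query are affected by the swap, and verifying that the condition $P \cap C^{\oplus} = \emptyset$ is precisely the separation needed to keep the swap valid, is the delicate technical heart of the argument.
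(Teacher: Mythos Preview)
Your proposal misses the key simplification that makes the paper's proof a two-line argument: you are using only the defining property of $coupled^{\oplus}(C)$ (an undirected path from some $v_{R_0}$ to $u_R$ avoiding $C^{\oplus}$), but the algorithm has already chosen $C$ to be a \emph{separator}. By the separator definition, every $[R] \in coupled^{\oplus}(C)$ with $[R] \neq C$ satisfies $[R] <^{\oplus} C$, which unfolds to: there exists $S \in C$ with $u_S \in [R]^{\oplus} \subseteq u_R^{\oplus}$, i.e.\ a \emph{consistent directed} path $P_S : u_R \rightarrow u_S$ into $V(S_C)$. (When $[R] = C$ the node $u_R$ is already in $V(S_C)$, which you handled.) Because consistent edges are functional in their key, $t_1[u_R] = t_2[u_R] = c$ immediately forces $\mba_1[u_S] = t_1[u_S] = t_2[u_S] = \mba_2[u_S]$, contradicting value-disjointness of distinct or-sets for (1). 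For (2), the same equality $\mba_1[u_S] = \mba_2[u_S]$ propagates along the consistent paths $u_S \rightarrow v$ for each $v \in C^{\oplus}$, yielding $\mbb_1 = \mbb_2$. That is the entire proof.

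Your splicing/stitching program along an \emph{undirected} path is therefore unnecessary, and as you yourself flag, its correctness is not established: producing a genuine $t_3 \in G^f(I)$ by hybridising $t_1, t_2$ along an undirected path that may traverse inconsistent edges in either direction is not guaranteed by purification alone, and the f-closure hypothesis is a red herring here (it is used to guarantee that a separator \emph{exists}, not in the proof of this lemma). The fix is simply to invoke the separator property and replace the undirected coupling path by the consistent directed path it provides.
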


\begin{proof}
To show (1), assume for some $R \in \mL^C$ that the key-group $R(\underline{c},-)$ appears both in $I[\mathbf{a}_1],  I[\mathbf{a}_2]$. Since $[R] <^{\oplus} C$, there exists a path $P_S: u_R \rightarrow u_S$, where $u_S \in V(S_C)$. It follows from our construction that  both $(c,\mathbf{a}_1[u_S]), (c,\mathbf{a}_2[u_S]) \in \Pi_{u_R, u_S}(P_S^f(I))$. But since $P_S$ contains only consistent relations, it must be that $\mathbf{a}_1[u_S] = \mathbf{a}_2[u_S]$, a contradiction to the fact that $\mathbf{a}_1, \mathbf{a}_2$ are value-disjoint (since they belong in different or-sets).

To show (2), let $R \in \mL^C$ and assume that a key-group $R(\underline{c},-)$ appears both in $I[\mathbf{a}_1],  I[\mathbf{a}_2]$. As the argument for (1), there will be some $u_S \in V(S_C)$ such that $\mba_1[u_S] = \mba_2[u_S]$. Since $\mba_1 \| \mbb_1$, there exists a tuple $t_1 \in G^f(I)$ such that $t_1[u_S] = \mba_1[u_S]$ and $t_1[C^{\oplus}] = \mbb_1$. Similarly, since $\mba_2 \| \mbb_2$, there exists a tuple $t_2 \in G^f(I)$ such that $t_2[u_S] = \mba_2[u_S]$ and $t_2[C^{\oplus}] = \mbb_2$. But now, $t_1[u_S] = t_2[u_S]$ and $t_1[C^{\oplus}] \neq t_2[C^{\oplus}]$, which is a contradiction, since each node $u_S$ for $S \in C$ has a consistent path $P: u_S \rightarrow v$ for every $v \in C^{\oplus}$. 
\end{proof}

The second key idea is that computing whether $I[\mba] \vDash G$ can be reduced to a computation where $G$ contains strictly less inconsistent relations. Indeed, recall that in $I[\mba]$, every relation $R_i \in C$, $i=1, \dots, m$, contains exactly one key-group, $R_i(\underline{\mathbf{a}[u_{R_i}]},-)$ (and if it both vertices of $R$ are in $S_C$, it contains exactly one tuple). We can now apply a "brute force" approach and try all the possible combinations of choices for these key-groups, since they are polynomially many: each such combination will create a new instance where the relations in $C$ will be consistent, and thus can be computed in polynomial time by induction. The procedure $\textsc{Simplify}(I[\mba],G)$ formally presents the algorithm we sketched.

\begin{algorithm}
\SetAlgoLined
\DontPrintSemicolon
$K = \setof{(c_1, \dots, c_m)}{\forall i: R_i(\underline{\mba[u_{R_i}]}, c_i)}$ \;
$G' \leftarrow G$ where all edges of $C$ are of consistent type \; 
$\forall \mathbf{c} \in K$:  
$I[\mba]^{(\mathbf{c})} \leftarrow (I[\mathbf{a}] \setminus \bigcup_{i=1}^m R_i(\mba[u_{R_i}],-))  \bigcup_{i=1}^m R_i(\mathbf{a}[u_{R_i}], c_i)$ \;
\KwRet{ $(\forall \mathbf{c} \in K$: 
$\textsc{RecursiveSplit}(I[\mba]^{(\mathbf{c})},G') = \texttt{True})$}
\caption{\textsc{Simplify}($I[\mba],G$)}
\end{algorithm}

\begin{algorithm}
\SetAlgoLined
\DontPrintSemicolon
\lIf{$E^i(G) = \emptyset$} {\KwRet{G(I)}}
Find a separator $C$ of $G$ \;
$\mathcal{B} \leftarrow \Pi_{C^{\oplus}}(G^f(I))$ \;
$A_{S_C}(I) \leftarrow \textsc{FrugalSCC}(I, S_C)$ \;
\For{ $ \mathbf{b} \in \mathcal{B}$}{
  \uIf{ $\exists$ or-set $A \in A_{S_C}(I)$ s.t. $\forall \mba \in A \cap algn(\mbb) \Rightarrow \textsc{Simplify}(I[\mba],G) = \texttt{True}$} 
  {
 $r[\mathbf{b}] \leftarrow$ any repair of 
   $\left (\bigcup_{\mba \in algn(\mbb)} I[\mba] \right)$
  }
   \uElse {$r[\mathbf{b}] \leftarrow \emptyset$}
}
$\forall R \in E(G)$: $R^{I'} =\begin{cases}
R^{I} \cap ( \bigcup_{ \mathbf{b} \in \mathcal{B}} r[\mathbf{b}] ) & \text{if $R \in \mL^C$},\\
R^I& \text{otherwise}.
\end{cases}$ \;
$G' \leftarrow G$ where all edges in $\mL^C$ are of consistent type\;
\KwRet{$\textsc{RecursiveSplit}(I', G')$}
\caption{\textsc{RecursiveSplit}($I,G$)}
\end{algorithm}

We first argue that the algorithm \textsc{RecursiveSplit} runs in polynomial time. First,  the final recursive call on $I,G'$, the graph $G'$ has $|E^i(G)| - |\mL^C| < |E^i(G)|$ inconsistent edges, so by the induction argument can be computed in polynomial time. Second, the algorithm calls $\textsc{Simplify}(I[\mba], G)$ at most $|\mathcal{A}|$ times, and we have shown that each such call can be computed in polynomial time. We next argue that \textsc{RecursiveSplit} correctly computes whether $I \vDash G$ or not. We prove first:

\begin{lemma}
\label{lem:equiv:computation}
$\bigcup_{\mba \in algn(\mbb)} I[\mba] \vDash G$ if and only if  there exists an or-set $A \in A_{S_C}(I)$ such that for every $\mba \in A \cap algn(\mbb)$, $I[\mba] \vDash G$.
\end{lemma}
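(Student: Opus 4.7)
The plan is to prove the two directions of the biconditional separately, leveraging the key-group independence established in \autoref{lem:key-group:independence} together with the or-set representation of frugal repairs for the strongly connected subgraph $S_C$ from \autoref{th:scc}. The intuition is that a repair of $\bigcup_{\mba \in algn(\mbb)} I[\mba]$, restricted to the $S_C$-edges, must ``select'' exactly one $\mba$ per or-set of $A_{S_C}(I)$, so the hypothesis about a single or-set transfers to the whole union.

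For the $(\Leftarrow)$ direction, I would take an arbitrary repair $r$ of $\bigcup_{\mba \in algn(\mbb)} I[\mba]$ and exhibit $G(r) = \texttt{true}$. The first step is to observe that the restriction of $r$ to the atoms inside $S_C$ yields, via the compact representation $A_{S_C}(I)$, a distinguished tuple $\mba^\star_{A'} \in A'$ for each or-set $A' \in A_{S_C}(I)$; in particular, for the specific $A$ given by the hypothesis, the selected $\mba^\star_A$ lies in $A$. Because the union only contains subinstances indexed by $algn(\mbb)$, the tuple $\mba^\star_A$ must belong to $A \cap algn(\mbb)$. The next step is to show that $r \cap I[\mba^\star_A]$ is itself a repair of $I[\mba^\star_A]$: for atoms in $E^c \cup \mR^C$ the restrictions coincide across $algn(\mbb)$, while for atoms in $\mL^C$ the argument uses \autoref{lem:key-group:independence} to guarantee that every $\mL^C$ key-group of $I[\mba^\star_A]$ is covered by $r$'s choice within the same subinstance. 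Then the hypothesis $I[\mba^\star_A] \vDash G$ yields $G(r \cap I[\mba^\star_A])$, and monotonicity gives $G(r)$.

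For the $(\Rightarrow)$ direction, I would argue the contrapositive. Suppose no or-set has the stated property: for every $A' \in A_{S_C}(I)$ there exists $\mba_{A'} \in A' \cap algn(\mbb)$ and a repair $r_{A'}$ of $I[\mba_{A'}]$ with $G(r_{A'}) = \texttt{false}$. The plan is to glue these into a single repair $r^\star$ of the union: on $E^c \cup \mR^C$ the restrictions already coincide across $algn(\mbb)$, and on $\mL^C$ the key-groups of $I[\mba_{A_1}]$ and $I[\mba_{A_2}]$ are disjoint for $A_1 \neq A_2$ by \autoref{lem:key-group:independence}(1), so the $r_{A'}$'s can be combined without conflict. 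Any leftover key-groups (those arising from $I[\mba']$ with $\mba'$ in the same or-set as some $\mba_{A'}$ but distinct) can be filled arbitrarily to produce a maximal consistent subset. The crucial claim is $G(r^\star) = \texttt{false}$: any hypothetical witness for $G$ in $r^\star$ would project on $S_C$ to tuples coming from exactly one or-set $A'$, and because the separator property of $C$ insulates $\mL^C$ from $\mR^C$, this witness would already be realized in $r_{A'}$ inside $I[\mba_{A'}]$, contradicting $G(r_{A'}) = \texttt{false}$.

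The main obstacle is the ``routing'' argument common to both directions: showing that any witness for $G$, after restriction to $S_C$, lies inside a single or-set $A'$, and that this witness can be traced back to, or forward from, the single subinstance $I[\mba^\star_A]$ or $I[\mba_{A'}]$. This relies on combining the or-set structure of $A_{S_C}(I)$ (whose or-sets use pairwise disjoint constants by \autoref{def:representable}) with the separator property of $C$, which guarantees that removing $C^\oplus$ disconnects $\mL^C$ from the rest. Subtle corner cases include $C^\oplus = \emptyset$, where every $\mba$ is vacuously aligned with the empty tuple, and the situation where several $\mba$'s inside the same or-set all lie in $algn(\mbb)$ and overlap in $\mL^C$ key-groups; in the latter, the choice of $\mba^\star_A$ (resp.\ $\mba_{A'}$) must be made so that $r$'s actual tuple assignments in $\mL^C$ remain compatible with $I[\mba^\star_A]$ (resp.\ $I[\mba_{A'}]$).
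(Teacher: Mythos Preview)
Your proposal is correct and follows essentially the same approach as the paper's own proof. Both argue the $(\Rightarrow)$ direction by contrapositive, gluing together falsifying repairs $r_{A'}$ of the individual $I[\mba_{A'}]$ via \autoref{lem:key-group:independence}(1), and both argue the $(\Leftarrow)$ direction by locating, inside an arbitrary repair of the union, a tuple $\mba\in A\cap algn(\mbb)$ and appealing to $I[\mba]\vDash G$.

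Two small differences are worth noting. First, in the $(\Leftarrow)$ direction the paper restricts attention to \emph{frugal} repairs of the union (invoking \autoref{prop:frugal_equivalence}); you instead work with an arbitrary repair and pass through the intermediate step of checking that $r\cap I[\mba^\star_A]$ is a repair of $I[\mba^\star_A]$, then apply monotonicity. Either route works. Second, for the $(\Rightarrow)$ direction the paper simply asserts that $r=\bigcup_i r^{(\mba_i)}$ ``cannot satisfy $G$'' without spelling out the routing argument or the need to fill in leftover $\mL^C$ key-groups; your explicit treatment of both points (using value-disjointness of or-sets from \autoref{def:representable} and the separator structure of $C$) fills in what the paper leaves implicit.
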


\begin{proof}
For the one direction, assume for the sake of contradiction that for every or-set $A_i$ (let $i=1, \dots, M$), there exists a tuple $\mba_i \in A_i$ such that $\mba_i \| \mbb$ and $I[\mba_i] \not \vDash G$. Then,
there exists a repair $r^{(\mba_i)} \subseteq I[\mba_i]$ such that $G(r^{(\mba_i)})$ is false. By \autoref{lem:key-group:independence}(1), the repairs $r^{(\mba_i)}$  will never conflict on their choices for the relations in $\mL^C$, and by the construction of $I[\mba_i]$, all the other relations are consistent and contain the same tuples. Hence, $r = \bigcup_{\mba \in algn(\mbb)} r^{(\mba_i)}$ is a repair for $\bigcup_{\mba \in algn(\mbb)} I[\mba]$ that cannot satisfy $G$, a contradiction.

For the inverse direction, assume that there exists an or-set $A$ such that for every $\mba \in A \cap algn(\mbb)$, $I[\mba] \vDash G$. If $r$ is a frugal repair for $\bigcup_{\mba \in algn(\mbb)} I[\mba]$, it must be that $\mba \in \prod_{V(S_C)} G^f(r)$ for some $\mba \in A$. But since $I[\mba] \vDash G$, any choice that $r$ has made on the key-groups of $\mL^C$ that appear in $I[\mba]$ will create a tuple in $G^f(r)$.
\end{proof}

Given a repair $r$ of $I$ and a repair $r^{(\mbb)}$ of $\bigcup_{\mba \in algn(\mbb)} I[\mba]$, we define $merge^{C}(r, r^{(\mbb)})$ as a new repair $r_m$ of $I$ such that for any key-group $R(\underline{a},-)$, if $R \notin \mL^C$ or $r^{(\mbb)}$ does not contain the key-group, $r_m$ includes the choice of $r$; otherwise, it includes the choice of $r^{(\mbb)}$. In other words, to construct $r_m$ we let $r^{(\mbb)}$ overwrite $r$ only in the relations of $\mL^C$. Our main technical lemma states:

\begin{lemma}
\label{lem:repair:substitute}
For any frugal repair $r$ of  $I$:
\begin{packed_enum}
\item If $\bigcup_{\mba \in algn(\mbb)} I[\mba] \not \vDash G$ then $\mbb \notin \prod_{C^{\oplus}} G^f(r)$.
\item If $\bigcup_{\mba \in algn(\mbb)} I[\mba] \vDash G$ then for any repair $r'$ of  $\bigcup_{\mba \in algn(\mbb)}I[\mba]$, $G(r) = G(merge^C(r, r'))$.
\end{packed_enum}
\end{lemma}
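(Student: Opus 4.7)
The plan is to prove both parts from a single \emph{localization principle}: the merge operation $merge^C(r, r')$ differs from $r$ only on $\mL^C$-key-groups that appear in $\bigcup_{\mba \in algn(\mbb)} I[\mba]$, and by Lemma \ref{lem:key-group:independence}(2) those key-groups are disjoint from any $\mL^C$-key-group used by an answer of $r$ whose $C^{\oplus}$-projection is some $\mbb' \neq \mbb$. Consequently, the merge preserves all non-$\mbb$-projection answers of $r$ (in both directions of inclusion), so only the $\mbb$-projection slice of $G^f(\cdot)$ may change.

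For Part (1), I derive a contradiction with frugality of $r$. Assuming $\mbb \in \Pi_{C^{\oplus}}(G^f(r))$, witnessed by some $t^* \in G^f(r)$, I take a repair $r^{sub}$ of $\bigcup_{\mba \in algn(\mbb)} I[\mba]$ with $G(r^{sub})$ false and form $r_m = merge^C(r, r^{sub})$, first verifying that $r_m$ is a valid repair of $I$ (each overwritten key-group receives exactly one tuple from $r^{sub}$, all other key-groups keep $r$'s maximal choices). By localization, every non-$\mbb$ answer of $r_m$ is also an answer of $r$, giving $G^f(r_m) \cap \setof{t}{t[C^{\oplus}] \neq \mbb} \subseteq G^f(r)$. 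I then show that $r_m$ admits no $\mbb$-answer: any hypothetical such answer would draw all its $\mL^C$-tuples from $r^{sub}$'s overwrites within the sub-instance, and its $\mR^C$- and $E^c$-tuples would all already be present in the sub-instance (for $\mR^C$ by construction of $I[\mba]$ from $t^{(\mbb)}$, for $E^c$ trivially), producing a $G$-answer in $r^{sub}$ and contradicting $G(r^{sub})$ false. Therefore $G^f(r_m) \subsetneq G^f(r)$, violating frugality of $r$.

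For Part (2), by localization the question reduces to whether the $\mbb$-slice of $G^f$ changes under the merge. The reverse direction (``$r_m$ has a $\mbb$-answer $\Rightarrow$ $r$ has one'') mirrors the argument in Part (1): such an answer of $r_m$ restricts to a $G$-answer of $r'$ in the sub-instance, and since $\bigcup I[\mba] \vDash G$ by hypothesis, the contrapositive of Part (1) forces $\mbb \in \Pi_{C^{\oplus}}(G^f(r))$. The forward direction is the main obstacle: given $\mbb \in \Pi_{C^{\oplus}}(G^f(r))$ I must lift a $G$-answer of $r'$ in the sub-instance into a $G$-answer of $r_m$ in $I$. The difficulty is that $r'$'s answer uses, for each $R \in \mR^C$, the fixed tuple $(t^{(\mbb)}[u_R], t^{(\mbb)}[v_R])$ installed in $I[\mba]$, while $r_m$ inherits $r$'s choices on $\mR^C$-key-groups. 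I plan to resolve this by using the existing $\mbb$-witness in $G^f(r)$: the consistent directed paths from $V(S_C)$ to $C^{\oplus}$ (which define alignment) traverse exactly the $\mR^C \cup E^c$ relations connecting the two, so the tuples $r$ picks in $\mR^C$-key-groups reachable from the $S_C$-values of the witness are forced to agree with the corresponding $t^{(\mbb)}$-values; composing these with $r'$'s $\mL^C$-choices then yields the desired $\mbb$-answer of $r_m$.
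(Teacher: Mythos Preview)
Your localization principle (non-$\mbb$ answers survive the merge in both directions) is correct and matches the paper. The gap is in your handling of the $\mbb$-slice. In Part~(1) you claim that a hypothetical $\mbb$-answer $t$ of $r_m$ has ``its $\mR^C$- and $E^c$-tuples \ldots\ already present in the sub-instance.'' This is not justified: for $R\in\mR^C$ the sub-instance $I[\mba]$ contains only the \emph{single} tuple $(t^{(\mbb)}[u_R],t^{(\mbb)}[v_R])$, whereas $t$ uses whatever $r$ chose, and nothing forces $t[u_R]=t^{(\mbb)}[u_R]$ or $t[v_R]=t^{(\mbb)}[v_R]$. The same problem resurfaces in your Part~(2) forward direction: the ``forcing'' you invoke along consistent paths from $V(S_C)$ to $C^{\oplus}$ concerns only consistent edges and says nothing about values at the endpoints of $\mR^C$-edges, which are inconsistent and need not lie on such paths. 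So trying to push the \emph{same} tuple $t$ down into $r^{sub}$ (or lift the same tuple from $r'$ up to $r_m$) does not work.

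The paper's proof does not attempt this. Instead it defines a vertex set $V^{(C)}\subseteq V(G)$ (those $v$ weakly connected to some $u_S$, $S\in\mR^C$, by a path avoiding $C^{\oplus}$), proves a structural sub-lemma that for every inconsistent $T$ one has $T\in\mL^C$ exactly when $u_T\notin V^{(C)}$, and then \emph{stitches} a new witness tuple: take the values of one tuple on $V^{(C)}$ and of the other on the complement. A four-case edge analysis (both endpoints in $V^{(C)}$; both outside; or crossing, in which case the separator property forces one endpoint into $C^{\oplus}$ where the two tuples agree) shows the stitched tuple lies in the target repair. This vertex partition and the separator-based sub-lemma are the missing ingredients; without them your argument does not close.
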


\begin{proof}
To show (2), we will show that if $G(r)$ is true, then for $r'' = merge^C(r,r')$, $G(r'')$ is true as well (this suffices to prove (2), since for each repair $r$, there exists a repair $r''$ of $\bigcup_{\mba \in algn(\mbb)} I[\mba]$ such that $r = merge^C(r,r'')$). Let $t \in G^f(r)$. If $t[C^{\oplus}] \neq \mbb$, then $t \in G^f(r'')$ as well, since the merging of $r,r'$ influences only tuples where $t[C^{\oplus}] = \mbb$. So now assume that $t[C^{\oplus}] = \mbb$.

Define the set of vertices $V^{(C)}$ to contain all the nodes $v \in V(G)$ for which there exists a path $P: v \leftrightarrow u_S$ for some $S \in \mathcal{R}^C$ such that $P \cap C^{\oplus} = \emptyset$. We show first:
\begin{lemma}
For any relation $T \in E^i(G)$, $T \in \mL^C$ iff $u_T \in V^{(C)}$.
\end{lemma}

\begin{proof}
We first show that if $u_T \in V^{(C)}$, $T \notin \mL^C$. Indeed, if $T \in \mL^C$ we would have a path $P': u_T \leftrightarrow v_R$, for some $R \in C$ such that $P' \cap C^{\oplus}  = \emptyset$ and, since $u_T \in V^{(C)}$, another path $P: u_T \leftrightarrow u_S$ for some $S \in \mathcal{R}^C$ where $P \cap C^{\oplus} = \emptyset$. But then, the path $P'' = P', P$ connects $v_R$ with $u_S$ and is not intersected by $C^{\oplus}$, which contradicts the fact that $[S] \notin coupled^{\oplus}(C)$. 

For the other direction, assume that $u_T \notin V^{(C)}$. If $T \notin \mL^C$, then we would have $u_T \in C^{\oplus}$, which would imply that $C <^{\oplus} [T]$. However, this is a contradiction to the fact that $C$ is a separator. 
\end{proof}

For $r'$, there must exist a tuple $t \in G^f(r')$ such that $t[C^{\oplus}] = \mbb$. Now, define a tuple $t''$ as follows: if $v \in V^{(C)}$, $t''[v] = t[v]$, otherwise $t''[v] = t'[v]$. We will show that $t'' \in G^f(r'')$, which proves that $G(r'')$ is true. In particular, we will show that for every relation $T \in E(G)$, the tuple $s = (t''[u_T], t''[v_T])$ belongs in $r''$. We distinguish four cases:
\begin{packed_item}
\item $u_T, v_T \in V^{(C)}$: Then, $s = (t[u_T], t[v_T])$. Clearly, $s$ belongs in $r$, and since $T \notin \mL^C$, $s$ belongs in $r''$ as well.
\item  $u_T, v_T \notin V^{(C)}$: Then, $s = (t'[u_T], t'[v_T])$. Clearly, $s$ belongs in $r'$. If $T$ is consistent, then it will belong in $r''$ as well. If not, then by the above lemma $T \in \mL^C$, which implies that the merging will add $s$ in $r''$.
\item $u_T \in V^{(C)}, v_T \notin V^{(C)}$:  Since there exists a path from $u_T$ to some node $u_S$,
where $S \in \mR^C$, not intersected by $C^{\oplus}$, and no such path from $v_T$, it must be that
$v_T \in C^{\oplus}$. But then, $t''[v_T] = t'[v_T] = \mbb[v_T] = t[v_T]$. Thus, 
$s = (t[u_T], t[v_T])$, and then the argument goes as in the first item.
\item $u_T \notin V^{(C)}, v_T \in V^{(C)}$: this scenario is not possible. Indeed, similar to the above case, it must be that $u_T \in C^{\oplus}$. Now, if $T$ is consistent, we would have $v_T \in V^{(C)}$ as well, a contradiction. If $T$ is inconsistent, then it must be that $T \in C$ (since $C$ is a sink); but then, the fact that $v_T \in V^{(C)}$ implies that $C \in coupled^{\oplus}([R])$, where $R \in \mR^C$, a contradiction. 
\end{packed_item}

To show (1), assume that there exists a tuple $t \in G^f(r)$ such that $t[C^{\oplus}] = \mbb$; we will show that this is a contradiction. Since $\bigcup_{\mba \in algn(\mbb)} I[\mba] \not \vDash G$, there exists a repair $r'$ of $\bigcup_{\mba \in algn(\mbb)} I[\mba] $ such that $G(r')$ is false. Let $r'' = merge^C(r,r')$; we will show that $G^f(r'') \subset G^f(r)$, which contradicts the fact that $r$ is frugal. Notice first that if $t'' \in G^f(r'')$ and $t''[C^{\oplus}] \neq \mbb$, then $t'' \in G^f(r)$ as well.
So now, let $t'' \in G^f(r'')$ such that $t''[C^{\oplus}] = \mbb$. As in the proof for item (1), we construct a tuple $t'$ such that if $v \in V^{(C)}$, $t'[v] = t^{(\mbb)}[v]$, otherwise $t'[v] = t''[v]$, and using a similar argument one can show that $t' \in G^f(r')$; however, this is a contradiction, since $G(r')$ is false.
\end{proof}

To see why \autoref{lem:equiv:computation} and \autoref{lem:repair:substitute} imply the correctness of the algorithm, consider first the case where for some $\mbb \in \mathcal{B}$, for any or-set $A \in A_{S_C}(I)$, there exists some $\mba \in A$ that is aligned with $\mbb$ such that $I[\mba] \not \vDash G$. Then, \autoref{lem:equiv:computation} tells us that $\bigcup_{\mba \in algn(\mbb)} I[\mba] \not \vDash G$ and thus, by \autoref{lem:repair:substitute}(1), for every frugal repair $r$ of $I$, $\mbb \notin \Pi_{C^{\oplus}} G^f(r)$. Hence, all the key-groups of the relations in $\mL^C$ that appear in $I[\mba]$, for any $\mba$ aligned with $\mbb$, can be safely removed from the instance: this is exactly what setting $r[\mbb] = \emptyset$ achieves. On the other hand, assume that  for some $\mbb \in \mathcal{B}$, there exists an or-set $A \in A_{S_C}(I)$, where for every $\mba \in A \cap algn(\mbb)$, $I[\mba] \vDash G$. Then, \autoref{lem:equiv:computation} tells us that  $\bigcup_{\mba \in algn(\mbb)} I[\mba] \vDash G$, and by \autoref{lem:repair:substitute}(2), whether the instance is certain or not is independent of the choice for the key-groups of $\mL^C$ that are contained in $\bigcup_{\mba \in algn(\mbb)} I[\mba]$. 

\subsection{f-closed Graphs}
\label{sub:f-closed}

In this subsection, we show that we can always reduce in polynomial time $G$ with instance $I$ to an f-closed graph $G'$ with instance $I'$ such that  $\mM_G(I) = \mM_{G'}(I')$.
For this, we need the following technical lemma.

\begin{lemma}
\label{lem:closure}
Let $R \in E^i$ and $v \in u_R^{+,R} \cap v_R^{\oplus}$. 
Let $P: u_R, e_R, v_R, \dots, v$ be the directed path from $u_R$ to $v$ with $e_R$ as its first edge. If there exist $(a,b_1), (a,b_2) \in \Pi_{u_R,v}(P^f(I))$
such that $b_1 \neq b_2$, then no frugal repair of $G$ contains $a$.
\end{lemma}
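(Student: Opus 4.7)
The plan is to argue by contradiction: suppose some frugal repair $r$ produces a tuple $t \in G^f(r)$ with $t[u_R] = a$, and construct a repair $r'$ with $G^f(r') \subsetneq G^f(r)$ to contradict frugality. Let $R(\underline{a}, \beta) \in r$ be the unique tuple chosen from the key-group $R(\underline{a},-)$. Because the suffix of $P$ from $v_R$ to $v$ consists entirely of consistent edges, the value at each vertex along that suffix is uniquely determined by $\beta$; in particular $t[v] = \gamma(\beta)$ for a deterministic function $\gamma$. The hypothesis $(a,b_1),(a,b_2) \in \Pi_{u_R,v}(P^f(I))$ with $b_1 \neq b_2$ yields two tuples $R(\underline{a},\beta_1), R(\underline{a},\beta_2) \in I$ with $\gamma(\beta_1)=b_1$ and $\gamma(\beta_2)=b_2$. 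At most one of $b_1,b_2$ equals $\gamma(\beta)$, so one can select $\beta^* \in \{\beta_1,\beta_2\}$ with $\gamma(\beta^*) \neq \gamma(\beta)$.

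I would then take $r' = (r \setminus \{R(\underline{a},\beta)\}) \cup \{R(\underline{a},\beta^*)\}$, which is still a repair since only one tuple in one key-group has been exchanged. Every tuple of $G^f$ whose $u_R$-coordinate differs from $a$ is unaffected by the swap, so the $u_R \neq a$ slices of $G^f(r)$ and $G^f(r')$ coincide. It therefore suffices to prove that the $u_R = a$ slice of $G^f(r')$ is empty; combined with $t \in G^f(r)$, this gives the strict inclusion $G^f(r') \subsetneq G^f(r)$ and contradicts frugality.

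For the remaining step I would invoke the alternative path $P' \subseteq G - \{e_R\}$ from $u_R$ to $v$ guaranteed by $v \in u_R^{+,R}$. The key observation is that in any repair, following $P'$ from the value $a$ at $u_R$ yields a single value at $v$: every edge of $P'$ is a key lookup and each key-group in a repair contains exactly one tuple, so values propagate deterministically. Because $P'$ avoids $e_R$ and $r,r'$ agree outside the key-group $R(\underline{a},-)$, this $P'$-determined value, call it $c^*$, is the same in $r$ and $r'$; and since $t \in G^f(r)$ we must have $c^* = t[v] = \gamma(\beta)$. Any hypothetical $t' \in G^f(r')$ with $t'[u_R] = a$ would need $t'[v] = \gamma(\beta^*)$ via $P$ and simultaneously $t'[v] = c^* = \gamma(\beta)$ via $P'$, contradicting $\gamma(\beta^*) \neq \gamma(\beta)$. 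The one point that deserves careful formal treatment, and what I expect to be the main obstacle in the write-up, is the deterministic propagation along $P'$ in a repair---intuitive but worth stating explicitly, especially when $P'$ traverses inconsistent edges---after which the contradiction falls out immediately.
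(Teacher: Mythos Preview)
Your proposal is correct and follows essentially the same route as the paper's proof: assume a frugal repair $r$ with a witness $t$ having $t[u_R]=a$, swap the $R(\underline{a},-)$ choice to force a different value at $v$ along the consistent suffix of $P$, and then use the alternative directed path in $G-\{e_R\}$ (guaranteed by $v\in u_R^{+,R}$) to argue that any witness in the swapped repair would have to take two distinct values at $v$. Your write-up is in fact slightly more explicit than the paper's about the deterministic propagation along $P'$ in a repair, which is exactly the point the paper glosses over in one parenthetical clause.
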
 

\begin{proof}
Suppose for the sake of contradiction that there exists a frugal repair $r$ such that for some tuple 
$t \in G^f(r)$, $t[u_R] = a$ and let $t[v] = b$. Assume w.l.o.g. that $b \neq b_1$.
Let us focus on the key-group $R(\underline{a},-)$ and assume that $R(\underline{a},c) \in r$. For 
the tuple $t_P \in P^f(I)$ where $t_P[u_R]=a, t_P[v]=b_1$, it must be that $t_P[v_R] = c' \neq c$ (if
$t_P[v_R] = c$, then it would have been that $t_P[v] = b \neq b_1$).
Now, construct the new repair $r' = (r \setminus \{R(\underline{a},c)\}) \cup \{R(\underline{a},c')\}$.
We will show that $G^f(r') \subsetneq G^f(r)$, which contradicts the frugality of $r$.

First, consider any tuple $t \in G^f(r')$ such that $t[u_R] \neq a$. Then, $t \in G^f(r)$ as well, since $r,r'$
differ only on the choice for the key-group $R(\underline{a},-)$. 
Next, we will show that no tuple $t$ with $t[u_R]=a$
can belong in $G^f(r')$; this completes the proof, since $G^f(r)$ contains such a tuple.
Indeed, in this case we would have $t[v] = b$ (since there exists a directed
path from $u_R$ to $v$ that does not go through $e_R$, which is the only relation where $r,r'$ differ)
and also $t[v] = b_1$ (since now $R(\underline{a},c') \in r'$), which is a contradiction.
\end{proof}

Now, consider some instance $I$ of $G$ such that $G$ is not f-closed. We  present a polynomial
time algorithm, \textsc{F-Closure}, that reduces the graph to an f-closed graph, while
keeping the representation $\mM_G$ the same. Notice that the algorithm has no specific requirements on the structure of $G$.

\begin{algorithm}
\SetAlgoLined
$I_C \leftarrow I, \quad G_C \leftarrow G$\;
\While{ $ \exists R \in E^i(G_C)$, $v \in V(G_C)$ such that 
 $v \in (u_R^{+,R} \cap v_R^{\oplus}) \setminus u_R^{\oplus}$}{
 $P = u_R, e_R, v_R, \dots, v$\;
 $T = \Pi_{u_R,v}(P^f(I))$\;
 $R^{v} = \setof{(a, b) \in T}{ \nexists (a,b') \in T \text{ where } b' \neq b}$\;
 $I_C \leftarrow I_C \cup \{R^v\}$\;
 $G_C \leftarrow (V(G_C), E(G_C) \cup \{(u_R, v)\})$
}
\KwRet{$I_C, G_C$}
\caption{$\textsc{F-Closure}(I,G)$}
\end{algorithm}

\begin{proposition}
\label{prop:f-close:analysis}
Let $I$ be an instance of graph $G$.  \textsc{F-Closure} returns an instance $I_C$
of an f-closed graph $G_C$ in polynomial time such that 
$\mM_G(I) = \mM_{G_C}(I_C)$.
\end{proposition}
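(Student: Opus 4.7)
The plan is to verify three properties of the algorithm: termination in polynomial time, f-closedness of the output $G_C$, and $\mM_G(I)=\mM_{G_C}(I_C)$. The f-closedness is immediate from the loop guard, so the real work lies in the other two.

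For termination, I would argue that each round of the loop permanently adds a fresh consistent edge $(u_R,v)$: the guard requires $v\notin u_R^{\oplus}$ in the current graph, and once the edge is inserted, $v\in u_R^{\oplus}$ in every subsequent graph because adding consistent edges can only enlarge $u_S^{\oplus}$ for every inconsistent $S$. Hence each pair $(u,v)$ is produced at most once, bounding the number of iterations by $|V(G)|^2$. Within each iteration, the relevant reachability sets are computed by BFS, the set $T=\Pi_{u_R,v}(P^f(I))$ is the answer of an acyclic conjunctive query over $I$, and the duplicate filter is a single scan, all polynomial in $|I|$.

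For the preservation of $\mM$, I would proceed by induction on the iterations. Let $(G',I')$ denote the state just before some iteration and $(G'',I'')=(G'\cup\{e_{R^v}\},\,I'\cup\{R^v\})$ the state just after; it suffices to show $\mM_{G'}(I')=\mM_{G''}(I'')$. Because $R^v$ is a fresh relation symbol of consistent type, every repair of $I''$ is of the form $r'\cup R^v$ for some repair $r'$ of $I'$, and $G''^f$ equals $G'^f$ conjoined with the extra atom $R^v(\underline{u_R},v)$, so $G''^f(r'\cup R^v)\subseteq G'^f(r')$ is automatic. The crux is the reverse inclusion for \emph{frugal} $r'$: for any $t\in G'^f(r')$, the contrapositive of Lemma~\ref{lem:closure} applied to $(G',I')$ and the path $P$ chosen by the algorithm forces the value $a=t[u_R]$ to admit a unique $v$-extension in $T$, since $a$ is witnessed in the answer of a frugal repair. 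The restriction of $t$ to the variables of $P$ already exhibits $(a,t[v])\in T$, so uniqueness delivers $(a,t[v])\in R^v$, hence $t\in G''^f(r'\cup R^v)$. This yields an answer-set-preserving bijection between frugal repairs of $(I',G')$ and $(I'',G'')$, closing the inductive step.

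The main obstacle I anticipate is ensuring that Lemma~\ref{lem:closure} transports cleanly from the original $(G,I)$ to the intermediate $(G',I')$. This is legitimate because $P^f(I)=P^f(I')$ (the extensions of $I$ involve only fresh relation symbols that do not appear in $P$), the notion of frugality in the lemma is relative to the graph being considered, and the lemma's hypothesis $v\in u_R^{+,R}\cap v_R^{\oplus}$ is exactly the loop guard. Once this subtlety is dispatched, the induction carries through and delivers the desired $\mM_G(I)=\mM_{G_C}(I_C)$.
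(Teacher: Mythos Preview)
Your proposal is correct and follows essentially the same route as the paper: bound the number of iterations by observing that each newly inserted consistent edge $(u_R,v)$ permanently places $v$ in $u_R^{\oplus}$, and then establish $\mM$-preservation one iteration at a time via the obvious repair bijection together with Lemma~\ref{lem:closure}. The paper's proof is structured identically, though more tersely.

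Two small remarks. First, your concluding phrase ``answer-set-preserving bijection between frugal repairs'' slightly overstates what you have shown: you proved only the direction ``$r'$ frugal $\Rightarrow G'^f(r')=G''^f(r'\cup R^v)$.'' To obtain $\mM_{G''}(I'')\subseteq\mM_{G'}(I')$ you still need that every frugal $r''$-answer set arises as some $G'^f$-answer set; this is exactly the paper's final ``Finally, notice that\ldots'' line, and it follows easily from the inclusion $G''^f(r'\cup R^v)\subseteq G'^f(r')$ combined with the direction you did prove. Second, your claim that ``fresh relation symbols do not appear in $P$'' is not self-evident, since $P$ is selected in the current graph $G'$ and could in principle traverse previously added edges; the clean fix is simply to apply Lemma~\ref{lem:closure} to $(G',I')$ and read the algorithm's $T$ as $\Pi_{u_R,v}(P^f(I'))$, after which your argument goes through verbatim.
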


\begin{proof}
Note that at an iteration where $v \in (u_R^{+,R} \cap v_R^{\oplus}) \setminus u_R^{\oplus}$, we add
a consistent edge $(u_R,v)$ (such that $v \in u_R^{\oplus}$ in the new graph). Since there are at most
$|E^i(G)| \cdot |V(G)|$ pairs of inconsistent edges and nodes, the algorithm will terminate after that
many steps and return an f-closed graph. It remains to show that if we have the instance $I$ of $G$
at the beginning of the iteration and $I_C, G_C$ at the end, then $\mM_G(I) = \mM_{G_C}(I_C)$.

Notice that there exists a 1-1 correspondence between the repairs of $I, I_C$, since the added relation $R^v$ is consistent.  Let $r_C$ be a repair of $I_C$ and $r$  the corresponding repair of $I$; we will  first show that, if $r$ is frugal, $G^f(r) = G^f_C(r_C)$.
Indeed, $G^f(r) \supseteq G^f_C(r_C)$, since $G_C$ contains additional constraints ($R^v$). To show that $G^f(r) \subseteq G^f_C(r_C)$, let $t \in G^f(r)$, where $t[u_R] = a$. Since $r$ is frugal, by \autoref{lem:closure} there exists a tuple $R^v(\underline{a},b)$, where $t[v]=b$. Hence, $t \in G_C^f(r_C)$. Finally, notice that, if $r_C$ is a frugal repair of $I_C$, then there exists a
repair $r'$ of $I$ such that $G^f_C(r_C) = G^f(r')$. This concludes the proof. 
\end{proof}

\subsection{Proof of Separator Existence}
\label{sub:separator}

In this subsection, we prove \autoref{thm:separator_exists}. In particular, we show that 
the equivalence class $C^{sep} = \arg \min_{\text{sink } C \in \Eq} |coupled^{\oplus}(C)|$ is a 
separator. The proof has several  steps, and is the most technically involved part of this paper. 

The first step is to simplify our proof goal. Recall that we want to show that for any $C \in \Eq$,
where $C \neq C^{sep}$, either $C <^{\oplus} C^{sep}$ or $C \notin coupled^{\oplus}(C^{sep})$. We will show next that it suffices to consider only the sinks $C \in \Eq$, and show that for any sink 
$C \neq C^{sep}$, $C \notin coupled^{\oplus}(C^{sep})$. Indeed, we can show for a sink $C$, the set 
$coupled^{\oplus}(C)$ is {\em upward closed}: if $C_0 \in coupled^{\oplus}(C)$ and 
$C_0 <^{\oplus} C_1$, then also $C_1 \in coupled^{\oplus}(C)$. Note that $coupled^{\oplus}(C)$ 
is not necessarily upward closed for an arbitrary $C$.

\begin{lemma} 
\label{lem:non-sink:split}
If $C \in \Eq$ is a sink, then $coupled^{\oplus}(C)$ is upward closed. 
\end{lemma}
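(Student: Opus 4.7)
The plan is to start from the two hypotheses $C_0 \in coupled^{\oplus}(C)$ and $C_0 <^{\oplus} C_1$, extract explicit witness paths, concatenate them, and show the concatenation witnesses $C_1 \in coupled^{\oplus}(C)$. From $C_0 <^{\oplus} C_1$ I get some $S_1 \in C_1$ with $u_{S_1} \in C_0^{\oplus} \subseteq u_{S_0}^{\oplus}$ for every $S_0 \in C_0$, hence a consistent directed path $Q : u_{S_0} \rightarrow u_{S_1}$. From $C_0 \in coupled^{\oplus}(C)$ I get, whenever $C_0 \neq C$, an edge $R \in C$ and $S_0 \in C_0$ together with an undirected path $P : v_R \leftrightarrow u_{S_0}$ with $P \cap C^{\oplus} = \emptyset$. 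Note that $C_0 = C$ is impossible here: it would give $C <^{\oplus} C_1$, contradicting the sink property of $C$. Similarly, if $C_1 = C$, then $C_1 \in coupled^{\oplus}(C)$ trivially by the definition of $coupled^{\oplus}$, so the only case to handle is $C_0 \neq C \neq C_1$.

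In that case I will concatenate $P$ and $Q$ to form an undirected path from $v_R$ to $u_{S_1}$ with $R \in C$ and $S_1 \in C_1$, so the only thing left to verify is that this concatenation avoids $C^{\oplus}$. The $P$-portion avoids $C^{\oplus}$ by assumption, so the work reduces to showing $Q \cap C^{\oplus} = \emptyset$. This is the main (and only nontrivial) step.

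To prove $Q \cap C^{\oplus} = \emptyset$, I argue by contradiction: if some vertex $w$ on $Q$ lies in $C^{\oplus}$, then for every $T \in C$ I have a consistent directed path $u_T \rightarrow w$ (by definition of $C^{\oplus}$), and the suffix of $Q$ from $w$ is a consistent directed path $w \rightarrow u_{S_1}$. Composing these gives $u_{S_1} \in u_T^{\oplus}$ for every $T \in C$, that is $u_{S_1} \in C^{\oplus}$, hence $C \leq^{\oplus} C_1$. Since $C_1 \neq C$, this upgrades to $C <^{\oplus} C_1$, contradicting the assumption that $C$ is a sink. Thus $Q$ avoids $C^{\oplus}$, the concatenated path is a valid witness, and $C_1 \in coupled^{\oplus}(C)$. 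The only delicate point in the argument is exactly this last contradiction: sink-ness of $C$ is what forces $Q$ to stay out of $C^{\oplus}$, and this is the place where the hypothesis is actually used.
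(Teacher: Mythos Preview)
Your proposal is correct and follows essentially the same approach as the paper: extract the witness path $P: v_R \leftrightarrow u_{S_0}$ from $C_0 \in coupled^{\oplus}(C)$, extract the consistent directed path $Q: u_{S_0} \rightarrow u_{S_1}$ from $C_0 <^{\oplus} C_1$, concatenate, and use the sink property of $C$ to show the new segment $Q$ avoids $C^{\oplus}$. Your treatment is in fact slightly more careful than the paper's, since you explicitly dispose of the degenerate cases $C_0 = C$ and $C_1 = C$, which the paper handles only implicitly.
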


\begin{proof}
Assume that $C_0 \in coupled^{\oplus}(C)$ and $C_0 <^{\oplus} C_1$; we will show that
$C_1 \in coupled^{\oplus}(C)$.
Indeed, there exists a path $P: v_R \leftrightarrow u_S$ for $R \in C, S \in C_1$ such that $P \cap C^{\oplus} = \emptyset$. Since $C_0 <^{\oplus} C_1$,
there exists some $T \in C_1$ such that $u_T \in C_0^{\oplus}$. Thus, $u_T \in u_S^{\oplus}$ and
there exists a directed consistent path $P': u_S \rightarrow u_T$. Now, the path $P'' = P, P'$
connects $v_R$ with $u_T$. Notice that it is not possible that $P' \cap C^{\oplus} \neq \emptyset$, otherwise we would have that $u_T \in C^{\oplus}$, which contradicts the fact that $C$ is a sink. Hence, $P'' \cap C^{\oplus} = \emptyset$ and $C_1 \in coupled^{\oplus}(C)$.
\end{proof}

Now, suppose that we have shown that for any sink $C \neq C^{sep}$, $C \notin coupled^{\oplus}(C^{sep})$, and consider any $C' \in \Eq$, $C' \neq C$ that is not a sink. Then $C' <^{\oplus} C''$ for some $C'' \in \Eq$ that is a sink; hence, $C'' \notin coupled^{\oplus}(C^{sep})$. However, since $C^{sep}$ is a sink, we can apply \autoref{lem:non-sink:split} to conclude that $C' \notin coupled^{\oplus}(C^{sep})$. 

The bulk of the proof consists of two technical results. The first result tells us that for
a sink $C$, the two types of coupling coincide: $coupled^+(C) = coupled^{\oplus}(C)$.

\begin{proposition}
\label{prop:equal:strict:weak}
Let $G$ be a splittable and f-closed graph. For any sink $C \in \Eq$, $C^+ = C^{\oplus}$.
\end{proposition}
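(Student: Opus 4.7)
The containment $C^{\oplus}\subseteq C^{+}$ is immediate from \autoref{prop:plus_contain} (intersect over all $R\in C$), so the content of the proposition is the reverse inclusion $C^{+}\subseteq C^{\oplus}$. My plan is to argue by contradiction: suppose $v\in C^{+}\setminus C^{\oplus}$, and choose, over all such $v$, over all $R\in C$ with $v\notin u_R^{\oplus}$, and over all paths $P\colon u_R\leadsto v$ in $G-\{e_R\}$, a triple $(v,R,P)$ minimizing $k=|P|_{\mathrm{inc}}$, the number of inconsistent edges on $P$. Since $v\notin u_R^{\oplus}$, every such path must contain at least one inconsistent edge, so $k\geq 1$.

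Let $e_S$ be the \emph{last} inconsistent edge on $P$, and write $P = P_1\cdot e_S\cdot P_2$ with $P_2$ entirely consistent and $P_1$ containing exactly $k-1$ inconsistent edges. Note that $v\in v_S^{\oplus}$ via $P_2$. The heart of the plan is to establish the following key claim: $S\in C$. Once this is in hand, the argument wraps up quickly: from $v\in C^{+}$ and $S\in C$ we have $v\in u_S^{+,S}$, so f-closedness applied to the inconsistent edge $e_S$ gives $v\in v_S^{\oplus}\cap u_S^{+,S}\subseteq u_S^{\oplus}$. Concatenating $P_1$ (a path from $u_R$ to $u_S$ in $G-\{e_R\}$ with $k-1$ inconsistent edges) with a consistent path $u_S\to v$ yields a path $u_R\leadsto v$ in $G-\{e_R\}$ with only $k-1$ inconsistent edges, contradicting the minimality of $k$.

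To prove the key claim, I would appeal to the sink property: it is enough to show $u_S\in C^{\oplus}$, i.e., $u_S\in u_{R'}^{\oplus}$ for \emph{every} $R'\in C$. This is the step I expect to be the main obstacle, because $u_S$ is only known to be reachable from $u_R$ via the possibly inconsistent prefix $P_1$, so we do not get $u_S\in u_R^{\oplus}$ for free. My plan is to run a secondary induction on the same minimization measure applied now to the node $u_S$ (which is witnessed by $P_1$ with $k-1$ inconsistent edges, strictly fewer than $k$), combined with a structural use of splittability: if some $u_{R'}\in C$ failed to consistently reach $u_S$, then the undirected path obtained by walking $P_1$ and extending through the failure point would connect the endpoint $v_{R'}$ of one inconsistent edge to the source of an edge outside $[R']$ while avoiding $C^{\oplus}$, producing two coupled inconsistent edges that are not source-equivalent and hence contradicting splittability. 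F-closedness is invoked repeatedly along the way to promote ``reachability avoiding $e_R$'' into genuine consistent reachability, which is precisely how f-closure interacts with the sources of a sink class.

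In short: the outer step is a minimum-inconsistent-edge contradiction argument using the \emph{last} inconsistent edge and f-closure; the inner step — showing $u_S\in C^{\oplus}$ — is where splittability does its work, ruling out the obstruction that would otherwise prevent the sink property from being applicable to $e_S$.
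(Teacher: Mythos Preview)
Your outer framework is sound: minimizing the inconsistent-edge count on a witnessing path, taking the \emph{last} inconsistent edge $e_S$, and then using f-closure to promote $v\in v_S^{\oplus}\cap u_S^{+,S}$ to $v\in u_S^{\oplus}$ is exactly the right way to drop the count by one. The difficulty is entirely in your ``key claim'' $S\in C$ (equivalently $u_S\in C^{\oplus}$), and here the proposal does not close the gap.

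Concretely: your secondary-minimality argument applied to $u_S$ via $P_1$ only fires when $(u_S,R,P_1)$ is a legal triple, which requires both $u_S\in C^{+}\setminus C^{\oplus}$ and $u_S\notin u_R^{\oplus}$. But from $P_1$ you only know $u_S\in u_R^{+,R}$, not $u_S\in C^{+}=\bigcap_{R'\in C}u_{R'}^{+,R'}$, so the minimality step need not apply at all. And even when it does, the conclusion is merely ``$u_S\in u_R^{\oplus}$ or $u_S\notin C^{+}$ or $u_S\in C^{\oplus}$''; the first disjunct does \emph{not} give $u_S\in C^{\oplus}$ once $|C|>1$. Your fallback to splittability is where the argument becomes a sketch rather than a proof: from ``some $u_{R'}$ fails to consistently reach $u_S$'' you need to exhibit an explicit pair of coupled, non-source-equivalent edges, i.e.\ undirected paths $v_{R'}\leftrightarrow u_T$ avoiding $u_{R'}^{+,R'}$ \emph{and} $v_T\leftrightarrow u_{R'}$ avoiding $u_T^{+,T}$, and nothing in the data $(P_1,R,R',S)$ hands you both of these.

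For comparison, the paper does not try to show that the last inconsistent edge lies in $C$. Instead it first proves a propagation lemma inside the SCC $A$ carrying $C$ (if $v\in C^{+}$ and $v\in w^{\oplus}$ for some $w\in V(A)$, then $v\in C^{\oplus}$, by induction on distance in $A$ using f-closure), and then, for $v\notin V(A)$, locates the first node $v_k$ along a path out of $A$ at which \emph{all} of $C$ reaches $v_k$ avoiding their own edges. Two directed paths from $A$ to $v_k$ with SCC-disjoint interiors are then constructed, and a separate ``two-paths lemma'' (proved by a maximization argument that is where splittability is actually spent) yields $v_k\in w^{\oplus}$. The sink property finishes the segment from $v_k$ to $v$. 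So the load-bearing use of splittability is packaged into that two-paths lemma, not into a direct coupling contradiction of the kind you outline; absent an argument of comparable strength, your inner step remains open.
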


The second result tells us that for two distinct equivalence classes $C_1, C_2$  where
$C_1 \in coupled^+(C_2)$, $coupled^+(C_1)$ is strictly contained in $coupled^+(C_2)$.

\begin{proposition}
\label{prop:split:eqclass}
Let $G$ be a splittable graph and $C_1, C_2 \in \Eq$ such that $C_1 \neq C_2$. Then,
\begin{packed_enum} 
\item Either $C_1 \notin coupled^+(C_2)$ or $C_2 \notin coupled^+(C_1)$.
\item If $C_1 \in coupled^+(C_2)$, then $coupled^+(C_1) \subset coupled^+(C_2)$.
\end{packed_enum}
\end{proposition}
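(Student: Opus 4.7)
The plan is to establish Part (1) directly, and then use it together with a concatenation argument to obtain Part (2).

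\emph{Part (1).} I argue by contradiction: assume $C_1 \neq C_2$ but both $C_1 \in coupled^+(C_2)$ and $C_2 \in coupled^+(C_1)$. Unpacking the definitions yields witness edges $R_2 \in C_2$, $S_1 \in C_1$ and a path $P : v_{R_2} \leftrightarrow u_{S_1}$ with $P \cap C_2^+ = \emptyset$, together with symmetric witnesses $R_1 \in C_1$, $S_2 \in C_2$ and a path $Q : v_{R_1} \leftrightarrow u_{S_2}$ with $Q \cap C_1^+ = \emptyset$. My goal is to produce a pair of inconsistent edges $R^\star \in C_2$, $S^\star \in C_1$ that are \emph{coupled in the edge-wise sense}; since $C_1 \neq C_2$ forces $R^\star \not\sim S^\star$, this contradicts splittability.

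The chief obstacle is the mismatch between class-wise and edge-wise coupling: class-wise avoidance of $C^+ = \bigcap_{R \in C} u_R^{+,R}$ is strictly weaker than avoidance of a single $u_R^{+,R}$. To bridge the gap I would use two structural observations: (i) the source nodes $\{u_R : R \in C\}$ of any equivalence class $C$ all lie in a single SCC $\Sigma_C$; (ii) for any inconsistent edge $R$ with $e_R$ entirely outside $\Sigma_C$, the set $u_R^{+,R}$ either contains $\Sigma_C$ entirely or is disjoint from it, since removing $e_R$ does not alter internal reachability within $\Sigma_C$. Applying (ii) to $\Sigma_{C_1}$ with the endpoint $u_{S_1} \in \Sigma_{C_1}$ of $P$, and using $u_{S_1} \notin C_2^+$, some $R^\star \in C_2$ satisfies $\Sigma_{C_1} \cap u_{R^\star}^{+,R^\star} = \emptyset$. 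One then reroutes $P$ within $\Sigma_{C_2}$ to obtain a path $v_{R^\star} \leftrightarrow u_{S_1}$ disjoint from $u_{R^\star}^{+,R^\star}$, and a symmetric argument on $Q$ produces the partner $S^\star \in C_1$. Executing this rerouting rigorously is the main technical difficulty.

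\emph{Part (2), strict containment.} This follows immediately from Part (1): $C_2 \in coupled^+(C_2)$ trivially, and $C_1 \in coupled^+(C_2)$ together with Part (1) yields $C_2 \notin coupled^+(C_1)$, so $C_2$ lies in $coupled^+(C_2) \setminus coupled^+(C_1)$.

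\emph{Part (2), inclusion.} Take $C_3 \in coupled^+(C_1)$; the cases $C_3 = C_1$ and $C_3 = C_2$ are immediate, so assume both are excluded. From witness paths $P : v_{R_2} \leftrightarrow u_{S_1}$ (avoiding $C_2^+$) and $Q : v_{R_1} \leftrightarrow u_{S_3}$ (avoiding $C_1^+$), I concatenate
\[ v_{R_2} \stackrel{P}{\leftrightarrow} u_{S_1} \leadsto u_{R_1} \stackrel{e_{R_1}}{\to} v_{R_1} \stackrel{Q}{\leftrightarrow} u_{S_3}, \]
where the directed transit $u_{S_1} \leadsto u_{R_1}$ is kept inside $\Sigma_{C_1}$. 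To verify disjointness from $C_2^+$: the $P$-segment avoids $C_2^+$ by hypothesis; the middle transit lies in $\Sigma_{C_1}$, which is disjoint from $C_2^+$ by observation (ii) of Part (1) applied to $u_{S_1} \notin C_2^+$; and for the $Q$-segment one either observes it is already disjoint from $C_2^+$, or truncates it at the first entry into $C_2^+$ and reroutes through a suitable SCC to reach $u_{S_3}$. Handling this last case cleanly is the main secondary obstacle and may require choosing the witness edges $R_1, S_3$ with extra care.
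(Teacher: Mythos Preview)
Your overall architecture is right—Part (1) by contradiction producing an edge-wise coupled pair, Part (2) strictness from Part (1), and the inclusion by concatenation—and it matches the paper's. But the step you flag as ``the main technical difficulty'' is genuinely missing, and your proposed SCC-based fix does not close it.

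Concretely: for Part (1) you correctly observe that avoiding $C_2^+ = \bigcap_{R\in C_2} u_R^{+,R}$ is weaker than avoiding a single $u_{R^\star}^{+,R^\star}$, and you pick $R^\star \in C_2$ with $\Sigma_{C_1}\cap u_{R^\star}^{+,R^\star}=\emptyset$. But you then need a path $v_{R^\star}\leftrightarrow u_{S_1}$ that avoids $u_{R^\star}^{+,R^\star}$, and ``rerouting $P$ within $\Sigma_{C_2}$'' does not produce one: the endpoint you must start from is $v_{R^\star}$, which need not lie in $\Sigma_{C_2}$, and the body of $P$ may freely meet $u_{R^\star}^{+,R^\star}$ even though it avoids $C_2^+$. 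Your observation (ii) controls only the endpoint $u_{S_1}\in\Sigma_{C_1}$; it says nothing about the rest of the path. The paper resolves this with a dedicated lemma (\autoref{prop:strict:cut}): if $C'\in coupled^+(C)$ then there exists a \emph{single} $R\in C$ such that \emph{every} $S\in C'$ satisfies $S\in coupled^+(R)$. Its proof is not an SCC rerouting; it introduces $\mL_C(v)=\{T\in C: v\in u_T^{+,T}\}$, walks along the witness path to the \emph{last} node $v_j$ with $\mL_C(v_j)\neq\emptyset$, picks $R\in C\setminus \mL_C(v_j)$, and exploits $v_j\in u_R^+\setminus u_R^{+,R}$ to splice in a directed segment $v_R\leadsto v_j$ disjoint from $u_R^{+,R}$. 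This ``last reachable node'' trick is the missing idea.

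The same gap propagates to your Part (2) inclusion. Your concatenated path uses the class-level witness $Q$ (avoiding $C_1^+$) for its tail, and you acknowledge that $Q$ may meet $C_2^+$; the proposed truncate-and-reroute is not justified (the directed path from $u_S$ back to the entry point into $C_2^+$ need not avoid $C_1^+$, so you cannot invoke Part (1) there). The paper avoids this by first applying \autoref{prop:strict:cut} twice to obtain \emph{edge-level} paths $P_{TR}$ (avoiding $u_R^{+,R}$) and $P_{RS}$ (avoiding $u_S^{+,S}$), concatenates them through $e_R$, and then—crucially using $C_2\notin coupled^+(C_1)$ from Part (1)—shows by contradiction that the concatenation avoids $u_S^{+,S}\supseteq C_2^+$: any hit would yield a path $v_R\leftrightarrow u_S$ that must meet $u_R^{+,R}$, and a short directed-path argument pushes that hit back onto $P_{TR}$, contradicting its disjointness.

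In short: you have the skeleton, but you are missing the edge-wise upgrade lemma and its proof technique; without it, neither your Part (1) nor your Part (2) inclusion goes through.
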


Now, consider a sink $C \neq C^{sep}$. If $C \in coupled^{+}(C^{sep})$, then by 
\autoref{prop:split:eqclass}(2) and \autoref{prop:equal:strict:weak} it must be that 
$coupled^{\oplus}(C^{sep}) = coupled^{+}(C^{sep}) \supset coupled^{+}(C) = coupled^{\oplus}(C)$, 
which contradicts the minimality of $coupled^{\oplus}(C^{sep})$, and this proves our
main theorem. In the rest of this section, we will present the proofs of \autoref{prop:equal:strict:weak} and \autoref{prop:split:eqclass}. 

We start with a proposition that will be used later.

\begin{proposition}
\label{prop:strict:cut}
If $C' \in  coupled^+(C)$ then there exists $R \in C$ such that for all $S \in C'$, $S \in coupled^+(R)$.
\end{proposition}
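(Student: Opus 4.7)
The plan is to split on whether $C = C'$. If $C = C'$, the claim is immediate: any $R \in C$ works, because every $S \in C = [R]$ lies in $coupled^+(R)$ by the definition of the latter, which puts $[R]$ inside $coupled^+(R)$ unconditionally.

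For the substantive case $C \neq C'$, I would first unpack the hypothesis $C' \in coupled^+(C)$ to obtain a witness $R_0 \in C$, $S_0 \in C'$, and an undirected path $P_0 : v_{R_0} \leftrightarrow u_{S_0}$ with $P_0 \cap C^+ = \emptyset$. The structural fact I want to exploit is that $C$ and $C'$ are $\sim$-classes, so their source vertices lie in two SCCs $T_C$ and $T_{C'}$, and $C \neq C'$ forces $T_C \cap T_{C'} = \emptyset$: a common vertex would identify some edge of $C$ and some edge of $C'$ as source-equivalent, merging the classes. Strong connectivity of $T_{C'}$ then supplies, for every $S \in C'$, a directed path $Q_S$ inside $T_{C'}$ from $u_{S_0}$ to $u_S$; concatenating with $P_0$ gives an undirected path $v_{R_0} \leftrightarrow u_S$ for every $S \in C'$. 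A symmetric use of $T_C$ enables switching the source endpoint from $v_{R_0}$ to $v_R$ for any $R \in C$ by routing through $T_C$.

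The main obstacle --- and what I expect to occupy most of the proof --- is upgrading the avoidance condition from the weak $C^+ = \bigcap_{R \in C} u_R^{+,R}$ to the strictly larger $u_R^{+,R}$ for a single fixed $R \in C$. Each vertex $w$ along the constructed paths that lies in $u_R^{+,R} \setminus C^+$ is excluded from $u_{R'}^{+,R'}$ for some other $R' \in C$, and a priori different $w$'s may require different $R'$'s. My plan is to argue that a single $R \in C$ suffices uniformly, either by choosing $R$ minimally (picking the $R$ for which $u_R^{+,R}$ is smallest along the relevant path region) or by contradiction: if no $R \in C$ works, aggregate the per-vertex obstructions into a pair of inconsistent edges that are coupled but not source-equivalent, contradicting splittability of $G$ (the standing assumption of \autoref{sub:separator}). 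The disjointness $T_C \cap T_{C'} = \emptyset$ established above ensures the problematic vertices cannot coincide with the endpoints $v_R$ or $u_S$ of the constructed paths, which should be a key leverage point for making the contradiction go through cleanly.
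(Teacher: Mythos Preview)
Your proposal has a real gap at the crucial upgrade step, and neither of your two suggested routes will close it. First, the proposition does not require splittability --- the paper proves it by a direct combinatorial construction with no appeal to that hypothesis --- so your contradiction-via-splittability plan is misdirected; moreover, splittability concerns pairs of edges that are \emph{not} source-equivalent, whereas all candidates $R \in C$ you are choosing among \emph{are} source-equivalent, so there is no evident way to extract an unsplittable pair from your per-vertex obstructions. Second, ``pick the $R$ with smallest $u_R^{+,R}$ along the path'' cannot succeed for the \emph{fixed} path $P_0$: it is easy to build examples where distinct vertices of $P_0$ lie in $u_{R}^{+,R}$ for distinct $R \in C$, with no single $R$ clearing them all. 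Routing through $T_C$ to swap $v_{R_0}$ for some other $v_R$ makes things worse, since vertices of $T_C$ generally lie in $u_R^{+,R}$.

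The idea you are missing is that one must \emph{rebuild} the path, not merely re-choose $R$. For each vertex $v$ on $P_0$ set $\mL_C(v) = \{T \in C : v \in u_T^{+,T}\}$; the hypothesis $P_0 \cap C^+ = \emptyset$ says exactly that $\mL_C(v) \subsetneq C$ everywhere. Walking $P_0$ toward $u_{S_0}$, take $v_j$ to be the \emph{last} vertex with $\mL_C(v_j) \neq \emptyset$, and pick any $R \in C \setminus \mL_C(v_j)$. Then $v_j \in u_R^+ \setminus u_R^{+,R}$, so every directed route $u_R \leadsto v_j$ uses $e_R$; dropping that first edge from a simple such route gives a directed path $P_j : v_R \leadsto v_j$ not using $e_R$, which therefore avoids $u_R^{+,R}$ entirely (a hit would propagate forward to $v_j$). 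Splicing $P_j$ with the tail of $P_0$ beyond $v_j$ --- where $\mL_C \equiv \emptyset$ by choice of $j$ --- yields the desired path $v_R \leftrightarrow u_{S_0}$ avoiding $u_R^{+,R}$. Your extension to all $S \in C'$ via the SCC $T_{C'}$ then works, provided you orient the connecting directed path as $u_S \to u_{S_0}$ so that any hit would force $u_{S_0} \in u_R^{+,R}$, a contradiction.
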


\begin{proof} 
For any node $v \in V(G)$, define
\begin{align} \label{eq:call}
 \mL_C(v) = \setof{T \in C}{v \in u_T^{+,T}}
  \end{align}
It is easy to see that $\emptyset \subseteq \mL_C(v) \subseteq C$. Moreover, 
$v \in C^+$ if and only if $\mL_C(v) = C$. 

Since $C' \in coupled^+(C)$, by definition there exists $R^0 \in C, S \in C'$ and a path 
$P^0 : v_{R^0} \leftrightarrow u_{S}$  such that $P^0 \cap C^+ = \emptyset$, or equivalently 
for every $v \in P^0$, $\mL_C(v) \subsetneq C$. We will show that there exists $R \in C$
and a path $P: v_{R} \leftrightarrow u_{S}$ such that $P \cap u_R^{+,R} = \emptyset$; this proves the proposition, since for any $S' \in C'$, there exists a directed path $P': u_{S'} \rightarrow u_S$ that cannot be cut by $u_{R}^{+,R}$ (otherwise it would be that $u_S \in u_R^{+,R}$, a contradiction to the fact that $P$ is not cut by $u_R^{+,R}$). 

If for every $v \in P^0$ we have that $v \notin u_{R^0}^{+,R^0}$, then our claim holds trivially for $R = R^0$ and $P = P^0$. Otherwise, there exists a node $v \in P^0$ such that $\mL_C(v) \supseteq \{R^0\} \supsetneq \emptyset$. If $P^0$ visits in order the nodes $v_{R^0} \equiv v_1, v_2, \dots, v_m \equiv u_S$, let $j$ be the largest index with the property that $\mL_C(v_j) \supsetneq \emptyset$. We thus have established that $\emptyset \subsetneq \mL_C(v_j) \subsetneq C$.

Since there exists an edge $T \in \mL_{C}(v_j)$, $v_j \in u_T^{+,T}$. Moreover, for any $U \in C$,
$u_T \in u_U^+$. Consequently, for any $U \in C$, $v_j \in u_U^+$.
But now, consider an edge $R \in C \setminus \mL_C(v_j)$ (such an edge always exists): since $v_j \in u_R^+ \setminus u_R^{+,R}$, $u_R$ reaches $v_j$ only by going through the edge $e_R$ first. Hence, there exists a simple path $P_j: v_R \leadsto v_j$ such that $P_j \cap u_R^{+,R} = \emptyset$.
Finally, let us construct the path $P: P_j, v_{j+1},( v_{j+1}, v_{j+2}),  \dots, (v_{m-1}, v_m ), v_m$ 
from $v_R$ to $u_S$.
By our construction, for all $i = j+1, \dots, m$ we have $\mL_{C}(v_i) = \emptyset$, and thus
$v_i \notin u_R^{+,R}$. Hence $P \cap u_R^{+,R} = \emptyset$.
\end{proof}

\begin{proof} (of \autoref{prop:split:eqclass})

\textbf{(1)}. Assume for the sake of contradiction that $C_1 \in coupled^+(C_2)$ and
$C_2 \in coupled^+(C_1)$. Then, from~\autoref{prop:strict:cut}, there exists some
$S_0 \in C_2$ such that for all $R \in C_1$, $R \in coupled^+(S_0)$ and some 
$R_0 \in C_1$ such that for all $S \in C_2$ we have $S \in coupled^+(R_0)$. In particular,
$R_0 \in coupled^+(S_0)$ and $S_0 \in coupled^+(R_0)$. But then, $R_0, S_0$ would
be an unsplittable pair that are not source-equivalent, a contradiction. 

\textbf{(2)}  From property (1), we obtain $C_2 \notin coupled^+(C_1)$. Since 
$C_2 \in coupled^+(C_2)$, it suffices to show  that $coupled^+(C_1) \subseteq coupled^+(C_2)$.
Indeed, let $C \in coupled^+(C_1)$, where $C \neq C_1, C_2$ (otherwise the claim is trivial).

Since $C \in coupled^+(C_1)$, by applying~\autoref{prop:strict:cut}, there exists edges $T \in C, R \in C_1$ and a path $P_{TR}: u_{T} \leftrightarrow v_R$ such that $P_{TR} \cap u_R^{+,R} = \emptyset$. 
Additionally, since $C_1 \in coupled^+(C_2)$, by applying~\autoref{prop:strict:cut}, we obtain that there exists $S \in C_2$ such that for every $R' \in C_1$, $R' \in coupled^+(S)$: in particular, $R \in coupled^+(R)$. Thus, there exists a path $P_{RS}: u_{R} \leftrightarrow v_S$ such that $P_{RS} \cap u_{S}^{+,S} = \emptyset$ (see \autoref{fig:proof:sketch}).

%
Construct now the path $P^+ = P_{TR}, e_R, P_{RS}$, which is of the form
$P^+: v_S \leftrightarrow u_T$. We will show that $P^+ \cap C_2^+ = \emptyset$, 
which proves that $C \in coupled^+(C_2)$. 

Suppose not; then, $P^+ \cap u_S^{+,S} \neq \emptyset$. Since the nodes of $P_{RS}$ do 
not intersect with $u_S^{+,S}$, there must exist a node $v \in P_{TR} \cap u_S^{+,S}$, which in 
turn implies the existence of a 
directed path  $P_S: u_S \leadsto v$ that does not contain the edge $e_S$. 
If $P_{Rv}$ denotes the fragment of the path $P_{RT}$ from node $v_R$ to node $v$, 
construct the path  $P^0 = P_{Rv}, P_S$ from $v_R$ to $u_S$. However, the fact that
$R \in C_1, S \in C_2$ and $C_2 \notin coupled^+(C_1)$ implies that  
$P^0 \cap C_1^+ \neq \emptyset$, and consequently $P^0 \cap u_R^{+,R} \neq \emptyset$.
But then, since $P_S$ is a directed path, 
$v \in u_R^{+,R}$, a contradiction to the fact that the path $P_{TR}$ does not intersect
with $u_R^{+,R}$. 
\end{proof}

\begin{figure}[tb]
  \centering
\resizebox{7cm}{!}{\input 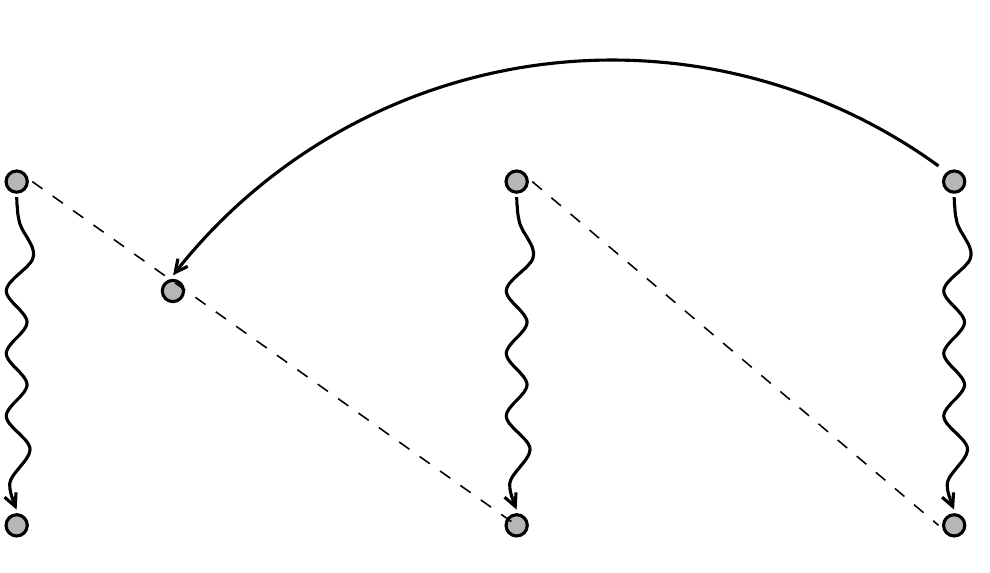_t}
\caption{{\footnotesize{The setting for the proof of~\autoref{prop:split:eqclass}, part (2).}}}
\label{fig:proof:sketch}
\end{figure}

\begin{proof} (of \autoref{prop:equal:strict:weak} )
Assume that $v \in C^+$; we will then show that $v \in C^{\oplus}$ as well. Let $A$ be the
unique SCC that contains the edges in $C$. 
We will first need the following lemma.

\begin{lemma}
\label{lem:help:1}
Let $v \in C^+$ and $v \in w^{\oplus}$ for some $w \in V(A)$. Then, $v \in C^{\oplus}$.
\end{lemma}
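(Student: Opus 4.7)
The plan is to prove the lemma by showing that the set $W=\setof{w'\in V(G)}{v\in w'^{\oplus}}$ of ``consistent ancestors of $v$'' contains all of $V(A)$. Once this is established, every source $u_R$ for $R\in C$ lies in $W$ (since $u_R\in V(A)$), giving $v\in u_R^{\oplus}$ for every $R\in C$, hence $v\in C^{\oplus}$.

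The key claim is that $W\cap V(A)$ is closed under taking predecessors inside the induced subgraph on $V(A)$: if $w'\in W\cap V(A)$ and $(x,w')$ is any edge of $G$ with $x\in V(A)$, then $x\in W$. Given this closure property, the proof concludes as follows. We know $w\in W\cap V(A)$ by hypothesis. Because $A$ is strongly connected, every node of $V(A)$ can reach $w$ via a directed path in $A$; iterating the predecessor-closure starting from $w$ therefore sweeps out all of $V(A)$, so $V(A)\subseteq W$, as desired.

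The closure claim is proved by a two-case analysis on the edge $e=(x,w')$. If $e$ is consistent, then $w'\in x^{\oplus}$, so prepending $e$ to a consistent path $w'\to v$ yields a consistent path $x\to v$, hence $x\in W$. If $e$ is inconsistent, write $e=e_{R'}$ with $u_{R'}=x$ and $v_{R'}=w'$. Since $x\in V(A)$ and $V(A)$ is an SCC containing every $u_R$ for $R\in C$, we have $R'\sim R$ and therefore $R'\in C$. Because $v\in C^{+}\subseteq u_{R'}^{+,R'}$, and because $w'=v_{R'}\in W$ gives $v\in v_{R'}^{\oplus}$, the f-closed hypothesis $v_{R'}^{\oplus}\cap u_{R'}^{+,R'}\subseteq u_{R'}^{\oplus}$ yields $v\in u_{R'}^{\oplus}$, so $x=u_{R'}\in W$.

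The main conceptual point is the inconsistent-edge case: without f-closedness one would be stuck, because the inconsistent edge $e_{R'}$ itself cannot be prepended to a consistent path. The f-closed property is exactly the tool that lets us ``bypass'' $e_{R'}$ using the fact that $v$ is reachable from $u_{R'}$ avoiding $e_{R'}$ (provided by $v\in C^{+}$) combined with a consistent path from $v_{R'}$ to $v$ (provided by the inductive progress $v_{R'}\in W$). I expect this is the only subtle step; the remainder is routine graph-theoretic closure reasoning.
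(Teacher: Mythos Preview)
Your proof is correct and follows essentially the same approach as the paper. The paper argues by induction on the (directed) distance $d(u,w)$ within $A$, with the same two-case analysis on the edge $e_T=(u_T,v_T)$: if $T$ is consistent, prepend; if $T$ is inconsistent then $T\in C$ (since $u_T\in V(A)$), and the f-closed property together with $v\in C^+\subseteq u_T^{+,T}$ and the inductive fact $v\in v_T^{\oplus}$ gives $v\in u_T^{\oplus}$. Your predecessor-closure formulation of the set $W$ is just a repackaging of that induction.
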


\begin{proof}
We will show that for any $u \in V(A)$, $v \in u^{\oplus}$ using induction on the distance
between $u,w$, denoted $d(u,w)$. For the basis of the induction, where distance is zero,
we have that $v \in w^{\oplus}$ and $d(w,w)=0$.

For the induction step, consider some node $u$ with $d(u,w) = d+1$. Then, there exists an
edge $e_T$ such that $u = u_T$ and $d(v_T,w) = d$. By the induction hypothesis,
$v \in v_T^{\oplus}$. If $T$ is consistent relation, then trivially $v \in u_T^{\oplus}$. Otherwise,
$T \in C$, and since $v \in C^+$, $v \in u_T^{+,T}$. Since $G$ is f-closed, this implies
that $v \in u_T^{\oplus}$.
\end{proof}

We now distinguish two cases for some $v \in C^+$. If $v \in V(A)$, then 
$v \in v^{\oplus}$ and thus by \autoref{lem:help:1},
$v \in C^{\oplus}$. Otherwise, $v \in V(G) \setminus V(A)$.
Since $v \in C^+$, there exists a directed path $P: w \leadsto v$ such that $P \cap V(A) = \{w\}$. Let 
$P$ visit in sequence the nodes $w = v_0, v_1, \dots, v_m = v$ and notice that if $i \leq j$, $\mL_C(v_i) \subseteq \mL_C(v_j)$ ($\mL_C$ as defined in~\eqref{eq:call}).
Since $\mL_C(v_m) = C$, let $k$ be the first index such that $\mL_C(v_k) = C$. We will next show
that $v_k \in w^{\oplus}$, which implies that $v_k \in C^{\oplus}$. Since $C$ is a sink all the edges
$(v_{i}, v_{i+1})$ for $i=k, \dots, m-1$ must be consistent and thus $v = v_m \in C^{\oplus}$ as well. 

By the choice of $v_k$, there exists some $S \in \mL_C(v_k)$ such that for any $i < k$, 
$S \notin \mL_C(v_i)$. Additionally, since $S \in \mL_C(v_k)$, there exists a path 
$P': u_S \leadsto v_k$. Let $w'$ be the last node 
of the path $P'$ inside $A$; we know that $w \neq w'$, since $S \notin \mL_C(w)$. Finally,
let $P'_{w'}$ be the part of the path $P'$ from $w'$ to the first node $v_k$. 
The important observation is no node of $P'_{w'}$ will be in the same SCC as nodes 
$v_1, \dots, v_{k-1}$, since otherwise $S \in \mL_{C}(v_i)$ for some $i <k$. 
So, now we can create the following 2 paths from $w$ to $v_k$:
the first path $P_1$ follows $P$ from $w$ to $v_k$, while the second path $P_2$
follows the simple path inside $A$ from $w$ to $w'$ and then $P'_{w'}$.
By our previous argument, for any $w \in P_2 \setminus \{w, v_k\}$, $w$ does not belong in
the same SCC with any of the nodes in $P_1 \setminus \{w, v_k\}$. We can now apply
the following lemma to conclude that $v_k \in w^{\oplus}$. 

\begin{lemma}
\label{lem:double:paths}
Let $G$ be a f-closed and splittable graph. If for $u,v \in V(G)$ there exist two directed simple 
paths $P_A, P_B : u \leadsto v$ such that no $w_A \in P_A \setminus \{u,v\}$, 
$w_B \in P_B \setminus \{u,v\}$ are in the same SCC, then $v \in u^{\oplus}$. 
\end{lemma}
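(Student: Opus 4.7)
The plan is to argue by contradiction: suppose $v \notin u^{\oplus}$, so every directed $u$-to-$v$ path must traverse at least one inconsistent edge. In particular, neither $P_A$ nor $P_B$ is entirely consistent, so each contains at least one inconsistent edge. My goal is to extract a pair of inconsistent edges, one from each path, that are coupled but not source-equivalent, which will contradict the splittability of $G$.

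First I would fix a minimum counterexample, minimizing $|P_A| + |P_B|$, and along each path select the \emph{first} inconsistent edge: $e_R \in P_A$ and $e_S \in P_B$. This choice guarantees that the prefixes $u \leadsto u_R$ along $P_A$ and $u \leadsto u_S$ along $P_B$ are consistent, so in particular $u_R, u_S \in u^{\oplus}$. In the generic case $u_R$ and $u_S$ are both interior nodes of their respective paths, and the SCC-disjointness hypothesis forces $u_R \not\sim u_S$, hence $R \not\sim S$. The corner cases where $u_R$ or $u_S$ coincides with $u$ (or $v$) are absorbed by minimality: any such "leading" or "trailing" inconsistent edge can be peeled off via f-closedness to yield a strictly smaller pair of SCC-disjoint paths, contradicting minimality.

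The heart of the argument is then to show that $R$ and $S$ are coupled in the sense of \autoref{sub:split:sets}. I would construct the required undirected paths by splicing: take $P^{(R,S)} : v_R \leftrightarrow u_S$ by concatenating the suffix of $P_A$ from $v_R$ to $v$ with the reverse of the suffix of $P_B$ from $v$ back to $u_S$, and symmetrically $P^{(S,R)} : v_S \leftrightarrow u_R$. The key claim is that $P^{(R,S)} \cap u_R^{+,R} = \emptyset$ (and the symmetric fact). Suppose some intermediate node $w$ on $P^{(R,S)}$ lay in $u_R^{+,R}$. Then the sub-arc of $P^{(R,S)}$ from $w$ to $v$ is still directed (either continuing along $P_A$ or backwards along $P_B$) and, by the SCC-disjointness of interiors, avoids the SCC of $u_R$; f-closedness then upgrades this $+,R$-reachability into consistent reachability, producing a strictly shorter SCC-disjoint pair of $u$-to-$v$ paths that still fail to yield a consistent route, contradicting minimality.

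The main obstacle is precisely this last bookkeeping step: justifying that every attempted shortcut honours the SCC-disjointness hypothesis in the smaller instance, and that f-closedness really can be invoked to trade $+,R$-reachability for $\oplus$-reachability without introducing extraneous edges from the forbidden SCC. Once the coupling is established, $R$ and $S$ are coupled and $R \not\sim S$, contradicting the splittability of $G$. Hence the assumption $v \notin u^{\oplus}$ is untenable, and the lemma follows.
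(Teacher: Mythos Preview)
Your overall architecture---pick an inconsistent edge on each path, argue they are not source-equivalent via the SCC-disjointness hypothesis, splice the suffixes through $v$ to build the two undirected witness paths, and contradict splittability---is exactly the paper's. The gap is in \emph{which} inconsistent edges you pick, and it is not mere bookkeeping.

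By selecting the \emph{first} inconsistent edge on each path you secure $u_R, u_S \in u^{\oplus}$, but you give up all control downstream: nothing forces $v \in v_R^{\oplus}$ or $v \in v_S^{\oplus}$. Yet that is precisely the hypothesis f-closedness consumes ($v_R^{\oplus} \cap u_R^{+,R} \subseteq u_R^{\oplus}$). So when your spliced path $P^{(R,S)}$ meets some $w \in u_R^{+,R}$ and you correctly deduce $v \in u_R^{+,R}$, you cannot ``upgrade'' to $v \in u_R^{\oplus}$. Your fallback to minimality also fails: the directed walk witnessing $v \in u_R^{+,R}$ is an arbitrary path in $G-\{e_R\}$, and there is no reason its interior is SCC-disjoint from that of $P_B$, so you cannot assemble a strictly smaller counterexample satisfying the lemma's hypothesis. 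The same problem bites your handling of the corner case $u_R = u$.

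The paper repairs this by reversing the extremal choice. Writing $P_A = (w_A^1,\dots,w_A^m)$ and $P_B = (w_B^1,\dots,w_B^{m'})$, it takes a pair $(i,j)$ with $v \notin w_A^{i\,\oplus}$ and $v \notin w_B^{j\,\oplus}$ that \emph{maximises} $i+j$. Maximality then gives $v \in w_A^{k\,\oplus}$ for every $k>i$ (and symmetrically on $P_B$); in particular $v \in v_R^{\oplus}$ and $v \in v_S^{\oplus}$, where $e_R = (w_A^i,w_A^{i+1})$ and $e_S = (w_B^j,w_B^{j+1})$ are forced to be inconsistent. Now the corner case $u_R = u$ dies instantly (f-closedness applies), and in the main case one runs splittability contrapositively: since $R \nsim S$ the pair is \emph{not} coupled, so one spliced path must hit $u_R^{+,R}$ (say), whence $v \in u_R^{+,R}$, and f-closedness---now legitimately invoked---yields $v \in u_R^{\oplus} = w_A^{i\,\oplus}$, contradicting the choice of $(i,j)$.
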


\begin{proof}
Let $P_A$ visit in order the nodes $u = w_A^1, \dots, w_A^m = v$ and 
similarly $P_B$ the nodes $u = w_B^1, \dots, w_B^{m'} = v$.
We will show that for any pair $w_A^i, w_A^j$, $v \in w_A^{i \oplus} \cup w_B^{j \oplus}$.
This proves the lemma, because we can choose $i = j = 1$. 
Suppose the claim does not hold, and consider the pair $w_A^i, w_B^j$ such that
$v \notin w_A^{i \oplus}, v \notin w_B^{j \oplus}$ and $i+j$ is maximum.
First, note none of these nodes is $v$, otherwise $v \in w_A^{i \oplus} \cup w_B^{j \oplus}$ trivially. 
Next, consider any node $w_A^k$, for $k>i$. Then, it must be that $v \in w_A^{k \oplus}$, since
the pair $w_A^k, w_B^j$ has $k +j > i+j$ and $v \notin w_B^{j \oplus}$. Similarly for any node
$w_B^k$ with $k > j$, $v \in w_B^{k\oplus}$.

Hence, both edges $e_R = (w_A^i, w_A^{i+1})$ and $e_S = (w_B^j, w_B^{j+1})$ must be inconsistent.
So, $R,S \in E^i$ and $v \in v_R^{\oplus}, v \in v_S^{\oplus}$. Moreover, if $u_R = u$
then $v \in u_{R}^{+,R}$ as well, which implies that $v \in u_R^{\oplus} = w_A^{i\oplus}$,a contradiction
(similarly, $u_S \neq u$). By our assumption of the path structure, $u_R, u_S$ do not belong in
the same SCC and thus $R \nsim S$. 

Finally, let $P_{RS}$ be the path that visits in order the nodes 
$v_R = w_A^{i+1}, \dots, w_A^m = v = w_B^{m'}, \dots, w_B^j = u_S$ and symmetrically $P_{SR}$
the path that visits $v_S ,\dots, u_R$. Since $G$ is splittable and $R \nsim S$, either 
$P_{RS} \cap u_{R}^{+,R} \neq \emptyset$ or $P_{SR} \cap u_{S}^{+,S} \neq \emptyset$. W.l.o.g., assume
that $w \in P_{RS} \cap u_{R}^{+,R}$. Since every node in $P_{RS}$ has directed path to $v$ that
does not go through $e_R$, this further implies that $v \in u_R^{+,R}$. Additionally, we have
already shown that $v \in v_R^{\oplus}$. Since $G$ is $f$-closed, this immediately implies that
$v \in u_R^{\oplus}$, which is a contradiction to the existence of the pair.
\end{proof}

This concludes the proof.
\end{proof}

\section{The coNP-complete Case}
\label{sub:coNP_case}

In this section, we prove part (2) of \autoref{thm:dichotomy}: if
$G[Q]$ is unsplittable, then \cqa{$Q$} is coNP-complete.  We reduce
\cqa{$Q$} from \textsc{Monotone-3Sat}, which is a special case of
\textsc{3Sat} where each clause contains only positive or only
negative literals. We say that a clause is positive (negative) if it
contains only positive (negative) literals.
\textsc{Monotone-3Sat} is known to be a NP-complete problem \cite{GJ79}. 

Given an instance $M$ of \textsc{Monotone-3Sat}, let us denote by $\Phi$
the set of all clauses, $X$ the set of all variables, $X^*$ the set of
all literals and $\bB = \{T,F\}$ (true, false).  Moreover, let us
define $\top = \Phi \times \bB = \setof{(\phi, x^*)}{x^* \in \phi,
  \phi \in \Phi}$ and $\bot = \set{()}$.  We order the set $\mL =
\set{\bot, \bB, X, \Phi, X^*, \top}$ as shown in
\autoref{fig:label_lattice}: $\bot$ and $\top$ are the minimal and
maximal elements, and $\bB \leq \Phi$, $X \leq X^*$ and $\bB \leq
X^*$.  The reader may check that $\mL$ is a lattice.  For example,
$\Phi \wedge X^* = \bB$ and $\bB \vee X = X^*$.


\begin{figure}
\centering
\resizebox{5cm}{!} {\input 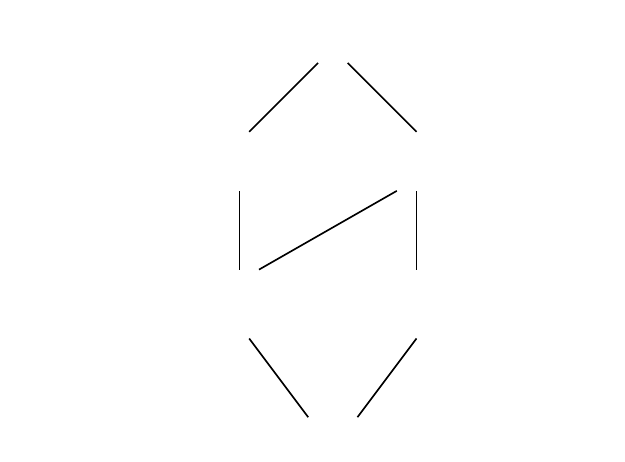_t} 
\caption{The lattice of the set of labels $\mL$.}
\label{fig:label_lattice}
\end{figure}


\begin{definition}[Valid Labeling]
\label{def:valid:labeling}
Let $R,S \in E^i$. A labeling $L: V(G) \rightarrow \mL$ is {\em $(R,S)$-valid} 
if the following conditions hold:
\begin{packed_enum}
\item \label{item:valid:1} $L(u_R) = \Phi$ and $L(v_R) \in \{\top, X, X^*\}$.
\item \label{item:valid:2} $L(u_S) = X$ and $L(v_S) \in \{\bB, X^*\}$.
\item \label{item:valid:3} For every $T \in E^i \setminus \{R,S\}$, $L(u_T) \geq L(v_T)$.
\item \label{item:valid:4}  $\exists P_R: v_R \leftrightarrow u_S$ such that $\forall v \in P_R, L(v) \geq X$.
\item \label{item:valid:5} $\exists P_S: v_S \leftrightarrow u_R$ such that $\forall v \in P_S, L(v) \geq \bB$.
\end{packed_enum}
\end{definition}

We prove:

\begin{proposition} 
\label{prop:valid:labeling}
If $R,S \in E^i$ are coupled and $S \not \lesssim R$, then $G$ admits a $(R,S)$-valid labeling.
\end{proposition}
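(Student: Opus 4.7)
The plan is to construct $L$ in two stages.  First I would fix a partial base labeling $L_0$ using the two witness paths given by coupling, then extend to $L$ by taking joins over forward reachability along inconsistent edges other than $R$ and $S$.

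\textbf{Base labeling.}  Fixing undirected witness paths $P_R : v_R \leftrightarrow u_S$ with $V(P_R) \cap u_R^{+,R} = \emptyset$ and $P_S : v_S \leftrightarrow u_R$ with $V(P_S) \cap u_S^{+,S} = \emptyset$, I would first record three easy observations: $u_R \notin V(P_R)$ (since $u_R \in u_R^{+,R}$), $u_S \notin V(P_S)$, and $u_R \neq u_S$ (otherwise $R \sim S$ and in particular $S \lesssim R$, contradicting the hypothesis).  The base labeling assigns $L_0(u_R) = \Phi$, $L_0(u_S) = X$, and for every other vertex $v$: $L_0(v) = X^*$ if $v \in V(P_R) \cap V(P_S)$; $L_0(v) = X$ if $v \in V(P_R) \setminus V(P_S)$; $L_0(v) = \bB$ if $v \in V(P_S) \setminus V(P_R)$; and $L_0(v) = \bot$ otherwise.

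\textbf{Propagation and the easy conditions.}  Let $E' = E^i \setminus \{R,S\}$ and, for each $v$, let $F(v)$ denote the set of vertices reachable from $v$ by a directed path whose edges all lie in $E'$ (including $v$ itself via the empty path).  The final labeling is
\[
  L(v) \;=\; \bigvee_{v' \in F(v)} L_0(v').
\]
Condition~(3) is then immediate because $F(v_T) \subseteq F(u_T)$ for every $T \in E'$.  Conditions~(4) and~(5) follow because $L_0(v) \geq X$ for $v \in V(P_R)$, $L_0(v) \geq \bB$ for $v \in V(P_S)$, and $L \geq L_0$ pointwise.

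\textbf{Exact source values at $u_R, u_S$: the main obstacle.}  The delicate part will be conditions~(1) and~(2), which require the exact values $L(u_R) = \Phi$ and $L(u_S) = X$, not merely lower bounds.  For $L(u_R) = \Phi$: paths in $E'$ avoid $e_R$, so $F(u_R) \subseteq u_R^{+,R}$, which is disjoint from $V(P_R)$; consequently no $v' \in F(u_R)$ carries an $L_0$-label in $\{X, X^*\}$, and every value in the join is $\leq \Phi$.  For $L(u_S) = X$: symmetrically $F(u_S) \subseteq u_S^{+,S}$ is disjoint from $V(P_S)$, ruling out labels $\bB$ and $X^*$; and the hypothesis $S \not\lesssim R$, i.e.\ $u_R \notin u_S^+$, guarantees $u_R \notin F(u_S)$, ruling out the sole $\Phi$ label.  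The remaining $L_0$-values available inside $F(u_S)$ are only $\bot$ and $X$, so $L(u_S) = X$.  Finally, for the endpoint conditions, $L(v_R) \in \{X, X^*, \top\}$ because $L_0(v_R) \in \{X, X^*\}$ and the only lattice elements above either of these are $X^*$ and $\top$; and for $L(v_S) \in \{\bB, X^*\}$, the same reachability argument applies: $F(v_S) \subseteq v_S^+$, so $u_R \in F(v_S)$ would give $u_S \leadsto v_S \leadsto u_R$ contradicting $S \not\lesssim R$, hence $L_0 = \Phi$ never contributes at $v_S$, the join over $F(v_S)$ lies in $\{\bot, \bB, X, X^*\}$, and combined with $L_0(v_S) \in \{\bB, X^*\}$ this forces $L(v_S) \in \{\bB, X^*\}$.
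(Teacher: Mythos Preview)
Your proof is correct and takes a genuinely different route from the paper's. The paper assigns initial labels $L_0(u_R)=\Phi$, $L_0(v_R)=\top$, $L_0(u_S)=X$, $L_0(v_S)=X^*$ to the four distinguished endpoints only, and then defines $L(v)$ as the \emph{meet} of the $L_0$-values of those among $\{u_R,v_R,u_S,v_S\}$ that can reach $v$ by a directed path in $G-\{e_R,e_S\}$ (allowing both consistent and inconsistent edges). You instead seed $L_0$ on \emph{all} vertices according to membership in the two witness paths $P_R,P_S$, and then take the \emph{join} over forward reachability restricted to the inconsistent edges $E^i\setminus\{R,S\}$. The two constructions are dual in spirit (meet versus join, backward versus forward reachability, full edge set versus $E^i$ only), and they generally produce different labelings. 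The paper's version relies on the specific seed values $\top$ and $X^*$ at $v_R,v_S$ so that the meets land in the right subsets; your version bypasses this by encoding conditions~(4)--(5) directly into the base labeling and propagating only along the edges that condition~(3) actually constrains, which makes the verification of~(3) an immediate monotonicity. Your argument is slightly more elementary in that it never needs to reason about reachability through consistent edges at all.
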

 
If the query $Q$ has an unsplittable graph $G = G[Q]$, then there
exists two coupled edges $R, S$ s.t.  $R \nsim S$.  This implies that
we cannot have both $R \lesssim S$ and $S \lesssim R$, and the
proposition tells us that $G$ has an $(R,S)$-valid labeling.  We will
show later how to use this labeling to reduce $M$ to \cqa{$Q$}.
First, we prove the proposition.


\begin{proof} Since $S \in coupled^+(R)$, there exists a path $P_R :
  v_R \leftrightarrow u_S$ s.t. $P_R \cap u_R^{+,R} = \emptyset$;
  similarly, there exists a path $P_S : v_S \leftrightarrow u_R$
  s.t. $P_S \cap u_S^{+,S} = \emptyset$.  Notice that, in particular,
  $P_R$ contains the source and destination nodes $v_R, u_S$, and,
  similarly, $P_S$ contains the nodes $v_S, u_R$, which implies:
  \begin{align}
    v_R \not\in u_R^{+,R} && u_S \not \in u_R^{+,R} && v_S \not \in u_S^{+,S} && u_R \not \in u_S^{+,S} \label{eq:notin}
  \end{align}

  We define the label $L$ as follows.  Let $W = \set{u_R,v_R,u_S,v_S}$
  and set the initial labels for the four nodes in $W$:
  \begin{align*}
    L_0(u_R) = & \Phi, & L_0(v_R) = & \top, & L_0(u_S) = & X, & L_0(v_S) = & X^*
  \end{align*}
  For every node $v \in V(G)$, denote $\mW^{-1}(v) = \setof{w}{w \in
    W, v \in w^{+,R,S}}$, where $w^{+,R,S}$ is the set of nodes
  reachable from $w$ by a directed path that does not go through
  either $R$ or $S$.  In other words, $\mW^{-1}(v)$ is the subset of
  the four distinguished nodes that can reach $v$ without using $R$ or
  $S$.  Trivially, $w \in \mW^{-1}(w)$, for every $w \in W$.  Define
  the labeling $L$ as follows:
\begin{align*}
 \forall v \in V(G):  \quad L(v) = \bigwedge \setof{L(w)}{w \in \mW^{-1}(v)}
\end{align*}
We will show that this labeling is $(R,S)$-valid.  We start by
checking properties (\ref{item:valid:1}) and (\ref{item:valid:2}).
Consider each of the four distinguished nodes in $W$:
\begin{packed_grep}
\item[$u_R$:] The set $\mW^{-1}(u_R)$ is either $\set{u_R}$ or
  $\set{u_R, v_R}$; indeed $v_S \not\in \mW^{-1}(u_R)$ because $S \not
  \lesssim R$, and $u_S \not\in \mW^{-1}(u_R)$ by Eq.(\ref{eq:notin}).
  By definition, either $L(u_R)=\Phi$ or $L(u_R) = \Phi \wedge \top =
  \Phi$; in both cases $L(u_R)=\Phi$.
\item[$u_S$:] We have $\set{u_S} \subseteq \mW^{-1}(u_S) \subseteq
  \set{u_S, v_R, v_S}$, because Eq.(\ref{eq:notin}) implies $u_S
  \notin u_R^{+,R,S}$.  This implies $X = L_0(u_S) \geq L(u_S) \geq
  L_0(u_S) \wedge L_0(v_R) \wedge L_0(v_S) = X \wedge \top \wedge
  X^*=X$, hence $L(u_S) = X$.
\item[$v_R$:] We have $\set{v_R} \subseteq \mW^{-1}(v_R) \subseteq
  \set{u_S, v_R, v_S}$, because Eq.(\ref{eq:notin}) implies $v_R
  \notin u_R^{+,R,S}$.  Therefore, $\top \geq L(v_R) \geq X \wedge
  \top \wedge X^* = X$, implying $L(v_R) \in \set{X, X^*, \top}$.
\item[$v_S$:] We have $\set{v_S} \subseteq \mW^{-1}(v_S) \subseteq
  \set{u_R, v_R, v_S}$, because Eq.(\ref{eq:notin}) implies $v_S
  \notin u_S^{+,R,S}$.  Therefore, $X^* \geq L(v_S) \geq \Phi \wedge
  \top \wedge X^* = \bB$, implying $L(u_S) \in \{\bB, X^*\}$.
\end{packed_grep}

To show property (\ref{item:valid:3}), consider an edge $e_T =
(u_T,v_T)$, $T \neq R,S$.  Then $\mW^{-1}(u_T) \subseteq
\mW^{-1}(v_T)$ which implies $L(u_T) \geq L(v_T)$.

For (\ref{item:valid:4}), let $P_R$ be the undirected path defined
earlier s.t.  $P_R \cap u_R^{+,R} = \emptyset$; we also have $P_R \cap
u_R^{+,R,S}= \emptyset$.  Let $v \in P_R$ be any node on this path.
Then $u_R \notin \mW^{-1}(v)$, which implies that $\mW^{-1}(v)
\subseteq \set{v_R, u_S, v_S}$, and therefore $L(v) \geq \top \wedge X
\wedge X^* = X$.

Finally, for  (\ref{item:valid:4}), let $P_S$ be the undirected path
defined earlier, s.t. $P_S \cap u_S^{+,S} = \emptyset$.  As before,
for any node $v \in P_S$ we have $\mW^{-1}(v) \subseteq \set{u_R, v_R,
  v_S}$, and therefore $L(v) \geq \Phi \wedge \top \wedge X^* = \bB$.
\end{proof}

Next, we show how to use a valid labeling to reduce the
\textsc{Monotone-3Sat} $\Phi$ to \cqa{$Q$}.

 
\introparagraph{The Functions $f_{L_1L_2}$} For any pair of sets $L_1,
L_2 \in \mL$ such that $L_1 \geq L_2$, we define a function
$f_{L_1L_2}: L_1 \rightarrow L_2$, as follows.  First, for the seven
pairs $L_1, L_2$ where $L_1$ covers\footnote{In a lattice, $L_1$ {\em
    covers} $L_2$ if $L_1 > L_2$ and there is no $L_3$ s.t. $L_1 > L_3
  > L_2$.} $L_2$, we define $f_{L_1L_2}$ directly:
\begin{packed_grep}
\item[$(\Phi, \bB)$]: $f_{\Phi, \bB}(\phi) = T$ if $\phi$ is a positive clause, else $F$
\item[$(X^*, X)$]: $f_{X^*,X}(x^+) = f_{X^*,X}(x^{-}) = x$
\item[$(X^*, \bB)$]: $f_{X^*,\bB}(x^+) = T$ and $f_{X^*,\bB}(x^-) = F$
\item[$(\top, \Phi)$]: $f_{\top, \Phi}((\phi,x^*)) = \phi$ 
\item[$(\top, X^*)$]: $f_{\top, X^*}((\phi,x^*)) = x^*$
\item[$(\bB,\bot),(X,\bot)$]: $f_{\bB,\bot}(b) = f_{X,\bot}(x) = ()$
\end{packed_grep}
Next, we define $f_{LL} = id_L$ (the identity on $L$) and $f_{L_1L_3}
= f_{L_2L_3} \circ f_{L_1L_2}$ for all $L_1 \geq L_2 \geq L_3$.
Readers familiar with category theory will notice that we have
transformed the lattice $\mL$ into a category.

\introparagraph{Instance Construction} Now we define the database
instance $I$, by defining a binary relation $T^I$ for every relation
name $T$.  Let $L_1 = L(u_T)$, $L_2 = L(v_T)$ be the labels of the
source and target node of $e_T$.
We distinguish two cases, depending on whether $T$ is $R,S$ or not. 

If $T \neq R$, $T \neq S$, then we know that $L_1 \geq L_2$.  Define
$T^I = \setof{(a,b)}{a \in L_1, b = f_{L_1L_2}(a) \in L_2}$.  Notice
that the first attribute of $T^I$ is a key (because $f_{L_1L_2}$ is a
function), and therefore we ensure that $T^I$ always satisfies the key
constraint, regardless of whether the type of $T$ was consistent or
inconsistent.

If $T=R$ or $T=S$, then $L_1 \not\geq L_2$.  In this case we construct $R^I$ and
$S^I$ to be a certain set of pairs $(a,b)$, $a \in L_1, b \in L_2$,
where $b$ is obtained from $a$ by either going ``back'' in the
lattice, or going ``back and forth'', depending on the particular
combination of $L_1, L_2$ given by \autoref{def:valid:labeling}:

\begin{packed_grep}
\item[$(\Phi,\top$)]: $R^I = \setof{(a,b)}{b \in f^{-1}_{\top, \Phi}(a)}$ (back)
\item[$(\Phi,X^*$)]: $R^I = \setof{(a,b)}{\exists c \in f^{-1}_{\top,\Phi}(a): f_{\top,X^*}(c) = b}$ (back-and-forth)
\item[$(\Phi,X$)]: $R^I = \setof{(a,b)}{\exists c \in f^{-1}_{\top,\Phi}(a):f_{\top,X}(c) = b}$ (back)
\item[$(X,X^*$)]: $S^I = \setof{(a,b)}{b \in f^{-1}_{X^*,X}(a)}$ (back)
\item[$(X,\bB$)]: $S^I = \setof{(a,b)}{\exists c \in f^{-1}_{X^*,X}(a): f_{X^*,\bB}(c) = b}$ (back-and-forth)
\end{packed_grep}

Notice that in all cases $R^I$ and $S^I$ are inconsistent.  In the
first case, a repair of $R^I$ chooses for each clause $\phi \in \Phi$
a value $(\phi, b)$ with $b \in \bB$; in the second case, a repair of
$R^I$ chooses for each clause $\phi$, a literal $x^* \in \phi$, while
in the third case a repair chooses for each clause $\phi$ a variable
$x$ in that clause.


\begin{example}
  Consider the formula $Y = \phi_1 \wedge \phi_2$, where $\phi_1 =
  (x^+ \vee y^+ \vee z^+)$ and $\phi_2 =(z^- \vee w^- \vee t^-)$. If
   the inconsistent relation $R$ is labeled with $(\Phi, X)$, it will be
  populated by the tuples $(\phi_1,x), (\phi_1,y), (\phi_1,z)$ and
  $(\phi_2,z), (\phi_2,w), (\phi_2,t)$. On the other hand, a 
  consistent relation $T \neq R,S$ that is labeled with $(\Phi, \bB)$
  will contain the tuples $(\phi_1, T), (\phi_2, F)$.
\end{example}

Thus, given a valid labeling we can create a database instance using the construction 
we just presented.  We prove:

\begin{proposition}
\label{prop:3sat_reduction}
Let $I$ be the instance that corresponds to a $(R,S)$-valid labeling according to an instance $M$ of 
\textsc{Monotone-3Sat}. Then, $I \not \vDash Q$ if and only if $M$ has a satisfying assignment. 
\end{proposition}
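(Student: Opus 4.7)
The plan is to establish a direct correspondence between repairs $r$ of $I$ and pairs $(w,\sigma')$, where $w:\Phi\to X^*$ is a \emph{witness function} satisfying $w(\phi)\in\phi$ for every clause and $\sigma':X\to\bB$ is a truth assignment, and then to show that $Q$ fails on $r=(w,\sigma')$ precisely when every clause has its witness literal falsified by $\sigma'$. First, I would observe that every relation $T\neq R,S$ is either consistent, or inconsistent with $L(u_T)\geq L(v_T)$ by property~(\ref{item:valid:3}) of \autoref{def:valid:labeling}; in either case its $I$-extent encodes the function $f_{L(u_T),L(v_T)}$ and already satisfies its key constraint. Thus all freedom in a repair lies in the key-groups of $R$ and $S$, which by the construction of $R^I$ and $S^I$ amount exactly to choosing $w$ (one literal/variable per clause) and $\sigma'$ (one truth value per variable), respectively.

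The central step is to characterize when $Q$ admits a homomorphism into $r$. I would introduce two projections arising from the lattice $\mL$: $\pi_X$, defined on labels in $\set{X,X^*,\top}$ and extracting the underlying variable, and $\pi_\bB$, defined on labels in $\set{\bB,\Phi,X^*,\top}$ and extracting the polarity. The projection $\pi_\bB$ is well-defined on $\Phi$ and $\top$ precisely because the 3-SAT instance is \emph{monotone}: every literal in a clause $\phi$ shares the same polarity as $\phi$ itself. A direct case check shows that both projections commute with every $f_{L_1,L_2}$ on its domain of definition, and also with the two ``repair'' maps $g_R, g_S$ corresponding to the choices made by $w$ and $\sigma'$; monotonicity again is what makes the $\pi_\bB$-invariant hold across the $R$-edge. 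Consequently, along $P_R$ (whose nodes all have label $\geq X$) the $\pi_X$ value is preserved across every edge, and along $P_S$ (whose nodes all have label $\geq \bB$) the $\pi_\bB$ value is preserved, regardless of traversal direction.

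Applying the invariants to a hypothetical homomorphism $h$: setting $\phi:=h(u_R)$ and $x:=h(u_S)$, the $P_R$-invariant forces $x=\pi_X(w(\phi))$, i.e., $x$ is the underlying variable of the witness literal; the $P_S$-invariant forces $\pi_\bB(h(v_S))=\pi_\bB(\phi)$, i.e., the truth value $\sigma'(x)$ coincides with the polarity of $\phi$. Since $w(\phi)\in\phi$ shares that polarity, this is exactly the statement that $w(\phi)$ evaluates to true under $\sigma'$. Conversely, given any clause $\phi$ with $w(\phi)$ true under $\sigma'$, defining $h(u_R):=\phi$ and propagating through the functional edges (using $g_R, g_S$, and the $f_{L_1,L_2}$ for all other edges) yields a consistent homomorphism: the invariants guarantee that values agree wherever multiple propagation paths meet.

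Combining these, $Q$ is satisfied on $r$ iff some clause $\phi$ has $w(\phi)$ true under $\sigma'$; equivalently, $Q(r)$ is false iff every $w(\phi)$ is false under $\sigma'$. The bijection is then immediate: $I\not\vDash Q$ iff there exists $(w,\sigma')$ with every $w(\phi)$ false under $\sigma'$, iff the complementary assignment $\sigma:=\neg\sigma'$ satisfies every clause (given a satisfying $\sigma$, take $\sigma':=\neg\sigma$ and pick $w(\phi)$ to be any literal of $\phi$ made true by $\sigma$, which exists because $\sigma$ satisfies $\phi$). The main obstacle is the bookkeeping in the second step: one must verify the invariance of $\pi_X$ and $\pi_\bB$ on every edge type of $G$ and handle that $P_R,P_S$ are undirected, so edges may be traversed in reverse of their natural orientation, and that $S$ (resp.\ $R$) may itself lie on $P_R$ (resp.\ $P_S$); monotonicity of the formula is precisely what makes these checks uniform.
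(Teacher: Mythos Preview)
Your approach is essentially the same as the paper's, just organized more systematically. Both proofs hinge on the same two observations: along $P_R$ (all labels $\geq X$) the underlying variable is preserved, forcing $h(u_S)$ to be the variable of the witness literal $w(\phi)$; along $P_S$ (all labels $\geq\bB$) the polarity is preserved, forcing $\sigma'(h(u_S))$ to agree with the polarity of $\phi$. The paper carries this out by direct case analysis on the possible label pairs and argues the two directions separately, while you package the same case analysis as invariance of the projections $\pi_X=f_{\cdot,X}$ and $\pi_\bB=f_{\cdot,\bB}$ under the functorial maps $f_{L_1L_2}$ (and under $g_R,g_S$); your uniform homomorphism $h(v)=f_{\top,L(v)}((\phi,w(\phi)))$ for the ``satisfiable $\Rightarrow$ homomorphism'' direction is exactly what the paper leaves implicit. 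The complementation $\sigma=\neg\sigma'$ is likewise identical to the paper's ``if $v(x)=T$ include $(x,F)$''. One small wording issue: you say consistent edges $T$ might not satisfy $L(u_T)\geq L(v_T)$, but the instance construction (and hence your claim that $T^I$ encodes $f_{L(u_T),L(v_T)}$) presupposes this inequality for \emph{all} $T\neq R,S$; the specific labeling built in the paper does guarantee it for consistent edges as well, so this is harmless, but your sentence should reflect that.
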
 

\begin{proof}
First, note that the valid labeling guarantees that, if $T \neq R,S$, then $T$ will be a consistent relation
in the instance $I$. On other other hand, the relations $R$ and $S$ will be inconsistent.

Consider a satisfying assignment for $M$, where $v(x)$ denotes the assigned value (true or false) for variable $x$. We will construct a repair $r$ that does not satisfy $Q$. Since the assignment satisfies the formula, for every clause $\phi$ there exists a literal $x^*$ that evaluates to true. Then, for the relation $R$, $r$ includes the tuple $(\phi, x^*)$ (if $e_R$ has labels $(\Phi,X^*)$) or $(\phi,x)$ (if $(\Phi ,X)$) or $(\phi, (\phi, x^*))$ (if $(\Phi, \top)$). As for the relation $S$, we have two cases. 
If the labels are $(X,A)$, $r$ includes $(x,F)$ when $v(x)=T$, and $(x,T)$ when $v(x)=F$. 
Similarly, for $(X ,X^*)$, if $v(x) = T$, $r$ includes the tuple $(x, x^-)$, otherwise
if $v(x)=F$, $r$ includes $(x,x^+)$.

It remains to show that $Q(r)$ evaluates to false. For the sake of contradiction, assume that $Q(r)$ is true  and consider a tuple $t \in Q^f(r)$. Let $t[u_R]=\phi$ and assume w.l.o.g. that it is a positive clause. Then,
$t[v_R] \in \{x, x^+, (\phi, x^+)\}$, for a variable $x$ with assignment $v(x)=T$.
Note that there must be a path from $v_R$ to $u_S$ such that every label 
in the path has a consistent mapping to $X$. Hence, $t[u_S] = x$, which implies that $t[v_S] \in \set{F,x^-}$ by our construction of $I$.
But this is a contradiction, since there exists a path from $v_S$ ($t[v_S] \in \set{F,x^-}$) to $u_R$ 
($t[u_R] = \phi$ is a positive clause), where each label has a consistent mapping to $A = \{T,F\}$.  

For the inverse direction, assume that $I$ has a repair such that $Q(r)$ is false. 
We construct an assignment for the variables in $M$ as follows: if the repair $r$ contains a tuple 
$(x, T)$ (or $(x,x^+)$) in relation $S$, we let $v(x) = F$; otherwise, $v(x) = T$. 
Now, consider a positive (w.l.o.g.) clause $\phi$ of the instance $M$. Assume that $r$ contains in $R$ a
tuple $(\phi,x)$ (or $(\phi, x^+)$ or $(\phi,(\phi,x^+))$). 
Using similar arguments as before, one can see that $r$ cannot include $(x,T)$ (or $(x,x^+)$); 
otherwise, $Q(r)$ would evaluate to true. Hence, $v(x) = T$ and clause $\phi$ will be satisfied.
\end{proof}

\begin{figure}
\centering
\resizebox{5cm}{!} {\input 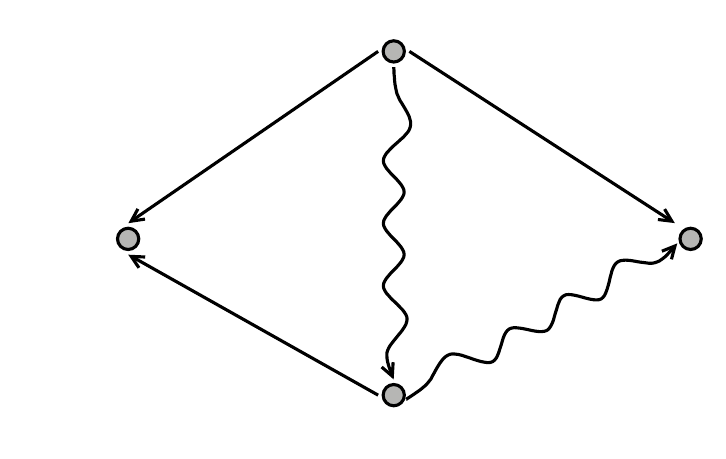_t} 
\caption{A query graph with a $(R,S)$-valid labeling.}
\label{fig:label_example}
\end{figure}

\begin{example}
As an example of the labeling construction, consider the query of 
\autoref{fig:label_example}.
Notice that $R \lesssim S$. Also, $u_R = x$, $v_R = u_S = y$ and $v_S=z$.
Since $L^+(x) = \{L_0(u_R)\} = \{\Phi\}$, $L(x)  = \Phi$. Also,
$L^+(y) = \{L_0(v_R), L_0(u_S)\} = \{\top,X\}$, hence
$L(y) = \top \wedge X = X$. For variable $z$, $L^+(z) = \{L_0(v_S), L_0(u_R)\} = \{\Phi,X^*\}$
and $L(z) = \Phi \wedge X^* = \bB$. 
Finally, $L^+(t) = \{L_0(u_R), L_0(v_R), L_0(u_S)\} = \{\Phi, \top, X\}$ and hence 
$L(t) = \Phi \wedge \top \wedge X = \bot$.
\end{example}

\section{Related Work}
\label{sec:related}

The consistent query answering framework was first proposed by Arenas et al. in~\cite{ABC99}. Fuxman and Miller~\cite{FM07} focused on primary key constraints, with the goal of specifying conjunctive queries where \cqa{$Q$} is first-order expressible, \ie can be represented as a boolean first-order query over the inconsistent database. They presented a class of acyclic conjunctive queries w/o self-joins, called $C_{forest}$, that allows such first-order rewriting. Further,  Fuxman et al.~\cite{FFM05} designed and built a system that supported the query rewriting functionality for consistent query answering.

In a series of papers~\cite{Wijsen09, Wijsen10}, Wijsen improved on the results for first-order expressibility. The author presented a necessary and sufficient syntactic condition for the first-order expressibility for acyclic conjunctive queries without self-joins. In a later paper, Wijsen~\cite{Wijsen10b} gave a polynomial time algorithm for the query $Q_2 = R(\ux, y), S(\uy,x)$, which is known to be not first-order expressible. $Q_2$ is the first query that was proven to be tractable even though it does not admit a first-order rewriting. Kolaitis and Pema~\cite{KP12} proved a dichotomy for the complexity of \cqa{$Q$} when the query has only two atoms and no self-joins into polynomial time and coNP-complete.
Finally, Wijsen~\cite{Wijsen13} recently classified several acyclic queries into PTIME
and coNP-complete, without however showing the complete dichotomy for acyclic queries without
self-joins.

A relevant problem to consistent query answering is the counting version of the problem: given a query and an inconsistent database, count the number of repairs that satisfy the query. Maslowski and Wijsen~\cite{MW11} showed that this problem admits a dichotomy in P and \#P-complete for conjunctive queries without self-joins.

Finally, we should mention that the problem of consistent query answering is closely related to probabilistic databases, and in particular {\em disjoint-independent} probabilistic databases~\cite{DS07}. Wijsen in~\cite{Wijsen13} discusses the precise connection between the complexity of evaluating a query $Q$ on a probabilistic database and \cqa{$Q$}. 

\section{Conclusion}
\label{sec:conclusion}

In this paper, we make significant progress towards proving a dichotomy on the complexity of \cqa{$Q$},  studying the case where $Q$ is a Conjunctive Query without self-joins consisting of atoms with simple keys or  keys containing all attributes.  It remains a fascinating open question whether such a dichotomy exists for general conjunctive queries, even in the simpler case where there are no self-joins. 

\bibliographystyle{abbrv}
\bibliography{ref}  

\newpage
\appendix

\section{Simplifying the Structure}
\label{sec:simplify:structure}

In this section, we show how to transform any query that consists of atoms where the key is either a single attribute or all attributes to a query which we call {\em graph-representable}.

\begin{definition}
A boolean connected CQ $Q$ is {\em graph-representable} 
if it is w/o self-joins, w/o constants, w/o duplicate variables in a single atom, and further
contains only binary atoms where each atom has exactly one attribute as key.
\end{definition}

First, note that we can assume w.l.o.g. that the hypergraph for $Q$ is connected; 
otherwise, we can solve \cqa{$Q$} for each of the connected components and decide that $Q$ is certain if and only if every component is certain.

We write that $\cqa{Q} \stackrel{FO}{\sim} \cqa{Q'}$ if there exists an FO-expressible reduction from
\cqa{$Q$} to \cqa{$Q'$} and vice versa.

\begin{theorem}
\label{thm:reduction}
Let $Q$ be a connected boolean CQ without self-joins, where the key for each atom is either a single attribute or all attributes. Then, there exists a graph-representable query $Q'$ such that
 $\cqa{Q} \stackrel{FO}{\sim} \cqa{Q'}$.
\end{theorem}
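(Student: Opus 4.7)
My plan is to construct $Q'$ by applying, in sequence, four elementary rewriting rules to $Q$, each accompanied by an FO-definable transformation of instances that preserves the certainty of the query up to a bijection between repair sets. Composing these rules yields the forward reduction $\cqa{Q} \to \cqa{Q'}$, and each rule is designed with an FO inverse, giving the reverse reduction and hence $\cqa{Q} \stackrel{FO}{\sim} \cqa{Q'}$. The rules address, in turn, the four features that can prevent $Q$ from being graph-representable: (i) constants, (ii) repeated variables inside a single atom, (iii) inconsistent atoms of arity $\neq 2$, and (iv) all-key consistent atoms of arity $\neq 2$. Every rule introduces freshly-named relation symbols, so the no-self-joins condition is preserved throughout.

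Rules (i) and (ii) are the easy cases. For each occurrence of a constant $c$ in an atom, I replace $c$ by a fresh variable $z$ and join with a new consistent binary atom $E(\underline{u}, z)$ whose instance is the singleton $\{(c,c)\}$: any satisfying homomorphism is forced to map $z$ to $c$. For a duplicate variable $x$ within a single atom, I rename the second occurrence to a fresh variable $z$ and join with a fresh consistent diagonal atom $D(\underline{x}, z)$ whose instance is $\{(a,a) : a \in \operatorname{adom}(I)\}$. In both cases the new atoms are consistent, so the repair set is unchanged, and the instance transformation is trivially FO-expressible in both directions.

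The substantive rules (iii) and (iv) use a \emph{tuple-identifier} construction. For an inconsistent atom $R^i(\underline{x}, y_1, \ldots, y_k)$, I introduce a fresh variable $z_R$ and replace the atom by
\[ R_0^i(\underline{x}, z_R), \; R_1^c(\underline{z_R}, y_1), \; \ldots, \; R_k^c(\underline{z_R}, y_k), \]
with $R_0^i$ inconsistent and every $R_j^c$ consistent. In the transformed instance, each tuple $t=(a,b_1,\ldots,b_k) \in R^I$ is assigned a distinct identifier $\mathrm{id}(t)$ drawn from an extended domain (a standard device in FO structure-to-structure reductions); $R_0^i$ is populated by $\{(a, \mathrm{id}(t))\}$ and each $R_j^c$ by $\{(\mathrm{id}(t), b_j)\}$. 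The key-groups of $R_0^i$ are then in bijection with those of $R^i$, and once a repair picks an identifier $\mathrm{id}(t)$ for the key $a$, the consistent $R_j^c$-atoms deterministically recover $t$; this yields a bijection between repairs preserving $Q^f$-satisfaction under the rewrite. Rule (iv) is an analogous all-consistent variant, with unary atoms padded to arity two by a dummy fixed-value attribute.

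Finally, I would verify termination (each rule strictly decreases the well-founded measure consisting of the total arity of non-binary atoms plus the count of constants and duplicate occurrences) and the existence of FO inverses. The forward direction is immediate from the local FO definitions. The reverse direction requires inverting the tuple-ID split, which I would carry out by defining $R^I$ as the projection of $R_0^i \bowtie R_1^c \bowtie \cdots \bowtie R_k^c$ on the original attributes. To cope with adversarial $I'$-instances in which some $z \in \pi_2(R_0^i)$ lacks a $R_j^c$-entry, I would add for each such $z$ a tuple to $R^I$ with a fresh null in place of each missing component, and, as an initial syntactic optimization, drop from the rewrite any $R_j^c$-atom whose variable $y_j$ is not shared with the rest of $Q$, so that the "dead" positions never need to be simulated. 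The careful verification that this inversion preserves certainty under the repair bijection is the main technical point of the proof.
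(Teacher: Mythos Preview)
Your overall strategy---reduce $Q$ to a graph-representable $Q'$ via a sequence of local, FO-invertible rewrites---matches the paper's, and your tuple-identifier construction in rule (iii) is essentially identical to the paper's arity-reduction proposition. The differences lie in the surrounding cases. For constants and repeated variables, the paper \emph{replaces} the offending atom $R$ by a smaller atom $R'$ (obtained by selection and projection on the constrained positions), whereas you keep $R$ and \emph{adjoin} fresh binary atoms $E$, $D$; your forward direction is clean, but the reverse (mapping an arbitrary $I'$ for $Q'$ back to an $I$ for $Q$) is more delicate under your encoding, since an adversarial $E^{I'}$ may contain several values and you must fold them onto the single syntactic constant $c$ of $Q$---the paper's atom-replacement sidesteps this entirely. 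For unary atoms, the paper removes them after semijoin-filtering the instance (via a frugal-repair argument), while you pad to arity two with a dummy attribute; both are sound. For the inverse of the tuple-ID split, the paper observes that a repair choice $R_0^i(\underline{a},b)$ lacking some $R_j^c(\underline{b},-)$ can never contribute $a$ to an answer and so can be ignored when building the corresponding repair of $I$; your fresh-null padding achieves the same effect by different means.

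One genuine gap in your case analysis: you miss \emph{consistent} simple-key atoms of arity $\geq 3$, e.g.\ $R^c(\underline{x},y_1,\ldots,y_k)$. Your rule (iii) is restricted to inconsistent atoms and rule (iv) to all-key atoms, so this case falls through. The fix is immediate---either drop ``inconsistent'' from rule (iii), or note that such an atom splits directly as $R_1^c(\underline{x},y_1),\ldots,R_k^c(\underline{x},y_k)$ without any tuple identifier, since consistency means the key already functionally determines each $y_j$.
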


The FO-expressible reduction described in the above theorem can be decomposed in a sequence
of simpler steps, which we describe next, thus proving \autoref{thm:reduction}.

In the case where a query $Q$ contains an atom $R$ with constants and/or variables that
appear twice, we can reduce the query $Q$ to a query $Q'$ where $R$ is replaced by an atom $R'$
that contains only variables that appear exactly once.

\begin{proposition}
Let $Q$ be a CQ  that contains an atom $R$. Let $Q'$ be the query where we have replaced $R$
with an atom $R'$ without constants, and where every variable appears exactly once. Then, 
$\cqa{Q} \stackrel{FO}{\sim} \cqa{Q'}$.
\end{proposition}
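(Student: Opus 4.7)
My plan is to exhibit FO-expressible reductions in both directions between $\cqa{Q}$ and $\cqa{Q'}$. Write $R(u_1,\dots,u_k)$ for the original atom in $Q$ (so some $u_i$ are constants and some variables may repeat) and $R'(v_1,\dots,v_k)$ for the replacement atom, with pairwise distinct variables $v_i$ chosen so that each variable appearing in $R$ is represented at exactly one position of $R'$; the remaining $v_i$, corresponding to constant or duplicate positions of $R$, are fresh and do not appear elsewhere in $Q'$. Call a tuple $(t_1,\dots,t_k)$ \emph{pattern-matching} if $t_i$ equals the constant $u_i$ at every constant position and $t_i = t_j$ whenever $u_i$ and $u_j$ denote the same variable.

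\textbf{Forward reduction $I\mapsto I'$.}
Given an instance $I$, I would build $I'$ by copying every relation other than $R$ unchanged and defining $R'^{I'}$ to consist of the tuples of those key-groups of $R^I$ in which \emph{every} tuple is pattern-matching; equivalently, drop every key-group that contains at least one pattern-violating tuple. This purification is first-order expressible because the bad-key-group condition is an existential quantifier over $R^I$. For correctness: if $I\vDash Q$, then given any repair $r'$ of $I'$ we extend it to a repair $r$ of $I$ by completing each dropped key-group with some pattern-violating tuple (which exists by the removal criterion); $Q(r)$ holds by hypothesis, but pattern-violating tuples cannot witness the $R$-atom, so the witness must lie in $r'$, giving $Q'(r')$. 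Conversely, if $I'\vDash Q'$, any repair $r$ of $I$ restricts to a repair $r'$ of $I'$ by dropping the removed key-groups, and since every surviving tuple is pattern-matching, a $Q'$-witness on $r'$ already witnesses $Q$ on $r$.

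\textbf{Backward reduction and main obstacle.}
For the reverse, given $I'$, I would construct $I$ by expanding each tuple $(t_1,\dots,t_k)\in R'^{I'}$ to the unique pattern-matching tuple $(s_1,\dots,s_k)\in R^I$ defined by $s_i = u_i$ at constant positions and $s_i = t_j$ at positions where $u_i$ is the variable represented at position $j$ of $R'$; all other relations are copied. This expansion is FO-expressible, and since every tuple of $R^I$ pattern-matches by construction, $Q$-witnesses on a repair $r$ of $I$ are in bijection with $Q'$-witnesses on any $I'$-repair $r'$ that agrees with $r$ on the pattern-relevant coordinates; this suffices to show $I\vDash Q \iff I'\vDash Q'$ by the same extend/restrict argument as in the forward direction. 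The principal technical care is the interaction between the key of $R$ and its pattern: if the single-attribute key of $R$ coincides with a constant position, then $R^I$ collapses into a single key-group, which I handle by a direct case argument; similarly, if the key variable repeats at a non-key position, one must verify that purification and expansion respect the key constraint. Both edge cases reduce to short verifications that do not disturb the overall correctness.
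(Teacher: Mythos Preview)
The paper does not actually give a proof of this proposition; it is stated and immediately followed by the next proposition on unary atoms. So I can only evaluate your proposal on its own merits.

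Your forward reduction is clean and correct: purifying $R$ down to those key-groups all of whose tuples match the pattern of $R(u_1,\dots,u_k)$ gives an FO-definable subinstance, and the extend/restrict argument between repairs of $I$ and $I'$ goes through.

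The backward reduction, however, has a genuine gap in exactly the edge case you flag but do not resolve. When the key attribute of $R$ carries a constant $c$, your expansion sends every tuple of $R'^{I'}$ to a tuple with key $c$, so all key-groups of $R'^{I'}$ collapse into a single key-group of $R^I$. This destroys the repair correspondence. Concretely, take $Q=R(\underline{c},y),T(\underline{y},z)$ and $Q'=R'(\underline{v_1},y),T(\underline{y},z)$ with $v_1$ fresh, and let $I'$ have $R'=\{(a_1,b_1),(a_2,b_2)\}$, $T=\{(b_1,d_1)\}$. Then $I'$ is consistent and $Q'(I')$ is true via $(a_1,b_1)$, so $I'\vDash Q'$. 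Your expansion gives $R^I=\{(c,b_1),(c,b_2)\}$, a single key-group; the repair choosing $R(c,b_2)$ falsifies $Q$, so $I\not\vDash Q$. Thus your map does not witness $\cqa{Q'}\le_{\mathrm{FO}}\cqa{Q}$.

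The phrase ``which I handle by a direct case argument'' is therefore hiding the actual difficulty: when the key of $R'$ is a fresh variable, certainty of $Q'$ has an existential character over key-groups (some chosen tuple works), whereas certainty of $Q$ with a constant key is universal over the single key-group (every choice must work). Bridging these quantifier shapes is not a short verification; it requires a different construction (for instance, producing disjoint tagged copies of $I'$, one per key value, and arguing via the disjoint-union behaviour of $\vDash$), and you should spell this out rather than defer it.
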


We can further simplify the query structure by removing unary relations.

\begin{proposition}
\label{prop:unary}
Let $Q$ be a connected CQ and $Q'$ be the query derived from 
$Q$ by removing all occurrences of unary atoms.
Then, $\cqa{Q} \stackrel{FO}{\sim} \cqa{Q'}$.
\end{proposition}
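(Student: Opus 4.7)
The plan is to exhibit two first-order definable reductions, $\cqa{Q} \le_{FO} \cqa{Q'}$ and $\cqa{Q'} \le_{FO} \cqa{Q}$. The starting observation is that each unary atom $U(\underline{x})$ has its sole attribute as key, so $U$ is necessarily of consistent type and $U^I$ is preserved in every repair. Moreover, since $Q$ is connected and a unary atom contributes no hyperedge between two distinct variables, every variable of $Q$ appears in some binary atom (assuming $Q$ is not itself purely unary, a case that trivially reduces to checking nonemptiness of an intersection of unary relations); in particular $Q'$ has the same variable set as $Q$ and is still connected.

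For the easy direction $\cqa{Q'} \le_{FO} \cqa{Q}$, I would map an instance $I'$ to the instance $I$ that copies the binary relations of $I'$ and sets $U^I$ equal to the active domain of $I'$ for every unary atom $U$ in $Q$. Repairs are in obvious bijection, and any assignment satisfying $Q'$ on a repair takes values in the active domain and hence satisfies every unary atom of $Q$, so $I' \vDash Q' \iff I \vDash Q$. The FO-expressibility of $U^I$ is immediate from the finite list of binary relations.

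For the harder direction $\cqa{Q} \le_{FO} \cqa{Q'}$, I would filter each binary relation to retain only ``surviving'' key groups: the group $R(\underline{a},-)$ survives iff $a \in U^I$ for every unary atom $U(\underline{u_R})$ of $Q$ and, in addition, every tuple $(a,b') \in R^I$ satisfies $b' \in U^I$ for every unary atom $U(\underline{v_R})$ of $Q$. This condition is clearly FO-definable. Correctness then has two halves. For $(\Leftarrow)$, any repair $r$ of $I$ restricts to a repair $r'$ of $I'$ by discarding the choices made in non-surviving key groups; a satisfying assignment $\nu$ for $Q'$ on $r'$ uses only surviving tuples, which by design already certify every unary atom of $Q$ (using connectedness to see that every variable is mentioned by some binary atom). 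For $(\Rightarrow)$, any repair $r'$ of $I'$ is extended to a repair $r$ of $I$ by filling in the non-surviving key groups with \emph{witness tuples}: if the key $a$ itself fails a unary, any tuple suffices; if instead some $(a,b'') \in R^I$ has $b'' \notin U^I$ for a unary atom on $v_R$, pick that bad tuple explicitly. Any satisfying assignment $\nu$ of $Q$ on $r$ must then avoid the non-surviving key groups (otherwise $\nu$ would contradict a unary atom on $u_R$ or $v_R$), and so $\nu$ uses only tuples of $r$ that also lie in $r'$, yielding $Q'(r')$ true.

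The main obstacle is the subtle asymmetry of the filter between the key and non-key positions of $R$: a single bad tuple in a key group must disqualify the \emph{entire} group, because otherwise a repair of $I$ that selects the bad tuple would have no corresponding repair of $I'$ that witnesses the failure of $Q'$. The witness-tuple construction in the $(\Rightarrow)$ direction is the technical heart of the argument; verifying that a satisfying assignment for $I \vDash Q$ cannot secretly exploit a non-surviving key group reduces to a short case analysis on which unary atom is responsible for the group's non-survival.
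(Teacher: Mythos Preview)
Your proof is correct and follows essentially the same strategy as the paper: for the harder direction you filter out from each non-unary relation every key-group that can assume a value failing some unary atom, and you correctly identify the key asymmetry---a single bad value in the non-key position must kill the \emph{entire} key-group. The paper's proof reaches the same filter, but it justifies the removal of a whole key-group by appealing to a lemma about frugal repairs (\autoref{lem:frugal}: a value that can never appear in a frugal answer may have all key-groups containing it deleted), whereas you argue directly via the witness-tuple extension and the restriction of repairs. Your direct argument is more self-contained and arguably cleaner; the paper's is terser because it offloads the work to the frugal-repair lemma. For the easy direction the paper populates $U$ with $\Pi_x(Q'^f(I))$ rather than the full active domain, but either choice works for the same reason.

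One small wording issue: you phrase everything for binary atoms, but at the point in the paper where this proposition is applied, non-unary atoms may still have arity greater than two (the arity-reduction step comes afterward). Your argument extends without change---just replace ``binary'' by ``non-unary'' and let the survival condition quantify over every non-key position of the atom.
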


\begin{proof}
Notice that every unary relation is consistent by definition, since the only attribute is the primary key.  
Let $U(x)$ be such a unary relation in $Q$ and consider another appearances of variable $x$ in the query. 
Consider any atom that contains $x$ as a variable. Then, by \autoref{lem:frugal}, we can remove from this atom any key-group such that $x$ assumes a value $a$ and $a \notin U^D$, since no frugal repair will contain $a$ in the answer set. After this processing of $I$, $U$ plays no role in whether a repair satisfies the query and hence can be removed to obtain a query $Q_{-U}$ without the atom $U(x)$. Notice also that the processing is FO-expressible. For the inverse reduction, we can add a unary relation $U(x)$ to $Q_{-U}$ such that $U = \Pi_{x} (Q_{-U}^f(I))$ (since $Q$ is connected, $Q_{-U}$ always contains an appearance of variable $x$). 
\end{proof}

Next, we show how to handle the atoms where the primary key consists of all the attributes: such an example could be $R(\underline{x,y})$ or $S(\underline{x,y,z})$. In the general setting, we are given an atom of the form $R(\underline{x_1, \dots, x_k})$. Observe that the relation $R$ will be always consistent, since it is not possible to have any key violations. 

\begin{proposition}
\label{prop:allkey}
Let $Q$ be a connected CQ containing $R(\underline{x_1, \dots, x_k})$,
and let $Q'$ be the query obtained by replacing $R$ in $Q$ with $k$ new consistent relations
 $R_1^c(\ux, x_1), \dots, R_k^c(\ux, x_k)$ ($x$ is a new variable that does not appear in $Q$).
Then, $\cqa{Q} \stackrel{FO}{\sim} \cqa{Q'}$.
\end{proposition}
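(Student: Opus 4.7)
The plan is to prove both directions of $\cqa{Q} \stackrel{FO}{\sim} \cqa{Q'}$ by exhibiting reductions that set up a canonical bijection between repairs and preserve the satisfaction status under that bijection. The key structural observation is that $R$ in $Q$ is automatically a consistent relation (its key is all attributes), and the new atoms $R_i^c$ in $Q'$ are consistent by declaration; therefore, in both $I$ and the constructed instance, the set of inconsistent relations is exactly the same (namely the atoms of $Q$ other than $R$), so the repairs on either side are determined by the same choices.

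For the forward reduction $\cqa{Q}\to\cqa{Q'}$, given $I$ I would copy every relation other than $R$ unchanged, and for each tuple $t = (a_1,\dots,a_k) \in R^I$ introduce a fresh identifier value $\mathrm{id}_t$, placing $(\underline{\mathrm{id}_t}, a_i) \in R_i^c$ for each $i$. The identifier $\mathrm{id}_t$ appears exactly once as a key in each $R_i^c$, so those relations are indeed consistent, and the canonical bijection between repairs of $I$ and repairs of $I'$ simply augments each repair of $I$ with the full $R_i^{c}$ tables. For the inverse reduction $\cqa{Q'}\to\cqa{Q}$, I would again copy everything except $R$ and define
\begin{align*}
R^I = \{(a_1,\dots,a_k) \mid \exists a\colon (\underline{a},a_1) \in R_1^{c,I'} \wedge \dots \wedge (\underline{a},a_k) \in R_k^{c,I'}\},
\end{align*}
which is manifestly FO over $I'$; the bijection on repairs is as before.

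The core correctness lemma is that a repair $r$ of $I$ satisfies $Q$ iff the corresponding repair $r'$ of $I'$ satisfies $Q'$. This is a direct homomorphism argument: a satisfying assignment $h$ for $Q$ on $r$ extends to $Q'$ on $r'$ by setting $h'(x) = \mathrm{id}_{(h(x_1),\dots,h(x_k))}$, since then $(h'(x), h'(x_i)) = (\mathrm{id}_t, a_i) \in R_i^c$ for every $i$; conversely, any satisfying assignment $h'$ for $Q'$ on $r'$ yields, via the value $a = h'(x)$, a witness that $(h'(x_1),\dots,h'(x_k))$ is in $R^I$ (in the forward direction because of how $\mathrm{id}_t$ encodes $t$, and in the inverse direction by the definition of $R^I$ above), so the restriction $h = h'|_{\mathrm{Vars}(Q)}$ satisfies $Q$ on $r$.

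The principal obstacle I expect is justifying that the forward reduction is truly FO-expressible, since introducing fresh identifiers $\mathrm{id}_t$ is not part of the pure FO vocabulary over the active domain. The standard remedy I would adopt is to let $\mathrm{id}_t$ simply be the tuple $t$ itself under a fixed tuple-encoding, so that the construction of $R_i^c$ becomes the FO formula $R_i^c(\underline{\langle x_1,\dots,x_k\rangle},\, x_i) \equiv R(\underline{x_1,\dots,x_k})$. Combined with \autoref{prop:unary} and the earlier reduction steps in this section (elimination of duplicate variables and constants), this yields the desired chain of FO-expressible reductions and completes the proof of \autoref{thm:reduction}.
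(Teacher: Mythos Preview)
Your proposal is correct and follows essentially the same approach as the paper: the forward reduction encodes each tuple $t=(a_1,\dots,a_k)\in R^I$ by using the tuple itself as the fresh key value and populating $R_i^c(\underline{(a_1,\dots,a_k)},a_i)$, while the inverse reduction recovers $R$ as the join of $R_1^c,\dots,R_k^c$ on $x$ followed by projection. You in fact supply more detail than the paper does, explicitly spelling out the repair bijection and the homomorphism-extension argument that the paper summarizes as ``the two instances are equivalent w.r.t.\ the repairs they admit.''
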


\begin{proof}
To reduce \cqa{$Q'$} to \cqa{$Q$}, we simply compute $R(\underline{x_1, \dots, x_k})$ as the natural join of the relations $R_1, \dots, R_k$ on the common variable $x$, where we have projected out the joining variable $x$. For the inverse reduction, we populate $R_1, \dots, R_k$ by introducing, for every tuple $R(\underline{a_1, \dots, a_k})$, $k$ new tuples $R_1(\underline{(a_1, \dots, a_k)}, a_1), \dots, R_k(\underline{(a_1, \dots, a_k)}, a_k)$. It is easy to see that every $R_i$ is a consistent relations where the variable $x$ is the primary key. Additionally, the two instances are equivalent w.r.t the repairs they admit.
\end{proof}

It now remains to deal with the case of relations that have arity $\geq 3$ and additionally have a single variable as primary key. For this, we need the following lemma.

\begin{proposition}
\label{prop:arity3}
Let $Q$ be a connected CQ containing $R(\ux, y_1, \dots, y_k)$. Denote by $Q^s$ the query obtained by replacing $R$ with $R'(\ux,y), S_1^c(\uy, y_1), \dots, S_k^c(\uy, y_k)$, where $y$ is a new variable. Then, $\cqa{Q} \stackrel{FO}{\sim} \cqa{Q'}$.
\end{proposition}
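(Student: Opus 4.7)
The plan is to give FO-expressible reductions in both directions between $\cqa{Q}$ and $\cqa{Q^s}$, exploiting the idea that the new variable $y$ in $Q^s$ plays the role of a ``tuple identifier'' for the non-key attributes of $R$, with the consistent relations $S_i^c$ serving as decoders that map each identifier to its $i$-th coordinate.

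For the reduction from $\cqa{Q}$ to $\cqa{Q^s}$, given an instance $I$ for $Q$, I would construct $I^s$ as follows. For each tuple $t = R(\underline{a}, b_1, \ldots, b_k)$ in $I$, introduce a fresh tuple-valued identifier $\iota_t = [a, b_1, \ldots, b_k]$, insert $(\underline{a}, \iota_t)$ into ${R'}^{I^s}$, and insert $(\underline{\iota_t}, b_i)$ into each $(S_i^c)^{I^s}$; all remaining relations of $Q$ are inherited from $I$ unchanged. Because each $\iota_t$ has a unique image under every $S_i^c$ by construction, the $c$-annotations are justified. The map $R(\underline{a}, b_1, \ldots, b_k) \mapsto R'(\underline{a}, \iota_t)$ is a bijection between repairs of $I$ and repairs of $I^s$ (the $S_i^c$, being consistent, are fully contained in every repair), and under this bijection homomorphisms witnessing $Q$ and $Q^s$ correspond by folding/unfolding $\iota_t \leftrightarrow (b_1, \ldots, b_k)$. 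Hence $I \vDash Q$ iff $I^s \vDash Q^s$.

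For the reverse reduction, given an instance $I^s$ for $Q^s$, I would first apply an FO purification step that removes every $R'(\underline{a}, b)$ for which some $S_i^c(\underline{b}, \cdot)$ is undefined in $I^s$ --- such tuples are dangling relative to $Q^s$ and can be dropped without affecting certainty, by the argument used in the purification lemma of \autoref{sec:prelim}. On the purified instance I would define $I$ by the natural join decoding $R^I = \pi_{x, y_1, \ldots, y_k}(R' \bowtie_y S_1^c \bowtie_y \cdots \bowtie_y S_k^c)$ and copy all remaining relations of $Q^s$ unchanged. Since each $S_i^c$ is single-valued in $y$, every surviving $R'(\underline{a}, b)$ decodes to a unique $R$-tuple, so the $R'$-key-groups of the purified $I^s$ map onto the $R$-key-groups of $I$ (possibly collapsing identifiers whose decodings coincide). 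Any repair $r^s$ of the purified $I^s$ then projects to a repair $r$ of $I$ with the same Boolean value of the query, and conversely any repair $r$ of $I$ lifts to some repair $r^s$ by choosing, in each $R'$-key-group $R'(\underline{a}, -)$, any $b$ whose decoding equals the $R$-tuple selected for key $a$. Certainty therefore transfers in both directions.

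The only nontrivial point is checking that the repair correspondence in the backward direction preserves the truth value of the query on matched repairs; this reduces to the observation that any homomorphism $Q^s \to r^s$ projects through the decoders $S_i^c$ to a homomorphism $Q \to r$ and vice versa, which is immediate because $Q$ and $Q^s$ agree on all atoms outside the replaced $R$. Both transformations are first-order: the forward construction uses the standard tuple-valued-constant convention (the identifiers live in $D^{k+1}$), while the backward construction is a literal FO deletion followed by an FO natural join and projection.
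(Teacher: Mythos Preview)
Your approach matches the paper's: the forward reduction encodes each $R$-tuple by a fresh identifier and decodes via the consistent $S_i^c$, and the backward reduction recovers $R$ as the natural join $\pi_{x,y_1,\ldots,y_k}(R' \bowtie S_1^c \bowtie \cdots \bowtie S_k^c)$. The paper uses $(b_1,\ldots,b_k)$ rather than $[a,b_1,\ldots,b_k]$ as the identifier, but either choice works.

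There is one genuine slip in your backward direction. You write that purification ``removes every $R'(\underline{a},b)$ for which some $S_i^c(\underline{b},\cdot)$ is undefined,'' and justify this by the purification lemma. But that lemma does not license removing a single dangling tuple; it removes the \emph{entire key-group} $R'(\underline{a},-)$ whenever it contains a dangling tuple. Deleting only the dangling tuple does not preserve certainty: with $k=1$, if $I^s$ has $R'=\{(\underline{a},b_1),(\underline{a},b_2)\}$ and $S_1^c=\{(\underline{b_1},c)\}$, then $I^s \not\vDash Q^s$ (the repair choosing $b_2$ falsifies $Q^s$), yet after deleting only $(a,b_2)$ the instance becomes certain. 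The paper handles this by discarding the whole key-group (phrased as a ``without loss of generality'' step inside the argument rather than as an explicit preprocessing of $I^s$); with that correction in place, the rest of your repair-correspondence argument goes through exactly as you wrote it.
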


\begin{proof}
For the one direction, assume we have query $Q$, along with a database instance $I$. We transform $I$ to an instance $I^s$ for query $Q^s$ as follows. For a tuple $R(\underline{a}, b_1, \dots, b_k)$, we introduce in $I^s$ the tuple $R'(\underline{a}, (b_1, \dots, b_k))$ and also, for $i=1, \dots, k$ the tuples $S_i(\underline{(b_1, \dots, b_k)}, b_i)$. Observe that our construction guarantees that $S_i$ are consistent relations. It suffices to show that $I \vDash Q$ iff $I^s \vDash Q^s$. Notice that there is a one-to-one correspondence between repairs of $I, I^s$. Indeed, if some repair $r$ of $I$ chooses $R(\underline{a}, b_1, \dots, b_k)$, the corresponding repair $r^s$ of $I^s$ will choose $R(\underline{a}, (b_1, \dots, b_k))$ and vice versa. Now, observe that if $Q(r)$ evaluates to true, so will $Q(r^s)$ and vice versa.

For the inverse direction, assume $Q^s$ and an instance $I^s$. We transform $I^s$ to an instance $I$ of $Q$ by constructing $R(x, y_1, \dots, y_k) = R'(x,y), S_1(y, y_1), \dots, S_k(y, y_k)$, \ie in order to construct $R$, we join all relations on $y$ and then project out $y$. We will show that $I \vDash Q$ iff $I^s \vDash Q^s$. First, assume  that $I^s \vDash Q^s$; we will show that $I \vDash Q$. Indeed, consider a repair $r$ of $I$ and construct a repair $r^s$ that makes the same choices as $r$ for all common relations between $Q,Q^s$ and, if $R(\underline{a}, b_1, \dots, b_k) \in r$, then $R'(\underline{a},b) \in r^s$ for some $b$ such that $S_i(\underline{b}, b_i) \in I^s$ for every $i=1, \dots, k$ (note that by our construction such a $b$ always exists). Since $Q(r^s)$ is true, $Q(r)$ will be true as well. 

For the inverse, assume $I \vDash Q$ and consider a repair $r^s$ of $I^s$. Notice first that, for a key group $R'(\underline{a}, -)$, if $R'(\underline{a},b)$ and $\exists i: S_i(\underline{b},-) \notin I^s$, $a$ will never contribute towards an answer for $Q^s$, hence we can throw away w.l.o.g. such a key-group from consideration. 
Let $R'(\underline{a},-)$ be any key group in $I^s$; equivalently, $R(\underline{a},-)$ is a key group in $I$. Now, let $R'(\underline{a},b)$ be the unique tuple in $r^s$ from this key-group. As we have argued, there exist tuples $S_1(\underline{b},b_1), \dots, S_k(\underline{b},b_k)$ in $I^s$ (and $r^s$) and these tuples are unique.  By our construction, the instance $I$ contains the tuple $R(\underline{a},b_1, \dots ,b_k)$: this is the tuple that we include in $r$. Since $Q(r)$ evaluates to true, so must $Q^s(r^s)$. 
\end{proof}

The combination of the above propositions proves \autoref{thm:reduction}.

\begin{lemma}
\label{lem:frugal}
Let $a$ be value that does not appear in $\mathcal{M}_Q(I)$, and let $I^{-a} \subseteq I$  
s.t. every key-group that contains $a$ has been removed. Then, $I \vDash Q$ iff $I^{-a} \vDash Q$.
\end{lemma}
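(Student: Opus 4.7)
The plan is to prove both implications by relating frugal repairs of $I$ to repairs of $I^{-a}$, reducing $I \vDash Q$ to a property of frugal repairs via \autoref{prop:frugal_equivalence}. The key structural fact I will use is that every removed key-group has the form $R(\underline{a},-)$, so any tuple drawn from it contributes the value $a$ in the key-coordinate of the atom it matches; consequently, any answer that uses such a tuple carries $a$ in at least one coordinate.

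For the easy direction $I^{-a} \vDash Q \Rightarrow I \vDash Q$, I would take an arbitrary frugal repair $r$ of $I$ and observe that $r \cap I^{-a}$ picks exactly one tuple from every key-group of $I^{-a}$ (since the key-groups of $I^{-a}$ are a subset of those of $I$), and is therefore a repair of $I^{-a}$. By assumption it satisfies $Q$, and so does its superset $r$. Since this holds for every frugal repair of $I$, \autoref{prop:frugal_equivalence} yields $I \vDash Q$.

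For the converse $I \vDash Q \Rightarrow I^{-a} \vDash Q$, I would argue by contradiction. Starting from a hypothetical repair $r'$ of $I^{-a}$ with $Q(r')$ false, I would extend $r'$ to a repair $r$ of $I$ by picking one arbitrary tuple from each removed key-group; this $r$ is still a repair of $I$ because consistency and maximality are preserved. Since $I \vDash Q$ we have $Q^f(r) \neq \emptyset$, yet every $t \in Q^f(r)$ must use at least one of the newly added tuples (otherwise $t \in Q^f(r')$, contradicting $Q(r')$ false), so every such $t$ contains the value $a$ somewhere. Next, I would shrink $r$ to a repair $r^f$ of $I$ whose answer set is set-inclusion minimal inside $\{Q^f(r'') \mid r'' \text{ is a repair of } I,\ Q^f(r'') \subseteq Q^f(r)\}$; such a minimizer is automatically frugal, so $Q^f(r^f) \in \mM_Q(I)$ and by hypothesis contains no occurrence of $a$. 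But $Q^f(r^f) \subseteq Q^f(r)$ and every tuple in $Q^f(r)$ contains $a$, forcing $Q^f(r^f) = \emptyset$. This produces a frugal repair of $I$ that falsifies $Q$, contradicting $I \vDash Q$.

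The main obstacle I anticipate is the second direction, specifically the ``extend then shrink'' step: the direct extension $r$ of $r'$ to $I$ typically will satisfy $Q$, so nothing can be concluded by looking at $r$ alone. The device of minimizing inside $Q^f(r)$ to reach a frugal repair is what activates the hypothesis $a \notin \mM_Q(I)$, producing a squeeze between ``every answer in $Q^f(r^f)$ contains $a$'' (inherited from $Q^f(r)$) and ``no answer in $\mM_Q(I)$ contains $a$'' (by hypothesis) whose only resolution is $Q^f(r^f)=\emptyset$, delivering the required contradiction.
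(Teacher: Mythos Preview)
The paper states \autoref{lem:frugal} without proof, so there is no argument to compare against; your extend-then-minimize strategy via \autoref{prop:frugal_equivalence} is a clean way to prove it and is essentially correct.

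One point needs care. You assert that ``every removed key-group has the form $R(\underline{a},-)$,'' so that any tuple drawn from it carries $a$ in the key coordinate. The phrase ``key-group that contains $a$'' is more naturally read as ``key-group some tuple of which mentions $a$'' (and this is how the lemma is used in the proof of \autoref{prop:unary}, where the variable $x$ need not be in key position). Under that reading a removed key-group may be $R(\underline{b},-)$ with $b\neq a$, and your extension step (``pick one \emph{arbitrary} tuple from each removed key-group'') can add a tuple not mentioning $a$ at all; then the claim ``every $t\in Q^f(r)$ contains $a$'' no longer follows, and the squeeze in your last paragraph breaks.

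The repair is immediate and preserves your whole architecture: when extending $r'$ to $r$, choose from each removed key-group a tuple that \emph{does} mention $a$ (such a tuple exists by definition). Then every newly added tuple mentions $a$, so every $t\in Q^f(r)\setminus Q^f(r')$ mentions $a$; since $Q^f(r')=\emptyset$, every $t\in Q^f(r)$ mentions $a$, and the minimizing step to a frugal $r^f$ with $Q^f(r^f)\subseteq Q^f(r)$ forces $Q^f(r^f)=\emptyset$, yielding the desired contradiction. (A minor aside: your easy direction does not need frugality at all---for any repair $r$ of $I$, $r\cap I^{-a}$ is already a repair of $I^{-a}$---but invoking \autoref{prop:frugal_equivalence} is harmless.)
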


\end{document}